\documentclass{article}
\PassOptionsToPackage{
	authoryear,
		sort&compress,
	comma
}{natbib}
\usepackage[
  final,main
]{neurips_2026}

\usepackage{custom}
\usepackage{subfiles}
\usepackage{minitoc}

\renewcommand{\paragraph}[1]{\textbf{#1.}}

\providecommand{\main}{.} 
\graphicspath{{\main/fig/}{fig/}}

\title{Intervention by Polling Strategic Agents}
\author{%
	Chenyu Zhang \\
  \textnormal{MIT}\\
	\texttt{zcysxy@mit.edu}
	\And Rohit Parasnis\thanks{{Work done while at MIT}} \\
  \textnormal{IIT Bombay}\\
	{\texttt{rohit@ee.iitb.ac.in}}
	\And Saurabh Amin \\
  \textnormal{MIT}\\
	\texttt{amins@mit.edu}
}
\date{}

\begin{document}
\renewcommand{\style}{\textstyle}
\maketitle

\begin{abstract}
	A planner in a network of strategic agents faces three entangled challenges: the optimum depends on agents' private information, queried agents may misreport to steer the outcome, and exact computation does not scale.
	We study these challenges in multi-activity network games with heterogeneous private technologies, in which the planner sets non-discriminatory prices.
	We show that the optimal prices admit a centrality-based decomposition of the welfare kernel: each agent's contribution scales with its squared centrality in a network reweighted by agents' preferences across activities.
	This decomposition motivates \poll, a polling algorithm in which the planner samples one agent per round, walks briefly through the agent's neighborhood, and updates the price from a local report.
	From the same decomposition flow three forms of efficiency: computationally, \poll uses significantly fewer operations than exact computation and other distributed methods, requiring up to three orders of magnitude less communication on a real-world network with over 300,000 agents; statistically, its query complexity scales with topology and preference heterogeneity rather than explicitly with population size; and economically, it converges to welfare-maximizing prices while admitting behavior-specific implementations that induce truthful reports and detect adversarial deviations.
\end{abstract}

\section{Introduction}\label{sec:intro}

Consider a government agency that wants to set subsidies for firms in an R\&{D} sector \citep{schot2018Threeframes}, a policymaker that must design electricity tariffs across a network of prosumers \citep{simshauser2016Distributionnetwork}, or a platform that rewards users for different types of engagement \citep{haghtalab2025platforms}. 
In each case a central planner chooses a policy---prices, subsidies, or rewards---to maximize a system-wide objective such as total welfare, which represents the collective well-being of the agents affected by the policy. 
The agents---firms, prosumers, or users---are \textbf{networked}, e.g., a firm's return on R\&{D} depends on the innovation of its collaborators \citep{cohen1989Innovationlearning}.
The agents are also \textbf{strategic}: each responds to the policy in its own interest, and each holds private information, such as its cost of investment and its technological capabilities, that the planner cannot directly observe.
Networked and strategic agents appear across distributed systems, including social learning \citep{acemoglu2011BayesianLearning}, multi-agent control \citep{zhou2024graphon}, and mechanism design \citep{nisan1999Algorithmicmechanism}.
In these settings, three challenges become tightly entangled:

\begin{enumerate}
	\item \textbf{Learning.} The optimal policy depends on every agent's private information and network position, which means that the planner can only evaluate it by querying agents.
	\item \textbf{Strategy.} A queried agent benefits from a policy that favors it and may therefore distort its response to steer the solution toward its own preference.
	\item \textbf{Computation.} As the number of agents grows, bottlenecks arise in time, space, communication, and adaptivity.
\end{enumerate}

The network coupling locks these challenges together: an agent's manipulation strategy depends on both its private information and its network position, and the planner cannot disentangle the two without simultaneously learning from reports and disciplining them. 
To tackle such challenges across multiple applications, we ask a fundamental question:
\begin{myquote}
\textit{\textls[-18]{How can the planner efficiently learn to maximize social welfare by polling strategic agents in a network?}}
\end{myquote}

We study this question using a \textit{multi-activity network game} model, in which strategic agents allocate \textit{effort} across several \textit{activities} and the planner sets \textit{prices} for those activities.
In the R\&D example, an activity corresponds to a research direction, an firm's effort represents its investment level, and the prices are government subsidies.
We extend this model to allow agents to have heterogeneous utility structures, interpreted in the R\&{D} example as private technologies, that determine both the cost of each activity and the complementarities among activities. Thus, two firms in the same network may face entirely different technological tradeoffs. This heterogeneity makes the planner’s problem genuinely network-dependent and distinguishes our model from prior multi-activity games with a common utility structure \citep{chen2018Multipleactivities,kor2023Multiactivityinfluence,parasnis2025incentive}. 

Under this model, we show that the answer to our research question lies in agents' \emph{centralities} in a network reweighted by their \emph{preferences} across activities. This characterization motivates a \emph{polling} process: the planner iteratively samples one agent, walks briefly through its neighborhood, aggregates a local preference for the proposed price change, and revises the price accordingly. Cast in optimization terms, the planner runs a stochastic iterative scheme whose gradient oracle is a strategic agent whose report may depend on its private type, its network position, and its incentive to influence the outcome. The design problem is therefore what to share, what to ask, and how to revise---given that each response may be strategically chosen.

\textbf{Main results.} Our characterization of the planner's optimal intervention (\cref{sec:voting}) gives rise to a distributed learning algorithm, \poll (\cref{sec:poll}), with three types of efficiency:
\begin{itemize}
	\item \textbf{Computational efficiency} (\cref{sec:eff-comp}). \poll uses significantly fewer \textbf{operations} than exact computation or other distributed methods.
	      Lifting any memory burden from the agents, \poll also has a much lower \textbf{space} and \textbf{communication} complexity, since agents need not store local variables or communicate with all neighbors at each iteration to update them.
	      Additionally, \poll has an efficient \textbf{architecture}: it admits fully decentralized, asynchronous, and online implementation, which promises scalability, fault tolerance, and environmental adaptability.
	      On the complete DBLP coauthorship network with $317{,}080$ agents, \poll reaches a near-optimal welfare with $316$--$1572\times$ fewer pairwise exchanges than other distributed methods.
	\item \textbf{Learning efficiency} (\cref{sec:eff-learn}). We show that \emph{homogeneity} in network topology and in activity preferences sharply reduces \poll's \textbf{query} complexity, as each response is more representative of the overall system.
	      Our characterization further yields various natural \textbf{variance reduction} properties that improve learning efficiency in realistic networks.
	      Without modification, \poll is also robust to endogenous and exogenous \textbf{noise}, demonstrating its statistical efficiency.
	\item \textbf{Economic efficiency} (\cref{sec:eff-econ}). \poll converges to the intervention that \textbf{maximizes social welfare} under the planner's constraints.
		Moreover, \poll admits \textbf{strategy-proof} and \textbf{Byzantine-robust} implementations: it induces truthful reports and detects adversarial deviations (see \cref{tbl:sp}).
\end{itemize}

\begin{table}[th]
	\vspace{-1em}
	\caption{Comparison of design principles. Please refer to \cref{sec:eff-comp} for a detailed description of the baseline methods.
		Time is the total operation complexity, equal to the communication complexity up to a $\operatorname{poly}(m)$ factor.
		Space aggregates storage across all agents and the planner.
		All rates are asymptotic as the number of agents $n$, number of activities $m$, and inverse accuracy $\epsilon^{-1}$ grow large.
		A method is adaptive if it accommodates new agents or dynamic game structures without modifying the algorithm or restarting learning, and it is 
		truthful if no agent has any incentive to misreport.
		Robustness is against endogenous noise, exogenous shocks, and adversarial deviations.
	}
	\label{tbl:comp}
	\begin{threeparttable}
		\centering
		\resizebox{\textwidth}{!}{
			\begin{NiceTabular}{llllllll}
				\toprule
				Method
				 & Time
				 & Space
				 & Memory-free
				 & Async.
				 & Adaptive
				 & Truthful
				 & Robust                                                                   \\
				\midrule
				\rowcolor{mygreen!30}\poll\!\!\tnote{\textasteriskcentered}
				 & $m^{2}\ubar{\lambda}^{-2}(1-\delta \sigma_{\max}(DG))^{-2}\epsilon^{-2}$
				 & $m$
				 & \CheckmarkBold
				 & \CheckmarkBold
				 & \CheckmarkBold
				 & \CheckmarkBold
				 & \CheckmarkBold                                                           \\
				Exact
				 & $n^{3}+n^{2}m^{2}+nm^{3}$
				 & NA
				 & NA
				 & NA
				 & NA
				 & NA
				 & \XSolidBrush                                                             \\
				Consensus\tnote{\textdagger}
				 & $nm^{2} \ln \epsilon^{-1}$
				 & $nm^{2}$
				 & \XSolidBrush
				 & \XSolidBrush
				 & \XSolidBrush
				 & \checkwrong
				 & \checkwrong                                                              \\
				Federated\tnote{\textdagger}
				 & $nm^{2} \ln \epsilon^{-1}$
				 & $nm^{2}$
				 & \XSolidBrush
				 & \XSolidBrush
				 & \XSolidBrush
				 & \checkwrong
				 & \checkwrong                                                              \\
				Gossip\tnote{\textdagger}
				 & $nm^{2}(1-\lambda_{2}(D^{-1} G))^{-\frac{1}{2}}\ln \epsilon^{-1}$
				 & $nm^{2}$
				 & \XSolidBrush
				 & \CheckmarkBold
				 & \CheckmarkBold
				 & \checkwrong
				 & \checkwrong                                                              \\
				Random-walk\tnote{\textdagger}
				 & $nm^{2} (1-\lambda_{2}(D^{-1} G))^{-2}  \ln \epsilon^{-1}$
				 & $nm^{2}$
				 & \XSolidBrush
				 & \CheckmarkBold
				 & \CheckmarkBold
				 & \checkwrong
				 & \checkwrong                                                              \\
				\bottomrule
			\end{NiceTabular}
		}
		\begin{tablenotes} \footnotesize
			\item[\textasteriskcentered] See \cref{sec:eff-learn} for parameter definitions.
			\item[\textdagger] Standard formulations of these frameworks assume honest agents and therefore do not retain truthfulness or robustness in the strategic setting we study, while robust variants of each have been developed separately and lie outside this comparison.
		\end{tablenotes}
	\end{threeparttable}
	\vspace{-1em}
\end{table}

These three efficiencies flow from a single source: the \textit{centrality-based} characterization, which inspires the polling process, aligns the price with the polled agent's interest, and exposes an ambiguity agents cannot resolve---information the planner withholds to encourage truth.
The underlying principle extends beyond network games to other settings where the above challenges are entangled.
For example, in \textit{semi-decentralized federated learning},
a central server (planner) must learn a reward to improve the clients' (agents) local models, and the optimal value of this reward depends on hidden, private data geometries; each client is strategic and may misrepresent its data profile to attract a larger reward; and the server's computational constraints can impede constant synchronization with the clients (see \cref{app:fl-motivation} for a detailed discussion).

\subsection{Related work}

\paragraph{Intervention in network games}
\citet{ballester2006WhosWho} connected Katz--Bonacich centrality to Nash equilibria in network games.
\citet{parise2021AnalysisInterventions} surveyed network interventions via variational inequalities.
\citet{galeotti2020TargetingInterventions} characterized optimal price intervention under a quadratic budget, with multi-activity extensions in \citet{chen2018Multipleactivities,kor2023Multiactivityinfluence,parasnis2025incentive}.
These works consider a homogeneous cost structure across agents, while we study how heterogeneous technologies compound network interactions.
\citet{dasaratha2024IncentiveDesign} studied optimal contracts with network spillovers and agent-specific instruments.
All the above assume that the planner knows either the network topology or the agents' types or both, while we consider a learning setup where the planner is unaware of the network topology and agents hold private information.

\paragraph{Strategic distributed learning}
Distributed learning with strategic agents has been studied in collaborative and federated learning \citep{dorner2023Incentivizinghonesty,el-mhamdi2021CollaborativeLearning,procaccia2026IncentivesFederated,donahue2021Modelsharinggames,chakarov2025IncentivizingTruthful,karimireddy2022Mechanismsthat,huang2023Evaluatingincentivizing,haghtalab2025platforms,blum2021Oneone},
algorithmic mechanism design \citep{parkes2003MDPBasedApproach,fallah2024Fairallocation,feigenbaum2007DistributedAlgorithmic}, control and optimization \citep{tanaka2017FaithfulImplementations,zhong2026truthfulmechanism,zhou2017PrivateTruthful,zhang2021MechanismDesign}, and performative prediction \citep{perdomo2020Performativeprediction,mendler-dunner2020Stochasticoptimization,narang2023Multiplayerperformative,piliouras2023MultiagentPerformative},
under a range of behavioral assumptions and strategic variables. Our design draws on ambiguity aversion \citep{gilboa1989Maxminexpected}, incentive design \citep{basar1984AffineIncentive,ho1981Informationstructure}, and Byzantine-robust distributed learning \citep{blanchard2017Machinelearning,karimireddy2023ByzantineRobustLearning,yin2018ByzantineRobustDistributed}.

\section{Problem setup}\label{sec:setup}

Consider a network of $n$ interacting agents ($\X\coloneqq \{1,\dots,n\}$) that engage in $m$ activities ($\A\coloneqq \{1,\dots,m\}$).
Given a price vector $p\in\R_{\ge 0}^m$ and an effort profile $s\in\R^{nm}$, the utility of agent $i$ is
\[\label{eq:util}
	u_{i}(s,p) = \left< s_{i},p \right> - \frac{1}{2} \left< s_{i},W_{i} s_{i}\right> + \sum_{j\ne i} \delta G_{ij} \left< s_{i}, W_{i}s_{j} \right>,
\]
where the first term is the \textit{revenue} from a \textit{non-discriminatory} (public, same for all agents) price $p$ and agent $i$'s effort levels $s_i\in\R^{m}$ across the activities;
the second term is the \textit{cost}, with different agents having distinct cost structures captured by a positive definite matrix $W_{i}\in\R^{m\times m}$, referred to as their private \textit{technologies}, which governs how the agents' activities complement each other and reflects proprietary characteristics that the agent will not disclose (e.g., R\&D capabilities \citep{cohen1990Absorptivecapacity});
the third term is the \textit{network spillover}, whose strength is quantified by $\delta>0$; and $G\in \R_{\ge 0}^{n\times n}$ is the network's adjacency matrix with $G_{ij}>0$ if and only if agent $j$ influences agent $i$,
so that an agent's ability to absorb the spillovers depends on both its technology $W_{i}$ and its network position.
This setup generalizes multi-activity network games \citep{chen2018Multipleactivities,kor2023Multiactivityinfluence} by allowing heterogeneous technologies, which enriches the model and yields new insights for optimal intervention design.

\begin{remark}
	In this work, we focus on \textit{strategic complementarity}:
	one agent’s effort increases the return to its neighbors’ effort.
	This spillover effect is first studied in general settings by \citet{bulow1985multimarket,milgrom1990rationalizability} and later in network games by seminal works such as \citet{ballester2006WhosWho,galeotti2010network,jackson2015games,bramoulle2014StrategicInteraction} for single-activity games and \citet{chen2018Multipleactivities} for multi-activity games.
	Flipping the sign of $\delta$ yields \textit{strategic substitutability}. Mixed-sign spillovers give a more general model.
\end{remark}

\begin{assumption}[Small network effect]\label{assump:net}
	The effect of network interactions is sufficiently small for $\delta\sigma_{\max}(DG) < 1$ to hold, where $D$ is the diagonal degree matrix of the network. \end{assumption}

Assumption \ref{assump:net} is analogous to standard assumptions in the network games literature \citep{ballester2006WhosWho,bramoulle2014StrategicInteraction,parise2019variationalinequality,galeotti2020TargetingInterventions} that ensure a unique Nash equilibrium $s^*(p)$  while guaranteeing that best-response dynamics are oscillation-free. Substituting this equilibrium into the social welfare $\frac{1}{n}\sum_i u_i$ yields the planner's intervention problem
\[\label{eq:planner}
	\max_{p\in\R_{\ge 0}^m} \; \frac{1}{2}\langle p, \Phi p \rangle - \frac{\mu}{2}\langle (p - p_0), \Psi(p - p_0) \rangle,
\]
where $\Phi\in\R^{m\times m}$ is the \emph{social welfare kernel}, defined as
\[\label{eq:sw-kernel}
	\Phi \coloneqq \frac{1}{n}(\mathbf{1}_n^T\otimes I_m) W^{-T}((I_n - \delta G^T)^{-1}\otimes I_m)\, W\, ((I_n - \delta G)^{-1}\otimes I_m) W^{-1}(\mathbf{1}_n\otimes I_m),
\]
where $W \coloneqq\operatorname{blkdiag}(W_{i})\in\R^{mn \times mn}$ is the block diagonal matrix with $W_{i}$ as the $i$-th diagonal block, and $\otimes$ is the Kronecker product.
The second term is the planner's cost of intervention, with its strength quantified by $\mu>0$, $p_0$ being the initial price, and $\Psi\in\R^{m\times m}$ encoding budget constraints, sustainability considerations, or prior beliefs (see \cref{sec:intervention} for detailed modeling motivation).

\begin{assumption}[Sufficient regularization]\label{assump:reg}
	We have $\mu\Psi \succ \Phi$ so that the objective function \cref{eq:planner} is concave and the optimal intervention both exists and is unique.
\end{assumption}

Under \cref{assump:reg}, the first-order condition gives the optimal intervention
\[\label{eq:opt-p}
	p_* = (I - \mu^{-1}\Psi^{-1}\Phi)^{-1} p_0.
\]

\section{A centrality-based interpretation}\label{sec:voting}

The optimal intervention is shaped by the social welfare kernel $\Phi$, which aggregates agents' heterogeneous technologies through the network (see \cref{eq:sw-kernel}) and thus is unknown to the planner.
This section characterizes the structure of $\Phi$, which motivates the design of our learning algorithm and underpins its efficiency guarantees. Related calculations are deferred to \cref{sec:calc}.

\begin{proposition}\label{prop:cen-planner}
	Define \(\tilde{W} _{i} = \Psi^{1/2}W_{i}\Psi^{1/2}\) and \(\tilde{\Phi} \coloneqq \Psi ^{-1 /2}\Phi \Psi ^{-1 /2}\).
	We have
	\[\label{eq:cen-planner}
		\tilde{\Phi} = \frac{1}{n}\sum_{i \in \X} \tilde{\cen}_{i}^T \tilde{W}_{i}^{-1} \tilde{\cen}_{i}
	\]
	where \(\widetilde{\cen}_{i}\in\R^{m\times m}\) is the \(i\)-th block\footnote{$\tilde{\cen}_{i}\coloneqq (e_{i}\otimes I_{m})^{T}\tilde{\cen}$ where $e_{i}\in\R^{n}$ is the $i$-th standard basis vector.\label{fn:block}} of  \emph{Katz--Bonacich centralities} of the modified network \(\tilde{G}\in\mathbb R^{nm\times nm}\):
	\[
		\tilde{\cen}= (I - \delta\tilde{G})^{-1} (\mathbf{1}_{n}\otimes I_{m})\in \R^{(n\times 1)(m \times m)}, \quad \text{where}\quad
		\tilde{G}_{ij} = G_{ij} \tilde{W} _{i}\tilde{W} _{j}^{-1}\in\R^{m\times m}
	\]
	is the $ij$-th block of \(\tilde{G}\in\R^{(n\times n)(m\times m)}\), i.e., the modified network has matrix-valued edge weights.
\end{proposition}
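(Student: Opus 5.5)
Write $\mathbf{W}=W$ and $\mathbf{\Psi}=I_n\otimes\Psi$ for the block-diagonal matrices, and $\tilde{\mathbf{W}}=\mathbf{\Psi}^{1/2}W\mathbf{\Psi}^{1/2}=\operatorname{blkdiag}(\tilde W_i)$. The approach is purely algebraic: rewrite \cref{eq:sw-kernel} as a Gram-type product $\frac1n B^T W B$ with $B=((I_n-\delta G)^{-1}\otimes I_m)W^{-1}(\mathbf 1_n\otimes I_m)$. Then identify $W^{-1}$ applied to the centrality vector of the reweighted network, after conjugation by $\Psi^{1/2}$, with $B$.

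\textbf{Step 1: conjugate $(I_n-\delta G)^{-1}\otimes I_m$ by $W$.} Since $\tilde G$ has blocks $G_{ij}\tilde W_i\tilde W_j^{-1}$, we have
\[
\tilde G=\tilde{\mathbf W}(G\otimes I_m)\tilde{\mathbf W}^{-1}.
\]
Hence
\[
(I-\delta\tilde G)^{-1}=\tilde{\mathbf W}\big((I_n-\delta G)^{-1}\otimes I_m\big)\tilde{\mathbf W}^{-1},
\]
and the inverse exists because the similarity preserves the spectrum and \cref{assump:net} gives $\rho(\delta G)<1$. This yields
\[
\tilde{\cen}=\tilde{\mathbf W}\big((I_n-\delta G)^{-1}\otimes I_m\big)\tilde{\mathbf W}^{-1}(\mathbf 1_n\otimes I_m).
\]

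\textbf{Step 2: express $\tilde{\mathbf W}^{-1}\tilde\cen$ through $B$.} We have $\tilde{\mathbf W}^{-1}=\mathbf\Psi^{-1/2}W^{-1}\mathbf\Psi^{-1/2}$. Because $\mathbf\Psi^{\pm1/2}$ commutes with $A\otimes I_m$ for any $A$, and $\mathbf\Psi^{-1/2}(\mathbf 1_n\otimes I_m)=(\mathbf 1_n\otimes I_m)\Psi^{-1/2}$, Step 1 gives
\[
\tilde{\mathbf W}^{-1}\tilde\cen=\big((I_n-\delta G)^{-1}\otimes I_m\big)\mathbf\Psi^{-1/2}W^{-1}(\mathbf 1_n\otimes I_m)\Psi^{-1/2}=\mathbf\Psi^{-1/2}B\,\Psi^{-1/2}.
\]
Equivalently, $B=\mathbf\Psi^{1/2}\tilde{\mathbf W}^{-1}\tilde\cen\,\Psi^{1/2}$.

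\textbf{Step 3: substitute into $\Phi$.} This is the step where care is needed. The factor $W^{-T}$ in \cref{eq:sw-kernel} must be read so that the left factor is exactly $B^T$. This requires $(\mathbf 1_n^T\otimes I_m)W^{-T}\big((I_n-\delta G^T)^{-1}\otimes I_m\big)=B^T$, which holds since $\big((I-\delta G)^{-1}\big)^T=(I-\delta G^T)^{-1}$. Therefore
\[
\Phi=\tfrac1n B^TWB=\tfrac1n\Psi^{1/2}\tilde\cen^T\tilde{\mathbf W}^{-T}\mathbf\Psi^{1/2}W\mathbf\Psi^{1/2}\tilde{\mathbf W}^{-1}\tilde\cen\,\Psi^{1/2}.
\]
The middle factor $\mathbf\Psi^{1/2}W\mathbf\Psi^{1/2}$ equals $\tilde{\mathbf W}$, so the product collapses to $\tilde{\mathbf W}^{-T}\tilde{\mathbf W}\tilde{\mathbf W}^{-1}$. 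If the $W_i$ are symmetric, as positive definite cost matrices are taken to be, this is $\tilde{\mathbf W}^{-1}$. If not, the expression reads $\tilde{\mathbf W}^{-T}\tilde{\mathbf W}\tilde{\mathbf W}^{-1}$, whose symmetric quadratic-form part matches the statement. I will note this convention explicitly.

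\textbf{Step 4: conclude.} Multiplying by $\Psi^{-1/2}$ on both sides gives
\[
\tilde\Phi=\tfrac1n\tilde\cen^T\tilde{\mathbf W}^{-1}\tilde\cen.
\]
Since $\tilde{\mathbf W}^{-1}$ is block diagonal, expanding over the blocks $\tilde\cen_i$ yields $\frac1n\sum_i\tilde\cen_i^T\tilde W_i^{-1}\tilde\cen_i$, which is \cref{eq:cen-planner}.

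The only genuine obstacles are the two bookkeeping points above: the commutation of $\mathbf\Psi^{1/2}$ with Kronecker factors in Step 2, and the symmetry convention in Step 3.
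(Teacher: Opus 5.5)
Your proof is correct and follows essentially the same route as the paper. Both arguments rest on the similarity $\tilde G=\tilde{\mathbf W}(G\otimes I_m)\tilde{\mathbf W}^{-1}$, which gives $(I-\delta\tilde G)^{-1}=\tilde{\mathbf W}\bigl((I_n-\delta G)^{-1}\otimes I_m\bigr)\tilde{\mathbf W}^{-1}$, together with the cancellation of the inserted $\Psi^{\pm1/2}$ factors. The difference is presentation: you work with the Gram form $\frac1n B^TWB$ at the block-Kronecker level, whereas the paper expands into the triple sum $\sum_{i,j,k}K_{ij}K_{ik}W_j^{-1}W_iW_k^{-1}$ with $K=(I-\delta G)^{-1}$, reused from the shared-eigenbasis computation. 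Your version is self-contained and keeps the factor $\frac1n$ that the paper's final display drops.
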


To elucidate \cref{prop:cen-planner}, we make the following simplification, only in this section: $\Psi=I_{m}$ (so that $\tilde{\Phi}=\Phi$ and $\tilde{W}_{i}=W_{i}$) and agents' cost structures share the same eigenbasis, i.e., there exist an orthogonal matrix $U\in \R^{m\times m}$ and diagonal matrices $\Lambda_i\in \R^{m\times m}$ such that $W_{i} = U \Lambda_{i} U^{T}$ holds for all $i\in\X$.
We show in \cref{apx:cen} that $\Phi$ and $\tilde{\cen}_{i}$ also share the eigenbasis $U$. Then, \cref{eq:cen-planner} reads
\(
\phi_{a} = \frac{1}{n}\sum_{i\in\X} \tilde{\cen}_{i,a}^{2} \cdot \lambda_{i,a}^{-1},   \text{ with }
\tilde{\cen}_{a} = (I - \delta\tilde{G}_{a})^{-1} \bm{1}_{n},  \text{ and }
\tilde{G}_{a,ij} = G_{ij} \lambda_{i,a}/\lambda_{j,a},  \forall a\in [m]
,\)
where $\phi_{a},\tilde{\cen}_{i,a},\lambda_{i,a}$ are the $a$-th eigenvalues of $\Phi,\tilde{\cen}_{i},W_{i}$, respectively; $\tilde{\cen}_{i,a}$ is also the $i$-th entry of $\tilde{\cen}_{a}\in\R^{n}$.
Each $a\in[m]\coloneqq \{ 1,\dots,m \}$ corresponds to a column of $U$, which specifies a linear combination of activities, referred to as an \emph{activity bundle}.
We also observe two comparative statics: the optimal intervention for activity bundle $a$ increases in $\phi_{a}$ and an agent's equilibrium effort in $a$ increases in $\lambda_{i,a}^{-1}$. We therefore refer to $\lambda_{i,a}^{-1}$ as agent $i$'s \emph{preference} for activity bundle $a$.

\cref{prop:cen-planner} now becomes clear: each activity bundle $a$ induces a network $\tilde{G}_{a}$ that reweighs the original network by the ratios of the interacting agents' preferences for the bundle. The
agents interact within this \textit{bundle-induced} network to form an \textit{aggregate preference}, weighted by their squared centralities in the bundle-induced network, which determines the optimal intervention.

Bundle-induced networks assign centralities in a seemingly counter-intuitive way. First, it empowers the \emph{influenced, not the influential}: an agent's preference carries more weight when the agent receives more influence from others (high susceptibility in $G$).
Instead of targeting specific agents, the welfare-maximizing planner targets activity bundles via a non-discriminatory price, and therefore listens to the most susceptible agents---those that \emph{absorb} the most influence. The bundle-induced network thus acts as an \emph{echo chamber} that amplifies the intervention's utility gains.
Second, it \emph{favors the unfavored}: an agent unwilling to engage in the bundle (low preference $\lambda_{i,a}^{-1}$) has greater say in the bundle-induced network. This is because agents absorb network spillovers through their own technologies (see \cref{eq:util}), so a high cost for a bundle implies proportionally higher compensation from neighbors' efforts---clear in the R\&D context, where a higher investment yields a higher return from an innovative community.
This centrality structure is contextualized in \cref{sec:exp}.

\subsection{Centrality elasticity}\label{sec:ce}

It is now clear that agents' preferences are only half the story, as the optimal intervention also depends on their positions in the network.
Consequently, without considering the network structure, one cannot ascertain whether a low-preference agent benefits more or less from intervention than a high-preference one.
The following proposition captures how the sensitivity of an agent's equilibrium utility to its preference is affected by network structure via \textit{centrality elasticity}.

\begin{proposition}\label{prop:ce}
	The equilibrium utility of an agent $i\in [n]$ at price $p$ is $u_{i}(p)\coloneqq u_{i}(s^*(p),p) = \frac{1}{2}\left< p, \Phi_{i}p \right>$, where the eigenvalues of $\Phi_i$ are given by $\phi_{i,a} = \tilde{\cen}_{i,a}^{2}\lambda_{i,a}^{-1}$. Thus,
	$u_{i}$ is monotonically increasing w.r.t.\ $\phi_{i,a}$ for all $a\in\A$.
	Furthermore, we have
	\begingroup\small\vspace{-0.5em}
	\[
		\ds\frac{\partial \ln \phi_{i,a}}{\partial \ln \lambda_{i,a}}
		= 2\underbrace{\frac{\partial \ln \tilde{\cen}_{i,a}}{\partial \ln \lambda_{i,a}}\vphantom{\frac{\tilde{K}_{a,ii}}{\sum_{i\in \X} \tilde{K}_{a,ij}}}}_{\text{centrality elasticity}} - 1
		= \underbrace{\frac{\sum_{j\in \X\setminus \{ i \}}\tilde{K}_{a,ij}}{\sum_{j\in \X} \tilde{K}_{a,ij}}}_{\text{others' contribution}} - \underbrace{\frac{\tilde{K}_{a,ii}}{\sum_{j\in \X} \tilde{K}_{a,ij}}}_{\text{self-contribution}}
		,\]
	\vspace{-0.5em}\endgroup
	where $\tilde{K}_{a}\coloneqq (I - \delta \tilde{G}_{a})^{-1}$ is the   centrality matrix of the interaction network induced by $a$.
\end{proposition}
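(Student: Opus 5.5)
We plan to compute the equilibrium first and then reduce everything to scalar computations in each activity bundle, using the simplification adopted in this section ($\Psi=I_m$ and $W_i=U\Lambda_iU^T$). Agent $i$'s first-order condition from \cref{eq:util} is $p - W_i s_i + \delta\sum_{j}G_{ij}W_i s_j=0$, i.e.\ $s_i = W_i^{-1}p + \delta\sum_j G_{ij}s_j$. In stacked form this gives $s^*(p) = ((I_n-\delta G)^{-1}\otimes I_m)W^{-1}(\mathbf 1_n\otimes p)$, which is well defined under \cref{assump:net}. We then substitute the first-order condition back into $u_i$. Since $\delta\sum_j G_{ij}W_i s_j = W_i s_i - p$, the spillover term equals $\langle s_i, W_is_i\rangle - \langle s_i,p\rangle$, and so $u_i(p) = \tfrac12\langle s_i^*, W_i s_i^*\rangle$. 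Because $s_i^*$ is linear in $p$, this is a quadratic form $\tfrac12\langle p,\Phi_i p\rangle$ with $\Phi_i = B_i^TW_iB_i$, where $B_i$ is the $i$-th block row of $((I-\delta G)^{-1}\otimes I)W^{-1}(\mathbf 1\otimes I)$. This is consistent with \cref{eq:sw-kernel}, since $\Phi=\frac1n\sum_i\Phi_i$.

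Next we diagonalize. Every $W_i$ is diagonal in the basis $U$, so each block of $B_i$ is as well, and in bundle $a$ we obtain $B_{i,a} = \sum_j K_{ij}\lambda_{j,a}^{-1}$ with $K\coloneqq(I-\delta G)^{-1}$. The key observation is that $\tilde G_a = \Lambda_a G\Lambda_a^{-1}$ with $\Lambda_a=\operatorname{diag}(\lambda_{i,a})_i$ is a similarity transform. Hence $\tilde K_a = \Lambda_a K\Lambda_a^{-1}$, and the Neumann series converges under \cref{assump:net}. It follows that $\tilde\cen_{i,a}=\sum_j\tilde K_{a,ij} = \lambda_{i,a}\sum_j K_{ij}\lambda_{j,a}^{-1} = \lambda_{i,a}B_{i,a}$. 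Therefore $\phi_{i,a}=\lambda_{i,a}B_{i,a}^2 = \tilde\cen_{i,a}^2\lambda_{i,a}^{-1}$. Writing $q=U^Tp$, we get $u_i(p)=\tfrac12\sum_a\phi_{i,a}q_a^2$, which is nondecreasing in each $\phi_{i,a}$ because $q_a^2\ge0$ (and increasing when $q_a\neq 0$).

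For the elasticity, $\ln\phi_{i,a}=2\ln\tilde\cen_{i,a}-\ln\lambda_{i,a}$, which immediately gives the first equality. Here we vary $\lambda_{i,a}$ while holding $G$ and all $\lambda_{j,a}$ with $j\neq i$ fixed. For the centrality elasticity, we use the formula $\tilde\cen_{i,a}=\lambda_{i,a}\sum_jK_{ij}\lambda_{j,a}^{-1}$, in which $K$ does not depend on any preference. The factor $\lambda_{i,a}$ contributes $\tilde\cen_{i,a}$ to $\partial\tilde\cen_{i,a}/\partial\ln\lambda_{i,a}$. The $j=i$ summand, $\lambda_{i,a}\cdot K_{ii}\lambda_{i,a}^{-1}$, has derivative $-K_{ii}=-\tilde K_{a,ii}$ with respect to $\ln\lambda_{i,a}$. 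Hence
\[
\frac{\partial\ln\tilde\cen_{i,a}}{\partial\ln\lambda_{i,a}} = 1-\frac{\tilde K_{a,ii}}{\sum_j\tilde K_{a,ij}},
\]
so $2(\cdot)-1 = 1-2\,\tilde K_{a,ii}/\sum_j\tilde K_{a,ij}$. Splitting $1=\sum_j\tilde K_{a,ij}/\sum_j\tilde K_{a,ij}$ then yields the stated "others minus self" form.

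We expect the main obstacle to be recognizing that the bundle-induced centrality matrix is a diagonal similarity transform of the preference-independent matrix $K$. Without this, differentiating $(I-\delta\tilde G_a)^{-1}$ directly in $\lambda_{i,a}$ is messy, because $\lambda_{i,a}$ enters both row $i$ and column $i$ of $\tilde G_a$. The other point that needs care is that the diagonal entries satisfy $\tilde K_{a,ii}=K_{ii}$, which holds because the similarity transform leaves diagonal entries unchanged.
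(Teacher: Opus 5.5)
Your proposal is correct and follows essentially the same route as the paper. You reduce $u_i$ to $\tfrac12\langle s_i^*,W_is_i^*\rangle$ via the first-order condition, diagonalize $\Phi_i=\sum_{j,k}K_{ij}K_{ik}W_j^{-1}W_iW_k^{-1}$ in $U$, and use the diagonal similarity $\tilde K_a=\Lambda_aK\Lambda_a^{-1}$ to differentiate $\lambda_{i,a}\sum_jK_{ij}\lambda_{j,a}^{-1}$ with $K$ fixed, which gives $1-\tilde K_{a,ii}/\sum_j\tilde K_{a,ij}$. The only differences are cosmetic: the paper obtains $u_i=\tfrac12\langle s_i^*,(Q^{\mathrm{I}}s^*)_i\rangle$ from its general operator formulation, and your product-rule sentence could be tightened, since the $j=i$ summand itself is the constant $K_{ii}$ and the $-K_{ii}$ comes only from differentiating the inner $\lambda_{i,a}^{-1}$.
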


Centrality elasticity (CE) is determined by the shares of the agent's own and others' contributions to its centrality.
Specifically, agent $i$'s centrality counts the walks originating at $i$, and the self-contribution (resp.\ others' contribution) is the fraction of these walks ending at $i$ (resp.\ at other nodes).

Consistent with the previous subsection, an agent that absorbs more influence from others has a high CE and thus benefits more from spillovers as it further reduces its preference for the bundle.
In the R\&D example, where $\lambda_{i,a}$ is the investment cost, CE quantifies the marginal return on investment.
A firm in a high-tech cluster may want to invest more in R\&D to harness the innovative environment. Without such an environment, however, the investment may not justify its cost.

Importantly, the decision to invest, i.e., the sign of $\partial \ln \phi_{i,a} / \partial \ln \lambda_{i,a}$, flips precisely when the self-contribution and others' contribution are equal.
An agent ignorant of the network structure cannot discern these shares and thus remains uncertain whether increasing or decreasing its preference will improve its utility.
This fundamental \textit{ambiguity} is distinctive to the network setting and serves as a crucial lever for the planner to mitigate strategic manipulation in \cref{sec:eff-econ}.

\section{A polling algorithm} \label{sec:poll}

We introduce \poll, a distributed algorithm for computing the optimal intervention \cref{eq:opt-p}, motivated by the characterization in \cref{eq:cen-planner}.
Our design is driven by a need for computational efficiency: computing \cref{eq:opt-p}  exactly requires $O(n^{3} + n^{2}m^{2} + nm^{3})$ operations, which is prohibitive when the agent population or activity space is large.
Throughout, we let $\nu\coloneqq\mu^{-1}$ and $q \coloneqq \Psi^{1/2}p$.

\paragraph{Step 1. Iterative communication}
Equation \cref{eq:opt-p} directly gives the following iterative scheme:
\[\label{eq:iter}\let\displaystyle\textstyle
	\underbrace{\sum_{\tau=0}^{t+1} \nu ^{\tau} \tilde{\Phi}^{\tau} q_{0}}_{q_{t+1}} = q_0 + \nu \tilde{\Phi} \underbrace{\sum_{\tau=0}^{t} \nu ^{\tau} \tilde{\Phi}^{\tau} q_{0}}_{q_{t}}
	\overset{t\to \infty}{\longrightarrow} \sum_{\tau=0}^{\infty} \nu ^{\tau} \tilde{\Phi}^{\tau} q_{0} = q_*
	.\]
Under \cref{assump:reg}, the iteration is contractive and converges to the optimal solution $q_*$, with a computational complexity of $O(n^{3}+ tm^{2})$ after $t$ iterations.

To address the $O(n^{3})$ computational burden of calculating $\tilde{\Phi}$, suppose at each time step $t$ we have an easy-to-compute estimate $\hat{\Phi}_t$ of $\tilde{\Phi}$. Then, \cref{eq:iter} is associated with the stochastic approximation scheme
\[\label{eq:poll-1}
	q_{t+1} = q_{t} - \alpha _{t}((I - \nu \hat{\Phi}_{t})q_{t} - q_0),\]
were $\alpha_t$ is a step size that satisfies the condition in \citet{robbins1951stochasticapproximation}. The convergence rate of \cref{eq:poll-1} depends on the bias and variance of the stochastic estimate $\hat{\Phi}_t$.

Since the intervention problem is deterministic, we introduce a random variable $Z_t$ and a mapping $\hat{\Phi}_t=\hat{\Phi}(Z_t)$ such that $\mathbb{E}_{Z_t}\hat{\Phi}(Z_t) = \tilde{\Phi}$.
The semantic characterization of $\tilde{\Phi}$ in \cref{eq:cen-planner} provides an intuitive and practical way to construct and sample $Z_t$ in a distributed and asynchronous manner.

\paragraph{Step 2. Sample an agent}
Equation \cref{eq:cen-planner} gives
\[
	\tilde{\Phi} = \frac{1}{n}\sum_{i\in\X} \widetilde{\cen}_i^T \tilde{W}^{-1} _{i} \widetilde{\cen}_i = \mathbb{E}_{i \sim \operatorname{Unif}(\X)} [ \widetilde{\cen}_i^T \tilde{W}^{-1} _{i} \widetilde{\cen}_i ].
\]

\paragraph{Step 3. Sample a walk length $k$}
Expanding the centrality  of agent $i$ as a Neumann series yields
\[
	\widetilde{\cen}_{i} = \sum_{k=0}^{\infty} \delta ^{k} [\tilde{G}^{k} \mathbf{1}]_{i} = (1-\delta)^{-1}\mathbb{E}_{k\sim \operatorname{Geom}(\delta)} [ \tilde{G}^{k}\mathbf{1} ]_{i},
\]
where the distribution is the geometric distribution with success probability $1-\delta$ supported on $\NN = \left\{ 0,1,\dots \right\}$, $\mathbf{1} = \mathbf{1_{n}}\otimes I_{m}$, and $[\cdot ]_{i}$ denotes the $i$-th block\cref{fn:block} of a matrix in $\R^{(n \times 1)(m\times m)}$.

\paragraph{Step 4. Sample a walk}
We have
\[
	[\tilde{G}^{k}\mathbf{1}]_{i}
	= \underset{j_{1},\dots,j_{k}\in \X_{i}}{\sum} \tilde{G}_{ij_{1}}\tilde{G}_{j_1j_2} \cdots \tilde{G}_{j_{k-1}j_{k}}
	= \underset{(i=j_{0}, j_{1}, \ldots, j_{k})}{\mathbb{E}} \left[  \prod_{l=0}^{k-1} \deg_{j_{l}} \cdot \prod_{l=0}^{k-1} \tilde{G}_{j_{l}j_{l+1}} \right],
\]
where $\deg_{j}$ is the in-degree of node $j$ in $G$, and
$(i=j_{0},\dots,j_{k})$ is a random walk of length $k$ starting at $i$ in which each step samples the next agent uniformly from agent $i$'s
incoming neighbors.

Steps 2--4 interpret the composite structures in the exact characterization of $\tilde{\Phi}$ as expectations of random variables. Replacing these expectations with  appropriately generated stochastic realizations of the corresponding random variables then yields an unbiased estimate of $\tilde{\Phi}$.

Recall our construction $\tilde{G}_{ij} = G_{ij}\tilde{W}_{i}\tilde{W}_{j}^{-1}$ of the bundle-induced network $\tilde G$. Suppose the original graph is unweighted ($G\in\{0,1\}^{n\times n}$).  {Then the expectation  in Step 4 involves a telescoping product:}
\[
	\prod_{l=0}^{k-1} \tilde{G}_{j_{l}j_{l+1}} \!=\! \prod_{l=0}^{k-1} G_{j_{l}j_{l+1}} \prod_{l=0}^{k-1} \tilde{W}_{j_{l}} \tilde{W}_{j_{l+1}}^{-1}
	\!=\! 1 \cdot \tilde{W}_{j_0} \tilde{W}_{j_1}^{-1} \tilde{W}_{j_1} \tilde{W}_{j_2}^{-1} \cdots \tilde{W}_{j_{k-1}} \tilde{W}_{j_k}^{-1} \!=\! \tilde{W}_{i} \tilde{W}_{j_k}^{-1}
	.\]
This observation motivates us to define  $Z_t$ as the random tuple
\[
	Z_t = \Bigl( & i_{t}\sim \operatorname{Unif}(\X), k^{(1)}_{t}, k^{(2)}_{t} \overset{ \text{i.i.d.} }{ \sim } \operatorname{Geom}(\delta), \bigl(j^{(1)}_{0}=i_{t}, j^{(1)}_{1}, \ldots, j^{(1)}_{k}\coloneqq j^{(1)}_{k^{(1)}_{t}}\bigr), \\
	&\bigl(j^{(2)}_{0}=i_{t}, j^{(2)}_{1}, \ldots, j^{(2)}_{k}\coloneqq j^{(2)}_{k^{(2)}_{t}}\bigr) \text{ are two independent random walks}\Bigr),
\]
so that we have
\(
\hat{\Phi}_t = \hat{\Phi}(Z_t) \coloneqq (1-\delta)^{-2}  \prod_{l=0}^{k^{(1)}_{t}-1} \deg_{j^{(1)}_{l}} \prod_{l=0}^{k^{(2)}_{t}-1} \deg_{j^{(2)}_{l}} \cdot \tilde{W}_{j^{(1)}_{k}}^{-1} \tilde{W}_{i_{t}} \tilde{W}_{j^{(2)}_{k}}^{-1}
,\)
along with $\mathbb{E}_{Z_t}\hat{\Phi}(Z_t) = \tilde{\Phi}$.
Note that two independently generated random walks are required to obtain an unbiased estimate of the quadratic term in \cref{eq:cen-planner}.

\begin{wrapfigure}{r}{0.36\textwidth}
	\centering
	\includegraphics[width=0.36\textwidth]{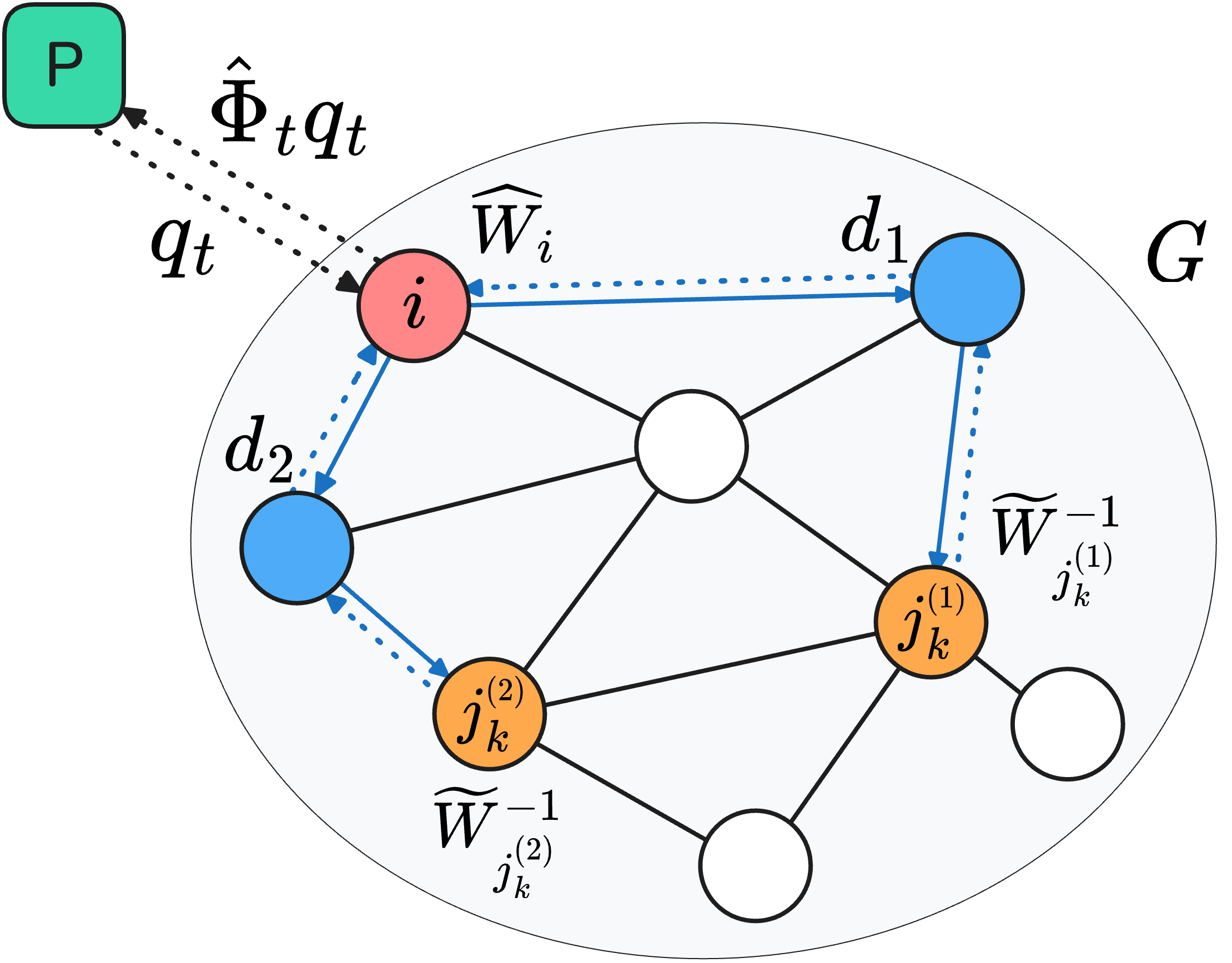}
	\caption{Illustration of \poll. Blue lines represent the random walks and dashed lines represent the information flow. $d_1,d_2$ are the degree information along two walks.}
	\label{fig:alg}
	\vspace{-1em}
\end{wrapfigure}

We name our method \crefrange{eq:poll-1}{eq:poll-2} \poll, as our sampling process implements a distributed ``polling'' process to estimate the aggregated preference.
Specifically, at each time step, the planner
\begin{enumerate*}[label=(\roman*)]
	\item selects an agent uniformly at random,
	\item initiates two independent random walks starting at this agent,
	\item collects degrees along each walk,
	\item asks the endpoint agents to compose the sample vote $\hat{\Phi}_tq_t$ according to their preferences, and then
	\item compares the sample vote with the current price $p_0$, which simulates the communication process between  bundle-induced networks, to update the price intervention as in \cref{eq:poll-1}.
\end{enumerate*}
Throughout the process, neither the planner nor any agent needs to know the network structure beyond its local neighborhood.
Moreover, agents  only report preference-price products and not their raw preferences, preserving privacy as well as reducing communication and computational cost.
\cref{fig:alg} illustrates the algorithm.
A detailed description of the algorithm and its implementation  are provided in \cref{apx:poll}.

\section{Efficiency guarantees}\label{sec:poll-prop}

\subsection{Computational efficiency}\label{sec:eff-comp}

Throughout, we suppose that sampling a neighbor takes $O(1)$ time and $\delta<\sqrt{2}/2$; the latter is implied by \cref{assump:net} whenever some agent has an in- or out-degree at least $2$.
The expected complexity of computing $\hat{\Phi}_{t}$ is $O((1-\delta)^{-1} + m^{2})=O(m^{2})$, independent of the number of agents $n$.
By contrast, computing $\tilde{\Phi}$ exactly requires $O(n^{3}+m^{2})$ time.
Together, \poll requires $O(tm^{2})$ operations with the number of queries $t$ being independent of $n$ (as characterized in \cref{sec:eff-learn}).

Beyond operational complexity, \poll, having been designed for lare-scale systems, imposes minimal resource requirements on both the planner and the agents.
Agents need not maintain any local state across iterations, incurring zero memory overhead.
Agents' data (private technology) is never collected centrally. Hence, the planner only needs a total storage of $O(m)$ to store the price vector.
The communication protocol is lightweight and fully asynchronous:
for each price update, only one queried agent communicates along two random walks and with the planner. The expected total number of communication rounds is $O(4(1-\delta)^{-1} t)=O(t)$.
As a stochastic approximation method implemented in an online manner, \poll automatically adapts to dynamic environments and is robust to both endogenous query noise and exogenous shocks (\cref{prop:noise}).

Alternatively, one can estimate Katz--Bonacich centralities by a distributed power iteration, i.e., \cref{eq:iter} with $\nu\tilde{\Phi}$ replaced by $\delta\tilde{G}$ \citep{foster2001FasterKatz}, at a cost of $O(nm^{2}\ln \epsilon^{-1})$ operations for accuracy $\epsilon$, and then solve \cref{eq:opt-planner} by consensus \citep{nedic2009DistributedSubgradient} or federated \citep{mcmahan2017Communicationefficientlearning} optimization, which repeatedly aggregate local variables across all agents and thus scale in both time and communication with $n$.
Sampling-based centrality estimators \citep{lin2021SAKEEstimating} can similarly be combined with gossip \citep{boyd2006Randomizedgossip,scaman2019Optimalconvergence}, which randomly activates pairs of neighboring agents, or random-walk methods \citep{mao2019WalkmanCommunicationEfficient,nedic2001IncrementalSubgradient,lopes2007Incrementaladaptive}, which pass a single token along the network.
Moreover, all these methods require agents to store and update a local centrality estimate, a memory burden that grows with the number of activities; \poll avoids these costs through sampling and memory-free queries (see \cref{tbl:comp}).

\subsection{Learning efficiency}\label{sec:eff-learn}

To discuss learning efficiency, we consider a system \textit{generated} from an underlying distribution, consisting of both a random graph model and a random preference model.

\begin{assumption}[Independence]\label{asmp:indep}
	Agents' preferences are  identically distributed and the probability distribution that generates these preferences is independent of the network topology.
\end{assumption}

This assumption \emph{separates} the agent space and activity space: an agent's network position carries no information about its activity preference.
Nonetheless, it is sufficiently general to allow for degree correlation, preference correlation, and coupling between the two (see \cref{tbl:vr}) since it imposes independence only on the marginal preference distribution; any two agents may still have correlated preferences that depend on their network positions.
While this generality covers a wide range of real-world scenarios, it also couples statistical dependencies with the network structure, making the analysis more involved.
We first give a universal guarantee for this model and then discuss how different correlation structures affect learning efficiency.

\begin{definition}[Preference heterogeneity]\label{def:het-pref}
	The preference heterogeneity on the activity space is
	\[
		\sigma^{2}_{\A} = \mathbb{E}\|W_{i}^{-1}\|_{\mathrm{F}}^{2}-\|\mathbb{E}W_{i}^{-1} \|_{\mathrm{F}}^{2}
		,\]
	where the expectation is taken w.r.t. the preference generation. \end{definition}

Defined as a variance, $\sigma_{\A}^2$ measures the \emph{diversity} of agents' preferences.
Intuitively, when agents are diverse, the planner must query more to form a fair price, and when agents are homogeneous, a single query suffices to learn the population preference.

\begin{definition}[Network heterogeneity]\label{def:het-net}
	Define the base network $\widehat{G}=I$, degree-scaled network $\widecheck{G} = D^{1 /2}GD^{1 /2}$, and their centralities $\widehat{\cen} = (I-\delta \widehat{G})^{-1}\mathbf{1}$ and $\widecheck{\cen} = (I-\delta \widecheck{G})^{-1}\mathbf{1}$.
	Recall the original network's centrality $\cen = (I-\delta G)^{-1}\mathbf{1}$.
	The network heterogeneity on the agent space is
	\[
		\sigma^{2}_{\X} = \mathbb{E}|\hat{\cen}_{i}\check{\cen}_{i}|^{2} - |\mathbb{E}\cen_{i}^{2}|^{2}		= \Var(\cen_i^{2})
		+ \mathbb{E} [ \hat{\cen}_i^{2}( \check{\cen}_i^{2}-\cen_i^{2} )
			+ \cen_i^{2} ( \hat{\cen}_i^{2}-\cen_i^{2} ) ]
		,\]
	where the expectation is the empirical average across agents.
\end{definition}

Network heterogeneity, mirroring the preference heterogeneity definition, measures the \emph{centrality dispersion} of the network.
It captures not only the centrality heterogeneity $\Var(\cen_i^{2})$, but also how network asymmetry assigns centralities to the agents, encoded by the discrepancy among $\hat{\cen}_i$, $\check{\cen}_i$, and $\cen_i$.
Consequently, even when all agents have the same centrality (e.g., in a regular graph), a random walk retains non-vanishing variance due to network asymmetry.

We  characterize the query complexity of \poll in terms of the  constants
$\bar{\psi}\coloneqq\lambda_{\max}(\Psi)$, $\ubar{\psi}\coloneqq\lambda_{\min}(\Psi)$, $\overline{\lambda} \coloneqq \max_{i\in\X}\lambda _{\max}(W_{i})$, $\underline{\lambda}\coloneqq \min_{i\in\X}\lambda _{\min}(W_{i})$,
and $\bar{d}\coloneqq \sup_{i\in\X}\deg_{i}$,
and condition numbers $\kappa_{\Psi}\coloneqq\bar{\psi}/\ubar{\psi}$, $\kappa_{W}\coloneqq \overline{\lambda}/\underline{\lambda}$, $\kappa_{\nu}\coloneqq (1-\nu \lambda_{\max}(\tilde{\Phi}))^{-1}$, and $\kappa_{\delta}\coloneqq (1-\delta \sigma_{\max}(G))^{-1}$.

\begin{theorem}[Query complexity]\label{thm:query}
		Under \cref{assump:net}--\ref{asmp:indep}
	and a constant step size $\alpha = \kappa_{\nu}\ln t /t$ with $t /\ln t \ge \max\{ 2, 4\sigma^{2} \}$,
	we have $\mathbb{E}\|p_t-p_{*}\|_{2}^{2} \le \tilde{O}(t^{-1}\|p_{*}\|_{2}^{2}\sigma^{2})$, where
	\[
		\sigma^{2} = O\bigl( \nu^{2}\kappa_{\nu}^{2} \kappa_{\Psi}^{2}\kappa_{W}^{2}(\sigma_{\A}^{2} \sigma_{\X}^{2} + \max \{\kappa_{\delta}^{2}\sigma^{2}_{\A}, \bar{\lambda}^{-2}\sigma^{2}_{\X} \}) \bigr)
		,\]
	where $\sigma_{\A}$ and $\sigma_{\X}$ are as defined in \cref{def:het-net,def:het-pref}, respectively.
\end{theorem}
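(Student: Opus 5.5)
The proof splits into a generic stochastic-approximation argument for the linear recursion \cref{eq:poll-1}, followed by a second-moment bound on the estimator $\hat{\Phi}(Z_t)$ built in Steps 2--4. Write $e_t \coloneqq q_t - q_*$ and use the fixed-point identity $q_* = q_0 + \nu\tilde{\Phi} q_*$. Then
\[
	e_{t+1} = \bigl(I - \alpha(I-\nu\tilde{\Phi})\bigr)e_t + \alpha\nu(\hat{\Phi}_t - \tilde{\Phi})q_t .
\]
The noise term has conditional mean zero because $\mathbb{E}_{Z_t}\hat{\Phi}(Z_t)=\tilde{\Phi}$ and $Z_t$ is drawn independently of the past. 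Taking conditional expectations of the squared norm therefore gives
\[
	\mathbb{E}\|e_{t+1}\|^2 \le (1-\alpha\kappa_\nu^{-1})^2\,\mathbb{E}\|e_t\|^2 + \alpha^2\nu^2\,\mathbb{E}\|(\hat{\Phi}_t-\tilde{\Phi})q_t\|^2 .
\]
The contraction factor uses $\lambda_{\min}(I-\nu\tilde{\Phi}) = \kappa_\nu^{-1}$ from \cref{assump:reg}.

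Next, bound $\mathbb{E}\|(\hat{\Phi}-\tilde{\Phi})q\|^2 \le \sigma_0^2\|q\|^2$, with $\sigma_0^2$ the operator-norm second moment of $\hat{\Phi}-\tilde{\Phi}$. Split $\|q_t\|^2 \le 2\|e_t\|^2 + 2\|q_*\|^2$. The condition $t/\ln t \ge 4\sigma^2$, with $\sigma^2 \asymp \nu^2\kappa_\nu^2\sigma_0^2$, ensures that the $\alpha^2\nu^2\sigma_0^2\|e_t\|^2$ term is absorbed into the contraction, since $\alpha\nu^2\sigma_0^2 \lesssim \kappa_\nu^{-1}$. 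Unrolling with the constant $\alpha = \kappa_\nu\ln t/t$ gives a bias term $(1-\ln t/t)^{t}\|e_0\|^2 \le t^{-1}\|e_0\|^2$ and a variance term $\alpha\kappa_\nu\nu^2\sigma_0^2\|q_*\|^2 = \tilde{O}(t^{-1}\nu^2\kappa_\nu^2\sigma_0^2\|q_*\|^2)$. We have $e_0 = -\nu\tilde{\Phi}q_*$ when $q_0$ is the starting point, so the bias is also $O(\|q_*\|^2)$. Converting between $q$ and $p$ costs a factor $\kappa_\Psi$. The $\Psi$-conjugation inside $\tilde{W}_i$ costs a further $\kappa_\Psi$ when passing to $W_i$, which accounts for $\kappa_\Psi^2$.

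The main obstacle is bounding $\sigma_0^2 \le \mathbb{E}\|\hat{\Phi}\|_{\mathrm F}^2 - \|\tilde{\Phi}\|_{\mathrm F}^2$ in terms of $\sigma_{\A}^2$ and $\sigma_{\X}^2$. I would use \cref{asmp:indep} to factor the second moment. Conditional on the graph randomness (the agent $i$, the lengths $k^{(1)},k^{(2)}$ and the walks), $\hat{\Phi}$ equals the scalar $c \coloneqq (1-\delta)^{-2}\prod\deg\prod\deg$ times $\tilde{W}_{j^{(1)}_k}^{-1}\tilde{W}_{i}\tilde{W}_{j^{(2)}_k}^{-1}$. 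The preferences are independent of the topology and identically distributed. The scalar weights therefore have first moment $\mathbb{E}_{\text{walk}}[c] = \cen_i^2$ after averaging the two independent walks. Their second moment is $\mathbb{E}[c^2]$, and this is exactly where the degree-scaled network $\widecheck{G}$ enters: squaring $\prod\deg$ under uniform neighbor sampling gives walks weighted by $D^{1/2}GD^{1/2}$. The geometric lengths squared give the $\hat{\cen}$ factor. So $\mathbb{E}c^2 \approx \mathbb{E}\hat{\cen}_i^2\check{\cen}_i^2$, while $(\mathbb{E}c)^2 = (\mathbb{E}\cen_i^2)^2$.

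The matrix factor is handled analogously. Bound $\tilde{W}_i$ in norm by $\bar{\psi}\bar{\lambda}$, which produces the $\kappa_W$ and $\bar{\lambda}$ factors. The remaining $\tilde{W}^{-1}$ factors have mean $M \coloneqq \mathbb{E}W^{-1}$ and variance $\sigma_{\A}^2$. This yields the standard product-variance identity
\[
	\Var(cX) = \Var(c)\Var(X) + \Var(c)(\mathbb{E}X)^2 + (\mathbb{E}c)^2\Var(X).
\]
The first term gives $\sigma_{\X}^2\sigma_{\A}^2$. The second gives $\sigma_{\X}^2\|M\|^2 \lesssim \bar{\lambda}^{-2}\kappa_W^2\sigma_{\X}^2$. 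The third gives $(\mathbb{E}\cen_i^2)^2\sigma_{\A}^2 \le \kappa_\delta^2\sigma_{\A}^2$. The last bound needs $\cen_i \le (1-\delta\sigma_{\max}(G))^{-1}$, or more carefully a normalization of $\kappa_\delta$.

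The delicate point is that the endpoints $j^{(1)}_k$, $j^{(2)}_k$ and $i$ may coincide, so the preferences are not fully independent across the three positions. I would handle this by upper-bounding the correlated cases with the same $\max$ term, using only identical marginals. Collecting the three terms gives the stated $\sigma^2$.
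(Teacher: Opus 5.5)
Your outer argument matches the paper's. The recursion for $e_t$, the split $\|q_t\|^2\le 2\|e_t\|^2+2\|q_*\|^2$, the absorption under $t/\ln t\ge\max\{2,4\sigma^2\}$ and the unrolling with $\alpha=\kappa_\nu\ln t/t$ are \cref{lem:err-dyn}; your remark $\|e_0\|\le\|q_*\|$ is a useful addition. Your walk-moment computation for $c=(1-\delta)^{-2}d_1d_2$ is \cref{lem:varX} with a cleaner normalization.

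The gap is the step you flag yourself. Take $X=W_{j^{(1)}}^{-1}W_iW_{j^{(2)}}^{-1}$. The identity $\Var(cX)=\Var(c)\Var(X)+\Var(c)\|\mathbb{E}X\|^2+(\mathbb{E}c)^2\Var(X)$ needs $c$ and $X$ to be independent, and they are not. The weight $c$ is $\mathcal{F}_{\X}$-measurable, while the conditional law of $X$ given $\mathcal{F}_{\X}$ depends on which agents the walks reach. This happens through coincidences $j^{(1)}=j^{(2)}$ or $j^{(l)}=i$. It also happens through any position-dependent correlation between agents' preferences, which \cref{asmp:indep} allows because it only fixes marginals. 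So the identity is false here, and ``upper-bounding the correlated cases with the same max term'' does not say which inequality replaces it. Your second term also uses $\|M\|$ rather than $\|\mathbb{E}X\|$, which the identity would not give you.

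The missing device is the paper's reference-point decomposition (\cref{lem:var-decomp}). With the deterministic $M=\mathbb{E}W_i^{-1}$,
\[
cX-\mathbb{E}[cX]=c(X-M)+(c-\mathbb{E}c)M+\mathbb{E}[c(M-X)].
\]
The last term is dominated by the first via Jensen. For the first, $\mathbb{E}[c^2\|X-M\|^2]=\mathbb{E}\bigl[c^2\,\mathbb{E}[\|X-M\|^2\mid\mathcal{F}_{\X}]\bigr]\le(\Var(c)+(\mathbb{E}c)^2)V_{\A}$, for any $V_{\A}$ bounding the conditional second moment uniformly over walk outcomes. This yields your three terms as an inequality, with no independence between $c$ and $X$ needed.

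Bounding $\|\tilde W_i\|$ alone does not produce such a $V_{\A}$, because the middle factor varies too. The paper instead telescopes:
\[
X-M=(W_{j^{(1)}}^{-1}-W_i^{-1})W_iW_{j^{(2)}}^{-1}+(W_{j^{(2)}}^{-1}-W_i^{-1})+(W_i^{-1}-M).
\]
The prefactor has norm at most $\kappa_W$. Each difference has conditional second moment at most $4\sigma_{\A}^2$, using identical marginals alone and whatever the correlation. Hence $V_{\A}\lesssim\kappa_W^2\sigma_{\A}^2$ (\cref{lem:varA}).

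Separately, your $\kappa_\delta$ step is off. The pointwise bound $\cen_i\le\kappa_\delta$ fails in general, e.g.\ at the hub of a star. What does hold, $\mathbb{E}\cen_i^2=\frac1n\|\cen\|_2^2\le\kappa_\delta^2$, gives $(\mathbb{E}\cen_i^2)^2\le\kappa_\delta^4$, with equality on regular graphs. The paper's combination step does not justify the stated $\kappa_\delta^2$ either.
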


The rate established in Theorem \ref{thm:query} matches the standard sample complexity for stochastic approximation \citep{zhang2025semisgd}.
The proof is deferred to \cref{apx:eff-learn}.
For a well-conditioned system with $\kappa_{\nu},\kappa_{\Psi},\kappa_{W},\kappa_{\delta},\|p_*\|_{2}=O(1)$, using the conservative bounds $\sigma_{\A}^{2}\overset{\scriptscriptstyle\cref{eq:varA-trivial}}{\le} \kappa_{W}^{2}\ubar{\lambda}^{-2}$ and $\sigma_{\X}^{2}\overset{\scriptscriptstyle\cref{eq:varX-trivial}}{\le} (1-\delta \sigma_{\max}(DG))^{-2}\le (1-\delta \bar{d}^{2})^{-2}$ gives a query complexity of $t = O(\epsilon^{-2} \ubar{\lambda}^{-2}(1-\delta \bar{d}^{2})^{-2})$.

\cref{thm:query} identifies the \textit{heterogeneity} levels $\sigma_{\X}$ and $\sigma_{\A}$ as the bottleneck of learning efficiency, confirming the intuition that each query is more representative when the population is more homogeneous.
Importantly, $\sigma_{\X}$ and $\sigma_{\A}$ do not depend explicitly on the system dimension $n$.

\begin{remark}[Free lunch?]
	The query complexity of \poll is heterogeneity-aware but dimension-agnostic. It thus scales to large systems given controlled heterogeneity.
	For a $d$-regular graph, for instance, we have $\sigma_{\X}^{2} \le (1-\delta d^{2})^{-2}$, independent of $n$.
	However, the heterogeneity of real systems typically grows with the system size.
	In the worst case, a star graph has $\bar{d}=n$ and $\sigma_{\max}(DG)=n^{2}$, degrading the bound to $O(\epsilon^{-2}(1-\delta n^{2})^{-2})$.
		Hence, \emph{no} free lunch exists in general.

	That said, \poll is also aware of another facet of system scale: the strengths of the network effect and regularization.
	The discount factors $\delta$ and $\nu$ control the influence of the network effect and agents' preferences, compared to the idiosyncratic component and the planner's preference, respectively.
	It is therefore intuitive that \poll concentrates its queries on learning these two components, with complexity decreasing as the network effect weakens or regularization strengthens.
	An exact method fails to leverage this structure.
\end{remark}

\poll enjoys variance reduction via structures common in real-world systems and via techniques suggested by our characterization in \cref{prop:cen-planner}, catalogued in \cref{tbl:vr}.

\begin{table}[th]
	\vspace{-1em}
	\caption{Variance reduction.}
	\label{tbl:vr}
	\centering
	\resizebox{\textwidth}{!}{		\begin{tabular}{llll}
			\toprule
			Structure                                         &
			Scenario                                          &
			Technique                                         &
			Reduction                                                             \\\midrule
			Small network effect                              & Economic networks
			\citep{parasnis2025incentive}                     &
			Control variate                                   &
			Prop.~\labelcref{prop:cv}                                             \\
			Preference correlation                            &
			Herding \citep{acemoglu2011BayesianLearning}      &
			Not needed                                        &
			Prop.~\labelcref{prop:pref-corr}                                      \\
			Degree correlation                                &
			Assortativity \citep{newman2018Networks}          &
			Not needed                                        &
			Prop.~\labelcref{prop:deg-corr}                                       \\
			Power law degrees                                 &
			Social networks \citep{jackson2011Socialeconomic} &
			Importance sampling                               &
			Prop.~\labelcref{prop:power}                                          \\
															\bottomrule
		\end{tabular}}
	\vspace{-0.4em}
\end{table}

\subsection{Economic efficiency}\label{sec:eff-econ}

To achieve economic efficiency, i.e., to maximize social welfare under the planner's budget constraint,
\poll relies on agents reporting their preferences truthfully.
A strategic agent, however, may misreport to shift the price in its favor; a malicious agent may report arbitrarily to disrupt learning altogether.
Facing different agent behaviors, \poll readily equips the planner with a dedicated device to suppress manipulation, summarized in \cref{tbl:sp}.
Payment mechanisms \citep{groves1973Incentivesteams}, the most common tool to elicit truthful reporting, require the planner to collect money after each report, which is often infeasible in practice.
In this section, we instead leverage the fundamental ambiguity from \cref{sec:ce} to deter strategic manipulation without payment; the other two devices are detailed in \cref{apx:eff-econ}.

\begin{table}[th]
	\vspace{-1em}
	\caption{Manipulation robustness. Approximate truth presents the asymptotic scale of the maximum deviation from the truthful price update as $\alpha_t \to 0$.
		By \cref{prop:noise}, a deviation of $O(\alpha_t ^{\scriptscriptstyle 3 /2})$ does not affect the learning complexity.}
	\label{tbl:sp}
	\centering
	\begingroup\small
	\begin{tabular}{llll}
		\toprule
		Behavior                                                  &
		Device                                                    &
		\multicolumn{2}{l}{Approximate truth}                                                          \\\midrule
		Ambiguity-averse                                          &
		Knightian uncertainty                                     &
		$\alpha_t^{2}$                                            & (Prop.~\labelcref{prop:knight})    \\
		Payoff-maximizing                                         &
		Groves payment                                            &
		$\alpha_t^{2}$                                            & (Prop.~\labelcref{prop:groves})    \\
		Malicious                                                 &
		Byzantine audit                                           &
		$\alpha_t\delta \bar{d} /\ubar{\lambda} \sqrt{\ln m / B}$ & (Prop.~\labelcref{prop:byzantine}) \\
		\bottomrule
	\end{tabular}\endgroup
\end{table}

Without payment, preventing misreporting is fundamentally difficult: in the linear-quadratic game, an agent's utility strictly increases with both the price and the reported utility kernel $\hat{\Phi}_{t}$ in the Loewner order (see \cref{eq:eu}), incentivizing it to inflate its report.
Hope arises from the observation that in \poll, the queried agent reports only its own preference $W_i^{-1}$.
As \cref{sec:ce} establishes, the network topology creates a fundamental ambiguity: whether \textit{overstating} or \textit{understating} $W_i$ raises the agent's utility depends on the unknown network structure.
An agent uncertain about this direction faces the risk that misreporting in the wrong direction yields a strictly lower utility than reporting truthfully.

To engineer this uncertainty, the planner signals the random walk information to the queried agent before soliciting the latter's report. The report is then combined with preference--price products from the endpoint agents to compute the price update.
The planner's strategy is to send multiple \textit{decoy} signals---fictitious walk summaries not used for the actual update---and announces only that one of the signals is genuine.
Unable to identify the true signal, the agent forms an uncertainty set over the possible walk outcomes. The planner's choice of decoys controls the shape of this set.

\begin{proposition}\label{prop:knight}
	Let $p(\hat{W}_{i})$ be the price updated by \poll (with control variate \cref{eq:cv}) given the agent's reported preference $\hat{W}_{i}^{-1}$, and $u_{i}(\hat{W}_i) \coloneqq u_{i}(s^*(p(\hat{W}_{i})),p(\hat{W}_{i}))$.
	Consider the following setup:
	\begin{itemize}\tightlist
		\item \emph{(One-dimensional manipulation).} The agent only overstates or understates its preference by a factor $\rho$, i.e., $\hat{W}_{i} = \rho W_{i}$, and the report must satisfy the known bounds $\ubar{\lambda} I\preceq \hat{W}_{i} \preceq \bar{\lambda} I$.
		\item \emph{(Planner's signals).} For each query, the planner sends multiple signals, one of which contains the true walk information, one of which contains degree products with mean $d_{-} < (1-\delta)\delta^{-1} \kappa_{W}^{\scriptscriptstyle-1 /2}$, and one of which contains $d_{+}> (1-\delta)\delta^{-1}\kappa_{W}^{\scriptscriptstyle1 /2}$.
		\item \emph{(Agent's uncertainty).} For each signal, the agent constructs a belief $\mathcal{P}$, with the mean of the degree products equal to the signal mean, and the mean of the endpoint agents' preferences equal to its own preference in the absence of other information.
		\item \emph{(Maxmin utility).} Given an uncertainty set $\mathcal{C}$, the agent  minimizes the worst-case expected regret, or equivalently, maximize the worst-case expected utility gain
		      \(
		      \max_{\hat{W}_{i}} \min_{\mathcal{P}\in\mathcal{C}} \mathbb{E}_{\mathcal{P}}[u_{i}(\hat{W}_{i}) - u_{i}(W_{i})]
		      .\)
	\end{itemize}
	If $\mathcal{C}$ contains $d_{-}$ and $d_{+}$, the agent's optimal strategy deviates from the truth $W_{i}$ by at most $O(\alpha_t)$.
\end{proposition}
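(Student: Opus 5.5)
We reduce the agent's problem to a scalar one in the manipulation factor $\rho$ and expand the utility gain in powers of the step size. Under one-dimensional manipulation the report is $\hat W_i=\rho W_i$ with $\rho$ restricted to a compact interval containing $1$ (by the bounds $\ubar\lambda I\preceq\hat W_i\preceq\bar\lambda I$). The \poll update with control variate is affine in the sample kernel, so $p(\hat W_i)=p_t+\alpha_t\,\Delta(\rho)$, where $\Delta(\rho)$ collects the queried agent's contribution. This contribution is the report $\hat W_i$ sandwiched between the endpoint preferences $\tilde W_{j_k^{(1)}}^{-1},\tilde W_{j_k^{(2)}}^{-1}$ and multiplied by the degree products. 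By \cref{prop:ce}, $u_i(p)=\tfrac12\langle p,\Phi_i p\rangle$. Therefore
\[
u_i(\hat W_i)-u_i(W_i)=\alpha_t\langle \Phi_i p_t,\Delta(\rho)-\Delta(1)\rangle+O(\alpha_t^{2}),
\]
with the $O(\alpha_t^2)$ term uniform in $\rho$ over the compact range. Taking the expectation under a belief $\mathcal P$, the first-order term becomes $\alpha_t\, g_{\mathcal P}(\rho)$. Here $g_{\mathcal P}$ depends on $\rho$ only through the mean degree product $d$ of the signal and the mean endpoint preference, which the agent sets equal to its own preference.

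Next we compute $g_d(\rho)$ explicitly. With endpoint mean $W_i^{-1}$, the control-variate estimator contributes a term linear in $\rho$ (the idiosyncratic/self part, $k=0$). The walk part, scaled by the degree-product mean, contributes terms proportional to $\delta d/(1-\delta)$. We then show that $g_d$ is monotone in $\rho$, with the direction of monotonicity determined by the sign of $1-\delta d(1-\delta)^{-1}$ after accounting for the spectral spread of $W_i$. The spectral spread is what introduces the factors $\kappa_W^{\pm1/2}$: bounding $\langle\Phi_i p_t,W_i^{-1}\cdot\rangle$ from above and below by $\bar\lambda$ and $\ubar\lambda$ costs a factor $\kappa_W^{1/2}$ on each side.

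The main obstacle is establishing the sign of $g_d$ uniformly over matrix-valued preferences. This is the discrete analogue of the "others' versus self-contribution" sign flip in \cref{prop:ce}. We would argue that the thresholds $d_-<(1-\delta)\delta^{-1}\kappa_W^{-1/2}$ and $d_+>(1-\delta)\delta^{-1}\kappa_W^{1/2}$ are precisely what guarantee $g_{d_-}(\rho)<0<g_{d_+}(\rho)$ for $\rho>1$, and the reverse for $\rho<1$. Equivalently, we need $(\rho-1)g_{d_-}(\rho)\le0\le(\rho-1)g_{d_+}(\rho)$, with strict inequality and a magnitude of at least $c|\rho-1|$ for some $c>0$ independent of $\alpha_t$.

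Finally, we conclude the maxmin argument. At $\rho=1$ the gain is exactly $0$ for every $\mathcal P$, so the maxmin value is at least $0$. For any $\rho\ne1$, the worst case over a set $\mathcal C$ containing $d_-$ and $d_+$ is at most $-\alpha_t c|\rho-1|+C\alpha_t^{2}$. Any maximizer $\rho^\star$ must therefore satisfy $c\,\alpha_t|\rho^\star-1|\le C\alpha_t^{2}$, i.e.\ $|\rho^\star-1|=O(\alpha_t)$. Since $\hat W_i-W_i=(\rho^\star-1)W_i$ with $\|W_i\|\le\bar\lambda$, the optimal report deviates from the truth by $O(\alpha_t)$. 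The induced price deviation is then $\alpha_t\cdot O(\alpha_t)=O(\alpha_t^2)$, which is consistent with \cref{tbl:sp}.
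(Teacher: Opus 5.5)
Your outer structure matches the paper: first-order expansion in $\alpha_t$, comparison of the two beliefs, and the closing bound $c\,\alpha_t|\rho^\star-1|\le C\alpha_t^{2}$. That last step is fine. The gap is the middle step, which is the actual content of the proposition. You leave the sign of $g_d$ as ``we would argue,'' and the mechanism you sketch for it is incorrect. In the control-variate estimator \cref{eq:cv} the report enters in only two places:
\begin{itemize}
\item the self term becomes $\hat W_i^{-1}=\rho^{-1}W_i^{-1}$, which is linear in $\rho^{-1}$, not in $\rho$;
\item the two-walk term $\delta^{2}(1-\delta)^{-2}d_1d_2W_{j^{(1)}}^{-1}\hat W_iW_{j^{(2)}}^{-1}$ has belief-mean $c_{\mathcal P}\,\rho\,W_i^{-1}$, with $c_{\mathcal P}=\delta^{2}d^{2}/(1-\delta)^{2}$. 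This uses independent endpoints with mean $W_i^{-1}$ and degree products with mean $d$ each.
\end{itemize}
The terms of order $\delta d/(1-\delta)$ do not involve the report at all. So every $\rho$-dependent piece is a multiple of the single scalar $\operatorname{tr}(C^{T}W_i^{-1})$, with $C$ as in \cref{lem:regret}, and
\[
\mathbb{E}_{\mathcal P}\bigl[u_i(\rho W_i)-u_i(W_i)\bigr]=\alpha_t(\rho-1)\bigl(c_{\mathcal P}-\rho^{-1}\bigr)\operatorname{tr}(C^{T}W_i^{-1})+O(\alpha_t^{2}).
\]
There is no obstacle from uniformity over matrix-valued preferences, and no eigenvalue bounding of $\langle\Phi_ip_t,W_i^{-1}\cdot\rangle$ is involved.

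This gain is not monotone in $\rho$. It is convex, vanishes at $\rho=1$ and again at $\rho=1/c_{\mathcal P}$, and its slope at $\rho=1$ is $c_{\mathcal P}-1$. That slope's sign is indeed the sign of $\delta d/(1-\delta)-1$, so your criterion captures only local deviations. By itself it would give thresholds $d\gtrless(1-\delta)/\delta$ with no $\kappa_W$, and it would still allow profitable large deviations beyond the second root.

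The $\kappa_W$ factors come instead from the feasible range. The constraint $\ubar{\lambda} I\preceq\rho W_i\preceq\bar\lambda I$ forces $\rho\in[\ubar{\rho},\bar\rho]\subseteq[\kappa_W^{-1},\kappa_W]$. Take $\operatorname{tr}(C^{T}W_i^{-1})>0$ without loss of generality. Then:
\begin{itemize}
\item ruling out every overstatement requires $c_{\mathcal P_-}<\bar\rho^{-1}$, which is implied by $c_{\mathcal P_-}<\kappa_W^{-1}$;
\item ruling out every understatement requires $c_{\mathcal P_+}>\ubar{\rho}^{-1}$, which is implied by $c_{\mathcal P_+}>\kappa_W$.
\end{itemize}
The exponent $1/2$ in $\kappa_W^{\pm1/2}$ appears only because $c_{\mathcal P}$ is quadratic in $d$.

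This also supplies the margin you left unproven. The worst-case gain is at most
\[
-\alpha_t|\rho-1|\min\{\bar\rho^{-1}-c_{\mathcal P_-},\,c_{\mathcal P_+}-\ubar{\rho}^{-1}\}\operatorname{tr}(C^{T}W_i^{-1})+O(\alpha_t^{2}),
\]
after which your concluding argument goes through.
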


\begin{wrapfigure}{r}{0.52\textwidth}
	\vspace{-1em}
	\centering
	\begin{tikzpicture}[scale=0.497,yscale=0.95]
				\tikzset{every node/.style={font=\LARGE}}
		\begin{axis}[
				axis line style={thick},
				axis lines=middle,
				xmin=-1.3, xmax=1.3,
				ymin=-0.2, ymax=1,
				samples=200,
				width=14cm,
				height=10cm,
				domain=-1:1,
				grid=none,
				xtick=\empty,
				ytick=\empty,
								declare function={
						u(\x,\c) = (1-exp(\x))*(0.99-\c*exp(\x)) / 2/ exp(\x);
					}
			]
			\node at (axis cs:0.18,0.95) {$\hat{u}\!-\!u$};
			\node at (axis cs:1.2,0.05) {$\ln \rho$};

			\addplot[lightgray] coordinates {(-0.8,-0.15) (-0.8,2)};
			\node[above] at (axis cs:-0.8,-0.2) {$\ln\ubar{\rho}'$};
			\addplot[lightgray] coordinates {(0.8,-0.15) (0.8,2)};
			\node[above] at (axis cs:0.8,-0.2) {$\ln\bar{\rho}'$};
			\addplot[lightgray] coordinates {(-0.5,-0.15) (-0.5,2)};
			\node[above] at (axis cs:-0.5,-0.2) {$\ln\ubar{\rho}$};
			\addplot[lightgray] coordinates {(0.5,-0.15) (0.5,2)};
			\node[above] at (axis cs:0.5,-0.2) {$\ln\bar{\rho}$};

			\addplot[black, domain=-0.9:0.9, dashed, black, thick, line width=1.5pt] {u(x,0.92)};
			\node at (axis cs:-1,0.4) {$d_{\pm}$};

			\addplot[domain=-0.9:0, mygreen!80!black, opacity=0.3, line width=10pt] {u(x,2)};
			\addplot[domain=0:0.85, mygreen!80!black, opacity=0.3, line width=10pt] {u(x,0.5)};
			\addplot[domain=0:0.85, myviolet!80!black, opacity=0.3, line width=10pt] {u(x,2)};
			\addplot[domain=-0.9:0, myviolet!80!black, opacity=0.3, line width=10pt] {u(x,0.5)};
			\addplot[black, thick, domain=-0.9:0.85, thick, line width=1.5pt] {u(x,2)};
			\node at (axis cs:-1,0.1) {$d_{-}$};
			\addplot[black, thick, domain=-0.9:.85, thick, line width=1.5pt] {u(x,0.5)};
			\node at (axis cs:-1,0.6) {$d_{+}$};

						\addplot[only marks, mark=*, mark size=6.5pt, myviolet!80!black]
			coordinates {(0.5,0.45)};
			\addplot[only marks, mark=o, mark size=6.5pt, myviolet!80!black, mark options={line width=3pt}]
			coordinates {(0.8,0.95)};

						\addplot[only marks, mark=*, mark size=6.5pt, myorange!80!black]
			coordinates {(-0.5,0.14)};
			\addplot[only marks, mark=o, mark size=6.5pt, myorange!80!black, mark options={line width=3pt}]
			coordinates {(-0.8,0.36)};

						\addplot[only marks, mark=*, mark size=6.5pt, mygreen!80!black]
			coordinates {(0,0)};
			\addplot[only marks, mark=o, mark size=6.5pt, mygreen!80!black, mark options={line width=3pt, draw=mygreen!80!black}]
			coordinates {(-0.8,0.06)};
			
		\end{axis}
	\end{tikzpicture}
	\caption{Expected utility gain $\hat{u}_i - u_i$ from reporting $\hat{W}_{i}=\rho W_{i}$.
		Solid curves are induced by signals $d_{-}$ and $d_{+}$.
		An optimistic agent selects the best case (blue) and overstates its preference;
		a ambiguity-averse agent selects the worst case (green) and reports truthfully $\rho=1$ when restricted to $[\ubar{\rho},\bar{\rho}]$, but misreports when the strategy space expands to $[\ubar{\rho}', \bar{\rho}']$.
		An agent that collapses the signals to a single belief selects the average case (dashed curve) and understates its preference (orange).}
	\label{fig:sb}
	\vspace{-2em}
\end{wrapfigure}

The uncertainty here can be attributed to \citet{knight1921Riskuncertainty}: the agent has no prior over the uncertainty set and cannot reduce it to a single belief.
Agents with such max-min preferences are \textit{ambiguity averse} \citep{gilboa1989Maxminexpected}, a well-documented feature of real-world decision-making.
Intuitively, Knightian uncertainty removes the incentive to misreport, as the agent does not want to regret overstating when it should have understated, or vice versa.
Additionally, the decoy signals must be strong enough to prevent the utility gain from surfacing above zero again at large deviations.
\cref{fig:sb} illustrates how different attitudes towards uncertainty lead to different reporting strategies.

We restrict the focus of \cref{prop:knight} to one-dimensional scaling and max-min utility for ease of exposition. A general manipulation space with a continuum of ambiguity attitudes is analyzed in \cref{apx:knight}.

\section{Numerical experiments}\label{sec:sim}

Please refer to \cref{apx:sim} for the detailed experimental setup and additional results.

\paragraph{Synthetic data}
Simulation results are illustrated in \cref{fig:sim}.
\cref{fig:scale} presents the scaling (wall-clock time vs. agent space size) of various methods on a directed hierarchical network, highlighting the computational efficiency of \poll and validating the  discussion in \cref{sec:eff-comp}.
\cref{fig:het} presents \poll's query complexity across different levels of preference and network heterogeneity on a variation of the random network model in \citet{watts1998Collectivedynamics}, confirming the heterogeneity-aware learning efficiency of \poll, as discussed in \cref{sec:eff-learn}.
\cref{fig:mani} presents \poll's mean relative error (MRE) w.r.t. the optimal price when facing ambiguity-averse strategic agents on an approximately regular graph, given no decoy signals, random decoy signals, or carefully designed decoy signals, confirming the robustness of \poll to manipulation, as discussed in \cref{sec:eff-econ}.

\begin{figure}[th]
	\centering
	\begin{subfigure}{0.325\textwidth}
		\centering
		\includegraphics[width=0.933\linewidth]{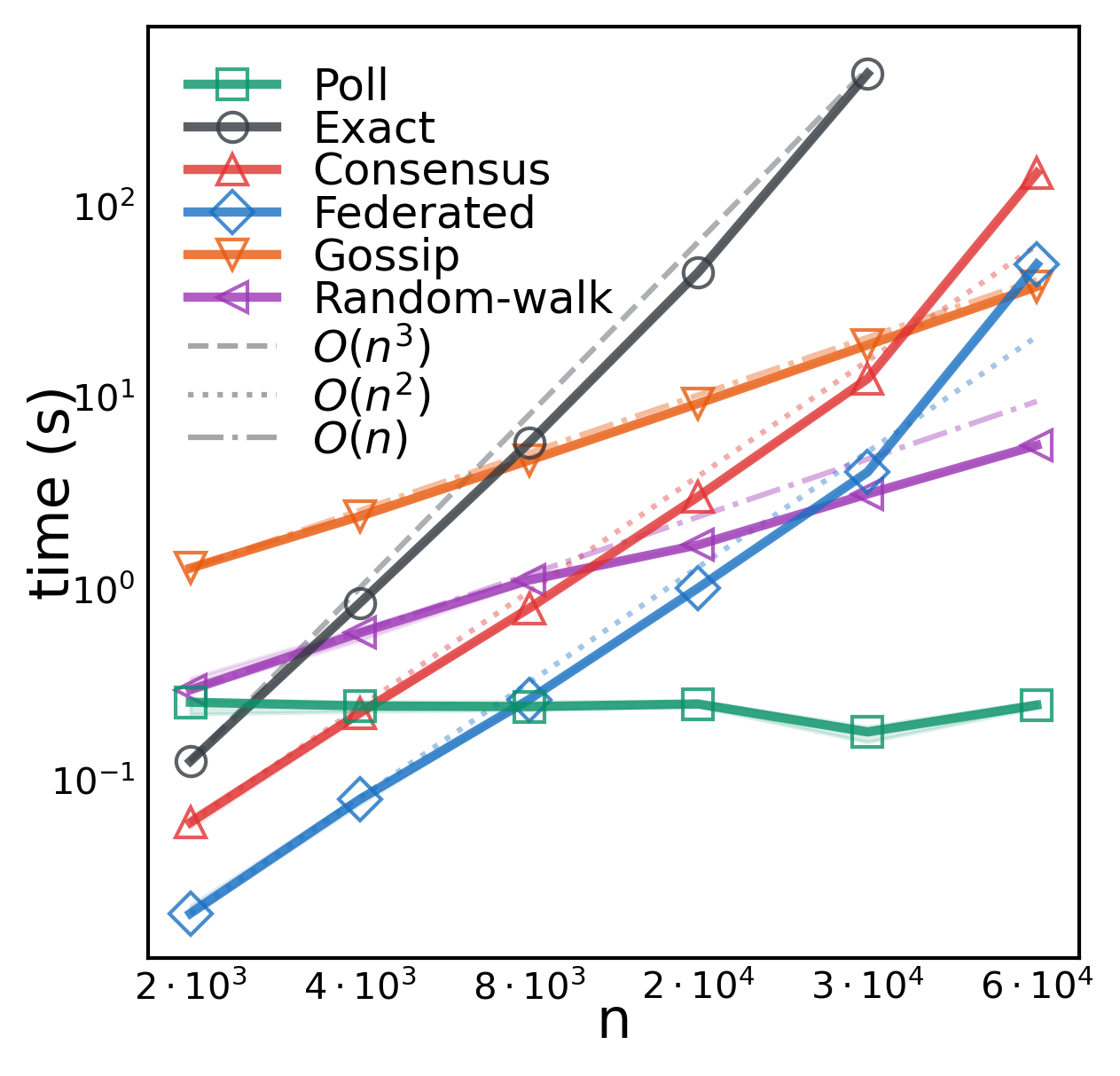}
		\vspace{-4.6px}
		\caption{Scaling.}
		\label{fig:scale}
	\end{subfigure}\hfill
	\begin{subfigure}{0.35\textwidth}
		\includegraphics[width=\linewidth]{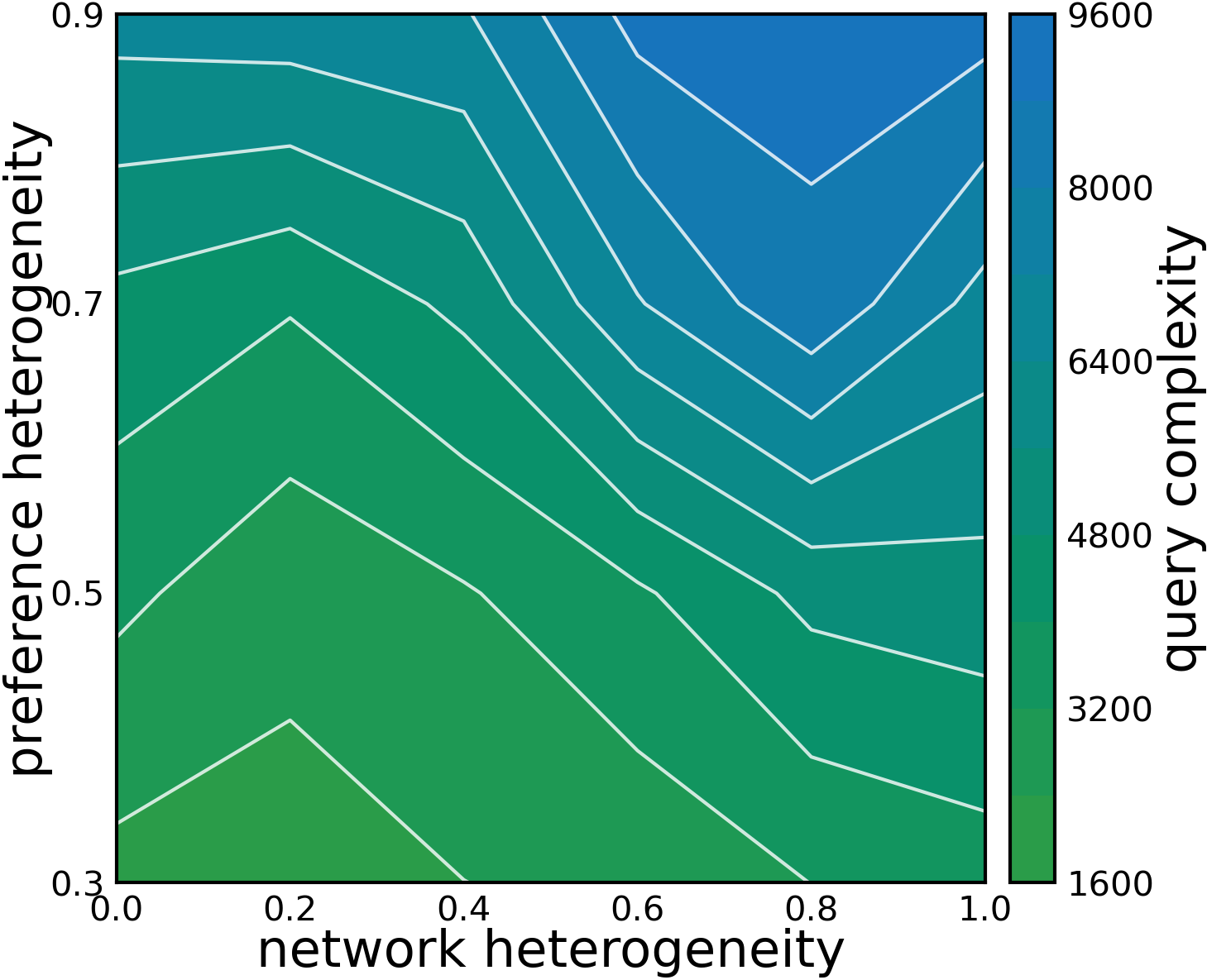}
		\vspace{-14px}
		\caption{Heterogeneity-aware complexity.}
		\label{fig:het}
	\end{subfigure}\hfill
	\begin{subfigure}[b]{0.325\textwidth}
		\centering
		\includegraphics[width=0.947\linewidth]{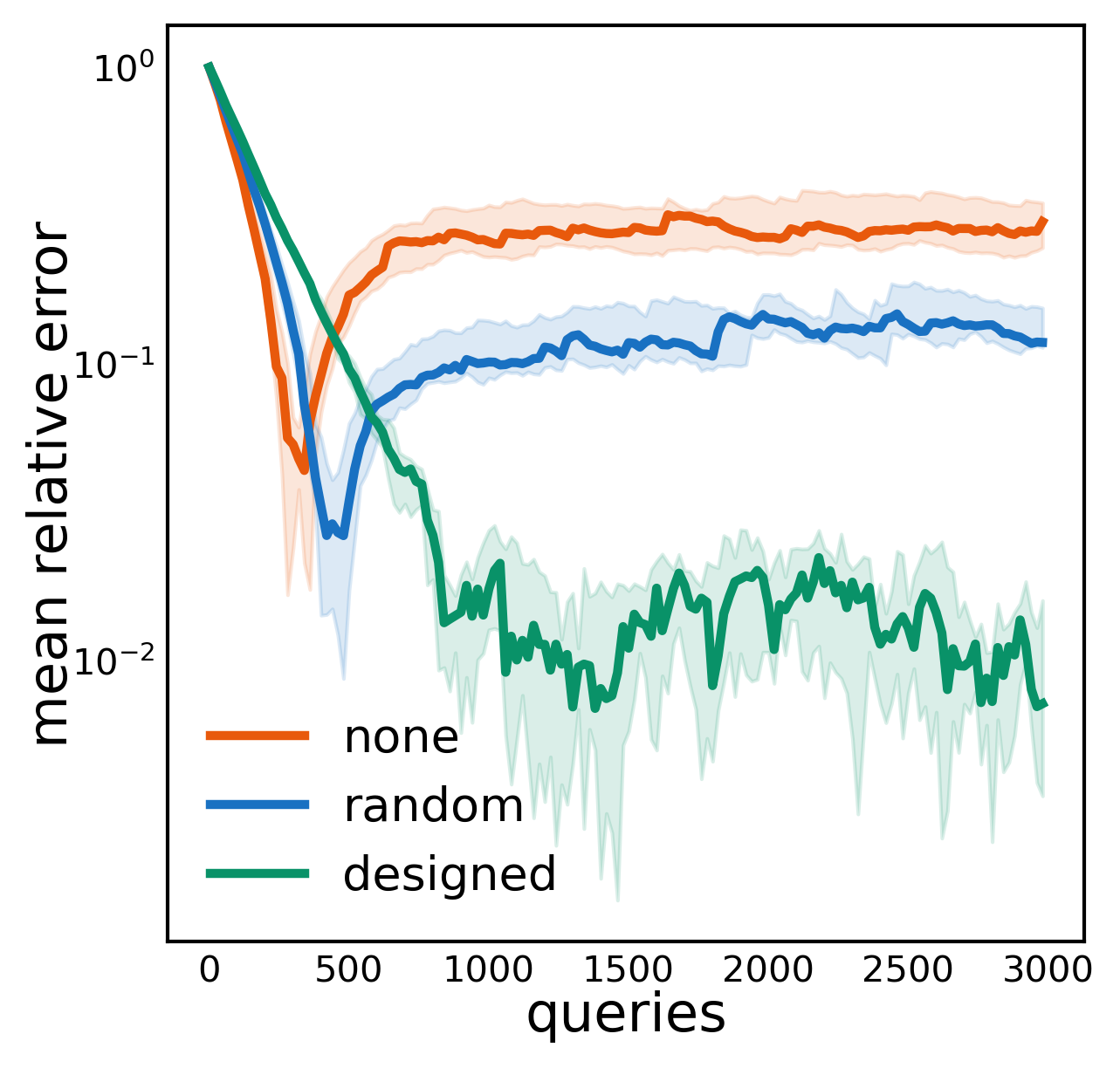}
		\vspace{-5.7px}
		\caption{Truth under uncertainty.}
		\label{fig:mani}
	\end{subfigure}
	\caption{Simulation results.}\label{fig:sim}
\end{figure}

\paragraph{Real-world network}
We further evaluate \poll on the complete DBLP coauthorship network \citep{yang2012Definingevaluating} with $n=317{,}080$ authors and $1{,}049{,}866$ edges.
Authors allocate research effort across the $m=10$ largest research communities (activities), and a funding agency sets field-level funding (prices); an author's private technology encodes its community memberships (see \cref{apx:sim-dblp} for the full setup).
For this experiment, we report the normalized welfare gap $\operatorname{Gap}_t \coloneqq (\operatorname{SW}(p_*)-\operatorname{SW}(p_{t})) / (\operatorname{SW}(p_*)-\operatorname{SW}(p_{0}))$, which is quadratic in the price's MRE, and measure computation by communication complexity, i.e., the number of pairwise exchanges.
As \cref{tbl:dblp} shows, \poll's advantage is amplified on this large real-world network: it reaches $\operatorname{Gap}\le 10^{-5}$ with $316\times$ to $1572\times$ fewer exchanges than the other distributed methods, and by the time Federated and Consensus produce their first updated iterate, \poll has already reduced the gap below $4\times 10^{-8}$.

\vspace{-0.08em}
\begin{table}[th]
	\caption{Communication complexity to reach a welfare gap threshold on the DBLP network. Median pairwise exchanges over three runs is reported; the deviation is under $10\%$ of the median in all cases. Exact computation is omitted due to infeasibility at this scale. Random-Walk does not reach either threshold within $10^9$ exchanges.}
	\label{tbl:dblp}
	\centering
	\begingroup\small
	\begin{tabular}{lllll}
		\toprule
		Gap threshold & \poll & Gossip & Federated & Consensus \\
		\midrule
		$10^{-5}$ & $\mathbf{1.20\times 10^{4}}$ & $3.80\times 10^{6}$ \scriptsize($316\times$) & $1.23\times 10^{7}$ \scriptsize($1023\times$) & $1.89\times 10^{7}$ \scriptsize($1572\times$) \\
		$10^{-7}$ & $\mathbf{5.25\times 10^{5}}$ & $5.01\times 10^{6}$ \scriptsize($10\times$)  & $1.78\times 10^{7}$ \scriptsize($34\times$)   & $2.52\times 10^{7}$ \scriptsize($48\times$)   \\
		\bottomrule
	\end{tabular}\endgroup
\end{table}

On the same network, we also stress-test the incentive guarantee with multi-dimensional reports and a continuum of ambiguity attitudes, from fully ambiguity-averse as in \cref{prop:knight} to fully optimistic.
The designed decoys are still able to elicit truthful reports for a nontrivial range of ambiguity attitudes; see \cref{apx:sim-dblp} for details.

\section{Conclusion}

We derived a polling algorithm whose computational, learning, and economic efficiency all flow from one centrality-based characterization of the planner's optimal intervention.
Some of our results rest on a linear-quadratic utility, and the payment-free incentive guarantee requires sufficient ambiguity aversion. Extending to general utilities and designing decoys for more optimistic agents remain open.

\begin{ack}
NSF Award Number 2039771. Directorate for Engineering, Division of Civil, Mechanical, and Manufacturing Innovation
\end{ack}

\bibliography{\subfix{main}}
\bibliographystyle{abbrvnat}

\newpage
\doparttoc
\faketableofcontents
\appendix
\part{Appendix} \parttoc
\newpage
\renewcommand{\style}{\origdisplaystyle}

\section{Additional numerical experiments}\label{apx:sim}

\subsection{Synthetic data}\label{apx:sim-syn}

\paragraph{Setup}
We focus on the simplified setting of $\Psi=U=I_{m}$.
The preferences are generated as follows: a base preference $\lambda_{a} \sim \operatorname{Unif}[2,4]$ is drawn independently for each activity $a$, and each agent's preference for activity $a$ is generated as $\lambda_{i,a} = \lambda_{a} + \epsilon_{i,a}$, with $\epsilon_{i,a} \sim \operatorname{Unif}[-\sigma_{\A},+\sigma_{\A}]$. Then the preference heterogeneity is of scale $\sigma_{\A}^{2}$.
All experiments are averaged over 10 runs with different random seeds, and the interquartile range is plotted for all curves.
All experiments are implemented in Python and run on a MacBook Pro with Apple M1 Pro chip and 16GB RAM.

\paragraph{Computational efficiency}
This experiment inspects how the operation complexity of different methods scales with the system size $n$.
We focus on an activity space dimension $m=10$ and three random graph models: directed hierarchical network (\cref{fig:scale}), Erd\H{o}s--R\'{e}nyi (ER) model (\cref{fig:scale-er}), and approximately regular graph (\cref{fig:scale-reg}).
The directed hierarchical network is generated as a $k$-ary tree with leaf nodes randomly connected with each other to ensure connectivity.
The ER model is generated with edge probability $k /n$.
The approximately regular graph starts with a $k$-regular graph, and then each edge is rewired with probability 0.05.
For this experiment, we set $k=10$ and $\sigma_{\A}=0.05$.
$\delta$ and $\mu$ are calculated after the graph is generated to ensure the small network effect and sufficient regularization conditions.
The mean relative error (MRE) tolerance is set to $10^{-3}$ for all methods except the exact method.
For other distributed methods, the power iteration \citep{foster2001FasterKatz} is used to compute the Katz--Bonacich centrality, asychronously with the price updates.
We observe that among all scenarios, \poll is the only method whose operation complexity does not scale with $n$; other methods' complexities all have a polynomial dependence on $n$, with the exact method scaling the worst at $O(n^{3})$.

\begin{figure}[th]
	\centering
	\begin{subfigure}{0.34\textwidth}
		\centering
		\includegraphics[width=\linewidth]{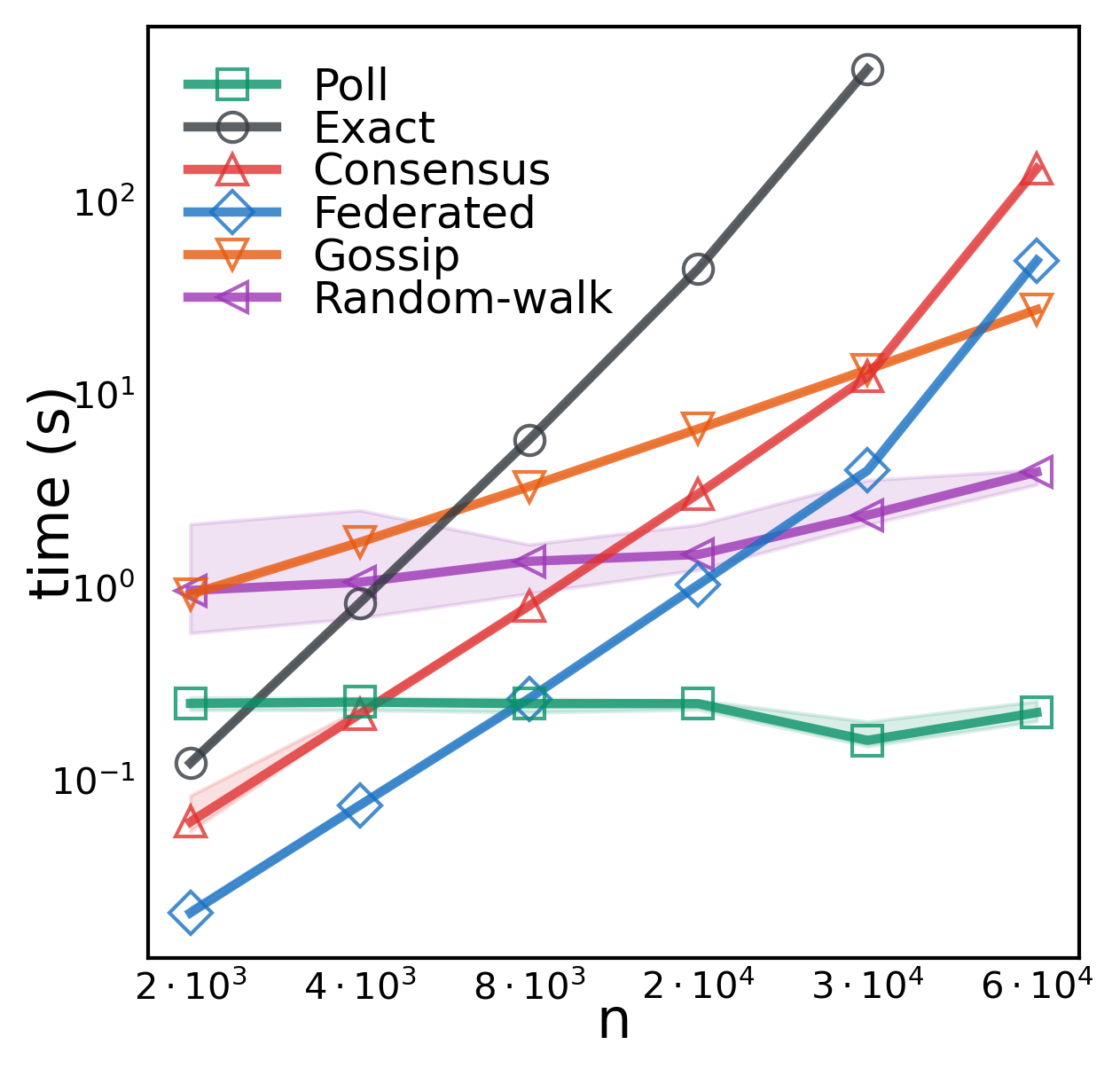}
		\caption{Erdős–Rényi}\label{fig:scale-er}
	\end{subfigure}
	\hfil
	\begin{subfigure}{0.34\textwidth}
		\centering
		\includegraphics[width=\linewidth]{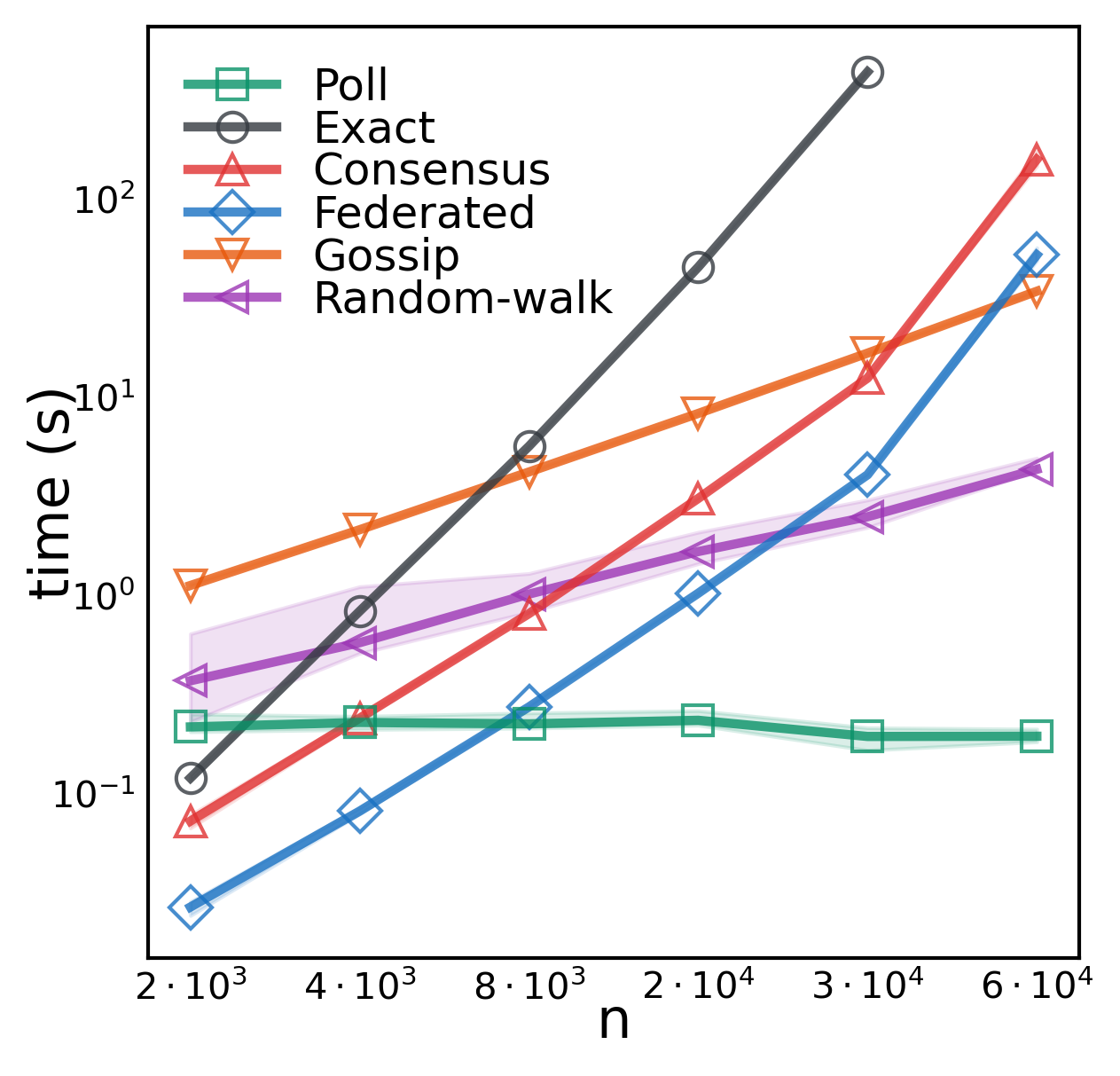}
		\caption{Approximately regular}\label{fig:scale-reg}
	\end{subfigure}
	\caption{Further scaling.}
\end{figure}

\paragraph{Learning efficiency}
This experiment inspects how different levels of preference and network heterogeneity affect \poll's query complexity.
We focus on an activity space dimension $m=10$ and a directed variation of the network in \citet{watts1998Collectivedynamics}:
start with a directed $k$-regular ring, with each node $i$ has out-edges to $(i+1)\bmod n, ..., (i+k/2)\bmod n$ and in-edges from $(i-k/2)\bmod n, ..., (i-1)\bmod n$, and then each edge is rewired with probability (w.p.) $p$; when rewiring, w.p. $1 /2$ the outgoing end is rewired to a $j$ w.p. proportional to $n+1-j$, and w.p. $1 /2$ the incoming end is rewired to a $j$ w.p. proportional to $j+1$.
Parameter $p$ controls the network heterogeneity, with $p=0$ giving a regular ring network and a larger $p$ giving a larger degree heterogeneity, centrality heterogeneity, and network asymmetry.
$k=10$ and the tolerance is set to $10^{-3}$ for this experiment.
The results reported in \cref{fig:het} confirm that the query complexity increases with both preference and network heterogeneity.
We remark that, however, $p$ is just a proxy for the network heterogeneity $\sigma_{\A}$, but not exactly equivalent.

\paragraph{Economic efficiency}
This experiment inspects how different signals affect \poll's performance facing ambiguity-averse strategic agents.
For simplicity, we focus on an activity space dimension $m=1$, and an approximately $10$-regular graph with $n=1000$ nodes (results on an ER graph $G(n,10/n)$ and on a directed Watts--Strogatz graph with nominal mean out-degree $10$ are similar and thus omitted).
\cref{prop:knight} argues that the decoy signals should be designed according to $\delta$ and $\kappa_{W}$ (controlled by $\sigma_{\A}$ in our setup and calculated after the preferences are realized).
Thus, we vary $\delta$ and $\sigma_{\A}$ to inspect the performance under different signals: no signal, random signals, and carefully designed signals.
Agents have a maxmin preference specified in \cref{prop:knight} and always have the incentive to misreport without signals.
A random signal is generated as $d_{\mathrm{r}} = (1-\delta)\delta^{-1}\kappa$ with $\ln \kappa \sim \mathrm{Unif}[-\ln\kappa_{W},\ln\kappa_{W}]$.
Therefore, w.p. $0.5\times 0.5$, two random signals satisfy the conditions of well-designed signals.
The results are reported in \cref{tab:mani}.
We observe that random signals induce ~50\% truthful reporting. This is because agents systematically want to understate their preferences, and thus a sufficiently strong one-directional signal is able to deter misreporting.
This universal tendency is due to a small network effect, where agents believe their self-contribution to their centrality dominates others' contribution (see \cref{sec:ce}).
As the network effect strengthens, the agent's belief becomes more balanced and the misreporting behavior becomes more uniform. However, a larger $\delta$ also increases the variance of the method and thus the query complexity.

\begin{table}[th]
	\caption{\poll performance and misreporting behaviors under different signals.
		The mean relative error (MRE) is averaged over the last 100 iterations of the total 3000 iterations.
		"T", "O", and "U" represent "truthful", "overstated", and "understated" respectively.}
	\label{tab:mani}
	\resizebox{\textwidth}{!}{
		\begin{tabular}{llllllll}\toprule
			\multirow{2}{*}{$\delta$} & \multirow{2}{*}{$\sigma_\A$} & \multicolumn{2}{c}{No signal} & \multicolumn{2}{c}{Random signals} & \multicolumn{2}{c}{Designed signals}                                              \\\cmidrule(lr){3-4}\cmidrule(lr){5-6}\cmidrule(lr){7-8}
			                          &                              & MRE                           & T / O / U (\%)                     & MRE                                  & T / O / U (\%) & MRE      & T / O / U (\%) \\\midrule
			\multirow{5}{*}{0.01}     & 1.0                          & 2.22e-01                      & 0 / 0 / 100                        & 9.92e-02                             & 59 / 0 / 41    & 1.75e-03 & 100 / 0 / 0    \\
			                          & 1.5                          & 3.77e-01                      & 0 / 0 / 100                        & 1.69e-01                             & 59 / 0 / 41    & 2.26e-03 & 100 / 0 / 0    \\
			                          & 2.0                          & 5.93e-01                      & 0 / 0 / 100                        & 2.67e-01                             & 58 / 0 / 42    & 2.94e-03 & 100 / 0 / 0    \\
			                          & 2.5                          & 9.43e-01                      & 0 / 0 / 100                        & 4.23e-01                             & 57 / 0 / 43    & 3.83e-03 & 100 / 0 / 0    \\
			                          & 3.0                          & 2.27e+00                      & 0 / 0 / 100                        & 9.50e-01                             & 56 / 0 / 44    & 5.71e-03 & 100 / 0 / 0    \\\midrule
			\multirow{5}{*}{0.03}     & 1.0                          & 1.16e-01                      & 0 / 3 / 97                         & 5.19e-02                             & 59 / 1 / 40    & 1.22e-02 & 100 / 0 / 0    \\
			                          & 1.5                          & 2.02e-01                      & 0 / 2 / 97                         & 9.08e-02                             & 58 / 1 / 41    & 1.20e-02 & 100 / 0 / 0    \\
			                          & 2.0                          & 3.19e-01                      & 0 / 2 / 98                         & 1.44e-01                             & 57 / 1 / 41    & 1.18e-02 & 100 / 0 / 0    \\
			                          & 2.5                          & 5.04e-01                      & 0 / 2 / 98                         & 2.29e-01                             & 56 / 2 / 42    & 1.14e-02 & 100 / 0 / 0    \\
			                          & 3.0                          & 1.09e+00                      & 0 / 2 / 98                         & 4.85e-01                             & 55 / 2 / 43    & 1.14e-02 & 100 / 0 / 0    \\\midrule
			\multirow{5}{*}{0.05}     & 1.0                          & 8.60e-02                      & 0 / 9 / 91                         & 5.01e-02                             & 59 / 4 / 37    & 8.94e-02 & 100 / 0 / 0    \\
			                          & 1.5                          & 1.30e-01                      & 0 / 10 / 90                        & 4.75e-02                             & 57 / 5 / 38    & 9.40e-02 & 100 / 0 / 0    \\
			                          & 2.0                          & 1.97e-01                      & 0 / 10 / 90                        & 5.26e-02                             & 55 / 6 / 39    & 9.94e-02 & 100 / 0 / 0    \\
			                          & 2.5                          & 2.91e-01                      & 0 / 11 / 89                        & 7.42e-02                             & 53 / 7 / 40    & 1.06e-01 & 100 / 0 / 0    \\
			                          & 3.0                          & 5.17e-01                      & 0 / 11 / 89                        & 1.70e-01                             & 50 / 9 / 41    & 1.16e-01 & 100 / 0 / 0    																																	\\\bottomrule
		\end{tabular}}
\end{table}

\subsection{Real-world network}\label{apx:sim-dblp}

\paragraph{Setup}
We use the complete DBLP coauthorship network \citep{yang2012Definingevaluating}, which contains $n=317{,}080$ authors and $1{,}049{,}866$ undirected coauthorship edges, with an average degree of $6.62$.
Authors are agents, and the $m=10$ largest ground-truth research communities $\mathcal{C}_{1},\dots,\mathcal{C}_{m}$ are activities.
Agent $i$ chooses its research effort $s_{i}\in\R^{m}$ across these communities, while a funding agency (the planner) sets a common vector of field-level funding $p\in\R^{m}_{\ge 0}$.
We model agent $i$'s private technology as
\[
	W_{i} = I_{m} + \operatorname{diag}\bigl(\mathbf{1}\{i\notin\mathcal{C}_{a}\}\bigr),
\]
so that membership in a community represents a lower effort cost, or equivalently a higher preference, for that field; in particular, $\kappa_{W}=2$.
The resulting utility is exactly \cref{eq:util}.
For intervention, we set $\Psi=I_{m}$, $\delta = \bigl(5\max_{i,a}\lambda_{i,a}[G\lambda_{a}^{-1}]_{i}\bigr)^{-1}$ with $\lambda_{a}^{-1}\coloneqq(\lambda_{i,a}^{-1})_{i\in\X}$, $\mu = 25\max_{a}\phi_{a}$, and $p_{0}\sim\operatorname{Unif}[0,1]^{m}$.
All modeling choices were fixed before implementing the methods or running the experiments.

\paragraph{Metrics}
We report the normalized social welfare gap
\[
	\operatorname{Gap}_{t} = \frac{\operatorname{SW}(p_*)-\operatorname{SW}(p_{t})}{\operatorname{SW}(p_*)-\operatorname{SW}(p_{0})},
\]
which is quadratic in the MRE $\|p_{t}-p_*\|_{2}/\|p_*\|_{2}$ used in \cref{apx:sim-syn}, since the social welfare is quadratic in the price.
Computation is measured by communication complexity, i.e., the number of bidirectional pairwise exchanges.
This is a close proxy for wall-clock time, as each pairwise exchange takes comparable time across methods, while more directly revealing the communication burden.
Exact computation is infeasible at this scale and is omitted.

\paragraph{Computational efficiency}
All baselines use sparse adjacency representations, and we sweep method-specific hyperparameters to select the best step size for each method.
We report tail-averaged iterates for \poll and Random-Walk, the last iterate for the deterministic Federated and Consensus methods, and the globally averaged last iterate for Gossip.
\cref{tbl:dblp,fig:dblp-budget} report the communication complexity required to reach a given welfare gap and the welfare gap attained with an increasing number of communication rounds, respectively.
At the threshold $10^{-5}$, \poll reduces communication by factors of $1572\times$, $1023\times$, and $316\times$ relative to Consensus, Federated, and Gossip, respectively; at the stricter threshold $10^{-7}$, the factors remain $48\times$, $34\times$, and $10\times$.
When the synchronous methods first produce an updated iterate (Federated at $1{,}366{,}946$ exchanges and Consensus at $2{,}099{,}732$ exchanges), \poll has already reduced the welfare gap to $3.15\times 10^{-8}$ and $2.51\times 10^{-8}$, respectively.

Federated eventually converges to machine precision through deterministic, synchronous full-population aggregation, whereas \poll delivers a near-optimal solution much sooner through stochastic, asynchronous updates.
The relevant practical comparison is therefore how much communication each method needs to attain a welfare gap already negligible to the planner.

\begin{figure}[th]
	\centering
	\includegraphics[width=0.6\textwidth]{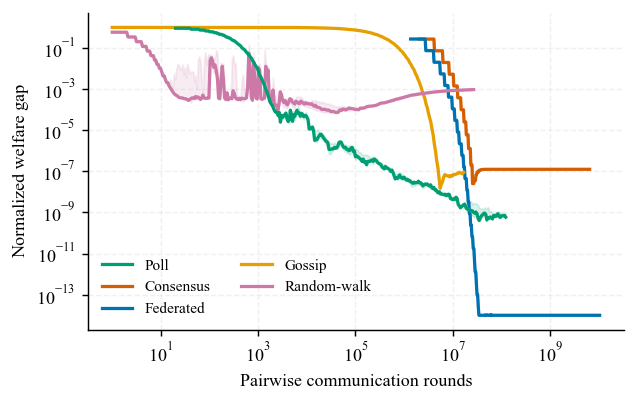}
	\caption{Welfare gap attained versus communication rounds on the DBLP network.}
	\label{fig:dblp-budget}
\end{figure}

\paragraph{Economic efficiency}
We extend the manipulation experiment of \cref{apx:sim-syn} to the DBLP network in two ways.
First, agents may manipulate their reports along all $m$ activities; following \cref{rmk:general-space}, \poll queries one activity at a time, which prevents an agent from coordinating its reports across activities and reduces each query to a one-dimensional manipulation problem.
Second, agents follow the $\theta$-maxmin preference (\cref{def:theta}) with $\theta\in[0,1]$, which interpolates between fully optimistic ($\theta=0$) and fully ambiguity-averse ($\theta=1$) attitudes.
In this experiment, each agent has only one profitable manipulation direction per activity: either to claim a community membership it does not have, or to disclaim one it has.
We run $500{,}000$ \poll iterations for each of $10$ random seeds and report the seed-median welfare gap and the truth rate (fraction of truthful reports).

Besides the designed and random signals of \cref{apx:sim-syn}, we consider \emph{polarized} signals.
Since each agent has a single profitable manipulation direction, the planner can construct two polarized signal sets: $\mathcal{C}_{-}$ deters claiming a membership the agent does not have but may encourage disclaiming one it has, and $\mathcal{C}_{+}$ has the opposite effect.
Without any knowledge of the queried agent, the planner selects each set with probability $1/2$.

\begin{figure}[th]
	\centering
	\includegraphics[width=0.7\textwidth]{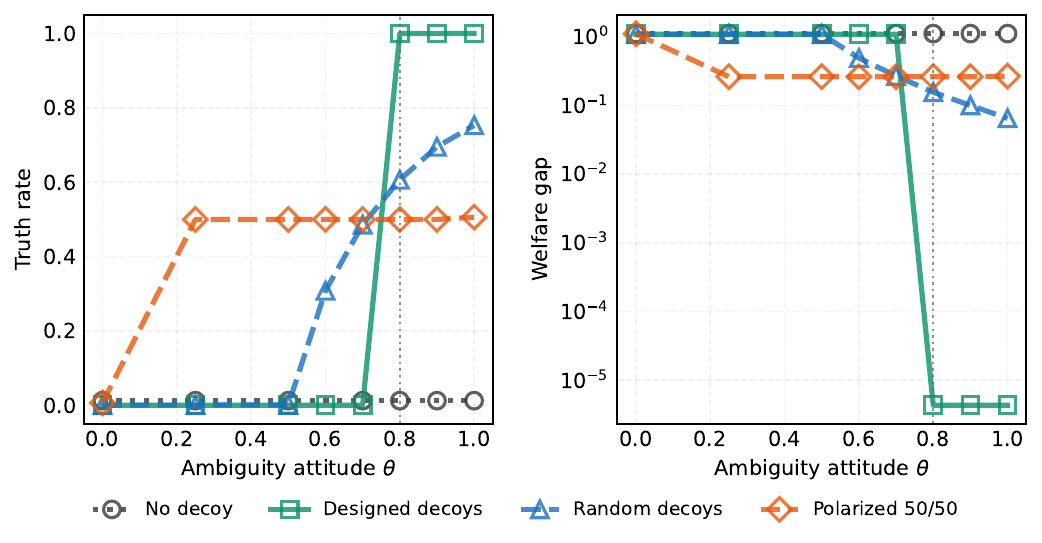}
	\caption{Overall truth rate and tail welfare gap on the DBLP network under $\theta$-maxmin agents, reported as seed medians over ten runs.}
	\label{fig:dblp-theta}
\end{figure}

\cref{fig:dblp-theta} shows that the designed signals eliminate manipulation for all $\theta\ge 0.8$, extending the maxmin guarantee of \cref{prop:knight} to a nontrivial range of ambiguity attitudes.
This matches \cref{prop:theta}: substituting $\kappa_{W}=2$, fixing $d_{-}=1$, the choice $d_{+}=\sqrt{2.5(1-\delta)^{2}/\delta^{2}-1}$ used here balances the two conditions and gives
\[
	\theta_{*} = 0.8 + \frac{6\delta^{2}}{25(1-\delta)^{2}-20\delta^{2}} \approx 0.8 \quad \text{for small } \delta.
\]
For more optimistic agents ($\theta<0.8$), the designed signals fail because they not only signal the worst case in both manipulation directions but also amplify the best case, which an optimistic agent exploits.
Random signals partially avoid this and retain non-vacuous deterrence: since each agent has a single profitable manipulation direction, two random signals may simultaneously encourage truth and discourage the profitable misreport.
Polarized signals make the reporting behavior insensitive to the ambiguity attitude, as their effect depends only on whether the selected signal set matches the agent's profitable manipulation direction.
As a compromise, polarized signals outperform uniform randomization in the optimistic regime ($\theta\le 0.7$) but underperform it in the pessimistic regime ($\theta\ge 0.8$).
Both randomizations are alternative distributions over the decoy signals of \cref{prop:knight} rather than standalone mechanisms, illustrating how the same design principle accommodates different ambiguity attitudes.

\section{General problem setup}\label{apx:setup}

\begin{table}[th]
	\caption{Notation.}
	\label{tbl:nota}
	\centering
	\resizebox{1\textwidth}{!}{%
		\begin{tabular}{cp{0.38\linewidth}cp{0.38\linewidth}}
			\bf Symbol        & \bf Description                          &
			\bf Symbol        & \bf Description                            \\
			\midrule
						$a$               & activity index                           &
			$\alpha$          & step size                                  \\
			$B$               & random walk batch size                   &
						$\beta$           & importance sampling exponent               \\
						$\mathcal{C}$     & uncertainty set                          &
			$\chi$            & Byzantine audit report                     \\
			$D$               & degree matrix                            &
			$\delta$          & network effect strength                    \\
			$d$               & degree product                           &
						$\epsilon$        & error                                      \\
			$\mathcal{F}$     & filtration                               &
			$\gamma$          & power law exponent                         \\
			$f,g,h$           & function                                 &
			$\iota$           & small probability                          \\
			$G$               & network adjacency matrix                 &
						$\kappa$          & condition number                           \\
			$i,j$             & agent index                              &
			$\lambda,\Lambda$ & eigenvalue(s) of activity kernel           \\
			$K$               & Katz--Bonacich centrality matrix         &
			$\mu$             & regularization strength                    \\
			$k$               & walk length                              &
			$\nu$             & regularization strength inverse            \\
			$l$               & walk step index                          &
			$\Phi$            & utility/social welfare kernel              \\
			$\A$              & activity space                           &
			$\phi$            & utility/social welfare kernel eigenvalue   \\
			$M$               & average preference                       &
			$\Psi$            & planner's activity kernel                  \\
			$m$               & number of activities                     &
			$\rho$            & misreport factor                           \\
			$\X$              & agent space                              &

			$\sigma^{2}$      & variance                                   \\
			$n$               & number of agents                         &
			$\tau,t$          & time step                                  \\
						$\mathcal{P}$     & belief                                   &
			$\xi$             & random walk                                \\
			$p,q$             & price                                      \\
						$R$               & regret                                     \\
			$r$               & correlation function                       \\
			$\mathcal{S}$     & strategy space                             \\
			$s$               & strategy/effort                            \\
						$U$               & unitary eigenbasis                         \\
			$u$               & utility                                    \\
			$v$               & payment                                    \\
			$W$               & activity kernel/technology                 \\
						$Z$               & random element                             \\
								\end{tabular}%
	}
\end{table}

This section extends \cref{sec:setup} to consider general multi-activity network games (MANGs) and derives their Nash equilibrium and optimal intervention.
We also summarize the notation used in the main text and appendix in \cref{tbl:nota}.

We consider a collection of agents (agent space, not necessarily finite) $\X$ and a collection of activities (activity space, not necessarily finite) $\A$.
In an MANG, each agent $i \in \X$ specifies its non-negative effort level on each activity $a \in \A$, giving a strategy profile $s$ in the strategy space $\mathcal{S} \subset \R^{\X \times \A}$.
The general form of agent $i$'s utility given strategy $s \in \S$ is given by
\[
	u_i(p,s) = \left< p(i,\cdot ), s(i,\cdot ) \right>_{\A} + \left< s(i,\cdot ), (Qs)(i,\cdot) \right>_{\A}
	,\]
where $p \in \R_{\ge 0}^ { \X \times \A }$ is the marginal price,
$Q \in \mathcal{L}(\X \times \A)$ is a linear operator that specifies the interaction between agents and activities, and $\left< \cdot, \cdot \right>_{\A}$ is the inner product over the activity space.
Thus, a quadratic multi-activity network game is specified by tuple $(\X, \A, \S, p, Q)$.
For a function $f$ on $\X\times \A$, we sometimes use the notation $f_{i,a} = f(i,a)$ for brevity.

\begin{remark}[General agent space and coaltion]\label{rmk:coalition}
	Although having a similar form, MANG is not equivalent to a network game on the general agent space $\X \times \A$, where each agent specifics a scalar effort $s_{i,a}$ and gets a payoff $u_{i,a}(p,s) = p_{i,a} s_{i,a} + s_{i,a} (Ms)_{i,a}$. Because in the latter case, calculating the best response involves competition between different activities for each $i\in \X$, while in a MANG, competition only happens between different agents.
	In other words, if we treat each agent-activity pair as a player in MANG, then fully cooperative \emph{coalitions} will be formed between $\{ (i,a) \}_{a\in\A}$ for each $i\in \X$.
	This viewpoint also helps us tackle strategic manipulation in \cref{apx:knight} where we need to break the potential coalition.
	In summary, the distinction between the agent space and the activity space in MANGs is crucial.
\end{remark}

\subsection{Nash equilibrium}

To calculate the best response, we decompose $Q$ as
\[
	Q((i,a),(i',a')) = \underbrace{Q((i,a),(i',a')) \mathbbm{1}\left\{ i=i' \right\}}_{{Q}^{\mathrm{I,n}} } + \underbrace{Q((i,a),(i',a')) \mathbbm{1}\left\{ i\neq i' \right\}}_{Q^{\mathrm{C}}}.
\]
That is, ${Q}^{\mathrm{I,n}}$, which is not necessarily symmetric, consists of the diagonal ``blocks'' capturing the idiosyncratic effects, while $Q^{\mathrm{C}}$ consists of the off-diagonal ``blocks'' capturing the network effects.
Then, we have $\frac{\partial (Q^{\mathrm{C}}s)_{i}}{\partial s_{i}} = 0$ and $\frac{\partial ({Q}^{\mathrm{I,n}}s)_{i}}{\partial s_{i}} = Q^{\mathrm{I,n}}$.
Define $Q^{\mathrm{I}} = -(Q^{\mathrm{I,n}} + (Q^{\mathrm{I,n}})^{H})$, where $H$ is the Hermitian transpose; $Q^{\mathrm{I}}$ is self-adjoint.
We generalize the small network effect assumption in \cref{assump:net}:

\begin{assumption}[Concave utility and small network effect] \label{assump:net-gen}
	$Q^{\mathrm{I}}$ is positive definite so the utility is concave in $s_{i}$.
	The network effect is sufficiently small such that $Q^{\mathrm{I}}\succ \frac{1}{2}(Q^{\mathrm{C}}+(Q^{\mathrm{C}})^{H})$ and thus $Q^{\mathrm{I}}-Q^{\mathrm{C}}$ is invertible.
\end{assumption}

Therefore, for an unbounded strategy space $\S = \R^{\X \times \A}$, the first-order condition (FOC) gives a unique Nash equilibrium under \cref{assump:net-gen}:
\[\label{eq:ne-general}
	\frac{\partial u_{i}}{\partial s_i} = p_{i} + (({Q}^{\mathrm{I,n}}+({Q}^{\mathrm{I,n}})^{H})s)_{i} + (Q^{\mathrm{C}} s)_{i} = 0
	\implies s^{*}(p) = (Q^{\mathrm{I}}-Q^{\mathrm{C}})^{-1} p
	.\]

\subsection{Specific structures}

When the agent and action spaces are finite with $|\X|=n$ and $|\A|=m$ and the game kernel $Q$ has a tensor (Kronecker) structure: $Q = (-\frac{1}{2}I_{n}+\delta G) \otimes W$, where $G\in\R^{n \times n}$ is the adjacency matrix of an agent network without self-loop, $\delta>0$ is a small constant capturing the network effect strength, and $W\in\R^{m \times m}$ is the symmetric activity interaction matrix.

This tensor structure indicates that all agents share the same activity interaction kernel $W$, and all activities share the same agent interaction network $G$.
Then, the utility function becomes
\[\label{eq:utility-tensor}
	u_{i}(p,s) = \sum_{a\in\A} p_{i,a} s_{i,a} - \frac{1}{2}\sum_{a,a
		'\in\A} W_{a,a'} s_{i,a}s_{i,a'}
	+ \delta\sum_{a,a'\in\A, i'\in\X} W_{a,a'} G_{i,i'} s_{i,a} s_{i',a'}
	.\]
This reduces to the MANGs considered in \citet{chen2018Multipleactivities,kor2023Multiactivityinfluence,parasnis2024OptimalInterventions}. However, we will see later that if we consider a non-discriminatory intervention, the tensor structure will obscure the network effect in intervention.

As motivated in \cref{sec:intro,apx:micro}, it is more realistic to model agents having heterogeneous \textit{technology} $\{ W_{i} \}_{i\in\X}$.
Modifying \cref{eq:utility-tensor}, we have
\[
	u_{i}(p,s) =& \sum_{a\in\A} p_{i,a} s_{i,a} - \frac{1}{2}\sum_{a,a'\in\A} W_{i,a,a'} s_{i,a}s_{i,a'}
	+ \delta\sum_{a,a'\in\A, i'\in\X} W_{i,a,a'} G_{i,i'} s_{i,a} s_{i',a'} \\
	=& \sum_{a\in\A} p_{i,a} s_{i,a} - \sum_{a,a'\in\A}W_{i,a,a'}\left(\frac{1}{2}s_{i,a}s_{i,a'} - \delta\sum_{i'\in\X} G_{i,i'} s_{i,a} s_{i',a'}\right)
	.\]
Concisely, we have
\[\label{eq:game-kernel}
	Q = \operatorname{blkdiag}(W_{i}) ((-\tfrac{1}{2}I_{n}+ \delta G) \otimes I_{m}) 	.\]
Notably, $Q$ no longer has a tensor structure on $\X \times \A$.
We remark that in this case, even if $W_{i}$ and $G$ are symmetric, $Q$ may not be symmetric.
Plugging \cref{eq:game-kernel} into \cref{eq:ne-general} gives
\[\label{eq:ne}
	s^*(p) = (\operatorname{blkdiag}(W_{i})((I_{n}-\delta G))\otimes I_{m}))^{-1} (\mathbf{1}_{n}\otimes p)
	,\]
which is the one used in the main text and subsequent sections.

\subsection{Optimal intervention}\label{sec:intervention}

Suppose the agent space has a finite measure $|\X|=n$.
Given the closed-form expression \cref{eq:ne-general} of the Nash equilibrium, the social welfare function is also quadratic in $p$:
\[
	\int_{i\in\X} u_{i}(s^*,p)\d i =& \left< p,s^* \right>_{\X \times \A} - \frac{1}{2} \left< s^*, Q^{\mathrm{I}} s^* \right>_{\X \times \A} + \left< s^*, Q^{\mathrm{C}} s^* \right>_{\X \times \A}  \\
	=& \frac{1}{2} \left< s^*, Q^{\mathrm{I}} s^* \right>_{\X \times \A} \\
	=& \frac{1}{2} \left\langle p, (Q^{\mathrm{I}}-Q^{\mathrm{C}})^{-H} Q^{\mathrm{I}} (Q^{\mathrm{I}}-Q^{\mathrm{C}} )^{-1} p \right\rangle_{\X \times \A}
	,\]
where $\left< \cdot ,\cdot  \right>_{\X\times \A}$ is the inner product over the whole space $\X \times \A$.
Suppose the price is \textit{non-discriminatory}, i.e., all agents face the same price for each activity. With a slight abuse of notation, let $p\in\R_{\ge 0}^{\A}$ be this public price. Then,
\[
	\frac{1}{n}\int_{i\in\X} u_{i}(s^*,p)\d i
	=& \frac{1}{2} \left\langle \mathbf{1}_{\X} \otimes p, \frac{1}{n}(Q^{\mathrm{I}}-Q^{\mathrm{C}})^{-H} Q^{\mathrm{I}} (Q^{\mathrm{I}}-Q^{\mathrm{C}} )^{-1} (\mathbf{1}_{\X} \otimes p) \right\rangle_{\X \times \A} \\
	=& \frac{1}{2} \left\langle (\mathbf{1}_{\X} \otimes I_{\A}) p, \frac{1}{n}(Q^{\mathrm{I}}-Q^{\mathrm{C}})^{-H} Q^{\mathrm{I}} (Q^{\mathrm{I}}-Q^{\mathrm{C}} )^{-1}(\mathbf{1}_{\X} \otimes I_{\A}) p \right\rangle_{\X \times \A}\\
	=& \frac{1}{2} \biggl\langle p, \underbrace{\frac{1}{n}(\mathbf{1}_{\X} \otimes I_{\A})^{H} (Q^{\mathrm{I}}-Q^{\mathrm{C}})^{-H} Q^{\mathrm{I}} (Q^{\mathrm{I}}-Q^{\mathrm{C}} )^{-1}(\mathbf{1}_{\X} \otimes I_{\A})}_{\Phi} p \biggr\rangle_{\A} \label{eq:util-kernel}
	,\]
where $\mathbf{1}_{\X}$ is the identity element $\X$, $I_{\A}$ is the identity operator on $\A$, and $\mathbf{1}_{\X}\otimes I_{\A}: \A \to \X \times \A, p \mapsto \mathbf{1}_{\X} \otimes p$.
Plugging \cref{eq:game-kernel} into the expression of $\Phi$ gives \cref{eq:sw-kernel}.

The social planner proposes a non-discriminatory price $p$ to maximize social welfare with a regularization that encodes the planner's preferences $\Psi\in \mathcal{L}(\X\times \A)$ and ensures the objective is concave:
\[
	u_0 (p) = \frac{1}{2}\left< p, \Phi p \right>_{\A} - \frac{\mu}{2}\left< p-p_0, \Psi (p-p_0) \right>_{\A}
	,\]
where $p_0$ is the initial price and the decision variable $p$ is the modified price.
We assume $\Psi$ is self-adjoint and satisfies \cref{assump:reg}.
Then, the FOC gives the optimal intervention:
\[
	\frac{\mathrm{d} u_{0}}{\mathrm{d} p} = \Phi p - \mu \Psi (p-p_0) = 0 \implies
	p_* = \argmax_{p\in\R_{\ge 0}^{\A}} u_{0}(p) = (\mu \Psi - \Phi)^{-1} \mu \Psi p_0
	,\]
proving \cref{eq:opt-p}.

We see that the network effect comes into the optimal intervention through $\Phi$. If we have a tensor structure as in \cref{eq:utility-tensor}, then
\[
	\Phi =& \frac{1}{n}(\mathbf{1}_{n}^{T} \otimes I_{m}) ((I_{n}-\delta G) \otimes W)^{-T} (I_{n}\otimes W) ((I_{n}-\delta G) \otimes W)^{-1}(\mathbf{1}_{n} \otimes I_{m}) \\
	=& \frac{1}{n}(\mathbf{1}_{n}^{T}(I_{n}-\delta G)^{-T} (I_{n}-\delta G)^{-1}\mathbf{1}) \cdot (I_{m} W^{-T} W W^{-1} I_{m})\\
	=& \frac{1}{n}\underbrace{\|(I_{n}-\delta G)^{-1}\mathbf{1}\|^{2}}_{c_{G}} \cdot W^{-1}
	.\]
Note that $c_{G}$ is a constant w.r.t. the activity index, meaning that the underlying network does not affect the planner's perception of the relative importance of different activities.
Specifically, the planner's objective becomes
\[
	\argmax_{p}u_{0}(p) = \argmax_{p} \frac{1}{2n}\left< p,W^{-1}p  \right> - \frac{\mu}{2 c_{G}}\left< p-p_0, \Psi (p-p_0) \right>
	.\]
That is, different networks only differ in the regularization strength, essentially decoupling intervention from the network structure.

On the other hand, with heterogeneous activity kernels, we show in \cref{sec:voting} that the optimal intervention for each activity depends on an \textit{activity-specific network}, capturing the intricate interplay between the optimal intervention and the network structure widely observed in real-world systems.

We provide some modeling motivations for the regularization term.

\begin{itemize}\tightlist
	\item The regularization encodes the planner's budget constraint, making the objective function correspond to a hard $L_2$ budget constraint, with $\mu$ being the Lagrange multiplier.
	\item It can incorporate sustainability considerations. As $\Psi$ determines the cost of intervening different activities, it can have a large Rayleigh quotient for unsustainable activities to suppress raising prices for them, thus promoting sustainable activities and discouraging unsustainable ones.
	\item An infinitely large Rayleigh quotient in some directions also implements a \emph{partial} intervention, where the planner only has the power to intervene on some activities but not others.
	\item The intervention problem is also a principal-agent Stackelberg game, and the planner's strategy space, the price space $\R^{m}_{+}$, is the dual space of an agent's strategy space.
	      Thus, having a similar interpretation as $W_{i}$ in the agent's utility function, $\Psi$ specifies a quadratic cost in exerting intervention.
	\item The planner can also have their prior belief about agents' private technologies encoded in $\Psi$. Suppose the initial price $p_0$ is set by the planner based on this prior belief, then the objective penalizes deviating from the initial price, with $\mu$ reflecting the ratio between the confidence of the planner and the evidence revealed by the agents' actions.
\end{itemize}

\section{Motivations}\label{apx:micro}

We provide more micro-founded motivations for our setup, with a focus on heterogeneous activity kernels, network spillover, and non-discriminatory price intervention.

\subsection{Public price and private technology}

In a networked market, extensive empirical evidence suggests heterogeneity, diversification, and specialization among agents. For example, different firms have different innate abilities \citep{zhang2024fedsarsa}, R\&D \citep{knott2008RDReturns}, or institutional environments \citep{bhaumik2025Multinationalstrategy}. This motivates us to consider a market with a public price and private technologies:
$$
	u_{i}(s,p) = \lang p, s_{i} \rang - c_{i}(s_{i},s_{-i}),
$$
where the cost function $c_{i}$ captures both the agent-specific technology and the network spillover from other agents.
In our setup with multiple activities, the technology is characterized by the activity interaction matrix $W_{i}$.

\subsection{Absorptive capacity}

In a networked market, the network spillover does not only depend on external connections but also the agent's internal capabilities.
For example, in the context of R\&D, \citet{cohen1989Innovationlearning,cohen1990Absorptivecapacity} suggest that a firm’s R\&D efforts contribute to its ability to recognize, assimilate and apply relevant information from external sources to generate new knowledge.
Such ability is termed "absorptive capacity" and is crucial for firms to benefit from knowledge spillovers in the network. On the other hand, R\&D requires investment and involves costs. Additionally, R\&D costs and gains should be aligned to capture the absorptive capacity.

Our quadratic cost function models the absorptive capacity, capturing both inter‐sectoral (across activities) and geographical (across agents) R\&D spillovers \citep{belderbos2020Intersectoralinternational}.
Specifically, the network spillover term is positive, capturing beneficial absorption;
for agent $i$, we use the same activity interaction kernel $W_{i}$ in both the quadratic cost term and network compensation term, capturing the alignment between R\&D costs and gains;
we assume agents' activity interaction kernels share the same eigenbasis, capturing the standardized interface for R\&D and knowledge exchange.

Another motivating example is the phenomenon of industry clusters. For example, a labor-intensive manufacturing firm may benefit more from being located in a cluster of similar firms, with easier access to specialized labor.
Our model captures this complementarity in technology.

\subsection{Non-discriminatory subsidy/tax}

Although discriminatory price intervention is generally required for optimal intervention, we impose a non-discriminatory price intervention for the following considerations:

\begin{itemize}
	\item Fairness, reasonable, and non-discriminatory principle in pricing \citep{duckworthFRANDfoe}.
	\item Practicality of intervention process and similar efficiency compared to discriminatory intervention \citep{heijmans2023UnravelingCoordination}.
	\item Preventing strategic behaviors such as arbitrage.
\end{itemize}

In our setup, a non-discriminatory intervention \textit{aggregates} the opinions of all agents and thus demonstrates various network effects, which may play less of a role in a personalized intervention.

\subsection{Semi-decentralized federated learning}
\label{app:fl-motivation}

We describe a stylized federated-learning interpretation of the model in
Section~2. The purpose of this example is not to model a full federated-learning
protocol, but to show how the mathematical objects in our network game arise naturally from
local data geometry, public (broadcast) server-side incentives, and peer-to-peer model
exchange.

Consider \(n\) clients. Client \(i\) holds a local dataset
\[
    X_i \in \mathbb{R}^{N_i \times m}, \qquad y_i \in \mathbb{R}^{N_i},
\]
where the rows of \(X_i\) are feature vectors and \(y_i\) contains labels. The
client fits a local linear model \(s_i \in \mathbb{R}^m\). Its empirical square
loss is
\[
    \frac{1}{2N_i}\|X_i s_i-y_i\|_2^2
    =
    \frac12 s_i^\top \Sigma_i s_i - b_i^\top s_i + \kappa_i,
    \qquad
    \Sigma_i := \frac{1}{N_i}X_i^\top X_i,\quad
    b_i := \frac{1}{N_i}X_i^\top y_i ,
\]
where $\kappa_i$ and $b_i$ are constants. Thus the local feature geometry enters through the Gram matrix
\(\Sigma_i\). If the features
are centered, then \(\Sigma_i\) is the empirical covariance matrix of client
\(i\)'s features. Without centering, $\Sigma_i$ is the uncentered second-moment matrix,
which means its eigenvalues measure feature energy along the corresponding modes.

The server chooses a public intervention \(p\in\mathbb{R}^m\), common to all
clients, which can be interpreted as a reward, subsidy, or global first-order
signal assigned to feature directions. To isolate the effect of this public
intervention, we suppress the client-specific label term \(b_i\), or equivalently
absorb it into the baseline price \(p_0\) and treat fluctuations in \(b_i\) as
exogenous noise. Retaining \(b_i\) would add affine terms to the equilibrium but
would not change the role of \(W_i\) as the private local geometry that shapes
the client's response to the public signal.

Clients also exchange their local ML models over a communication graph. Let \(G_{ij}\ge 0\)
measure the influence of client \(j\)'s model on client \(i\). A semi-decentralized
federated-learning objective for client \(i\) can then be written as the utility
\[
    u_i(s,p)
    =
    p^\top s_i
    -
    \frac12 s_i^\top W_i s_i
    +
    \delta \sum_{j\neq i} G_{ij}\, s_i^\top W_i s_j .
\]
Equivalently, client \(i\) minimizes a local quadratic loss regularized by a
peer-alignment reward. Indeed,
\[
    s_i^\top W_i s_j
    =
    \frac{1}{N_i}\langle X_i s_i, X_i s_j\rangle,
\]
which means that the spillover term rewards client \(i\) when its predictions align with the
predictions of neighboring clients, evaluated on client \(i\)'s own local feature
distribution $W_i$. The matrix \(W_i\) therefore plays two roles: it penalizes local
model complexity and determines how strongly client \(i\) can absorb useful
information from neighboring models.

The first-order condition for client \(i\)'s best response is
\[
    p - W_i s_i + \delta \sum_{j\neq i} G_{ij} W_i s_j = 0,
\]
or
\[
    s_i
    =
    W_i^{-1}p
    +
    \delta \sum_{j\neq i}G_{ij}s_j .
\]
Thus the public signal \(p\) is filtered through the inverse local geometry
\(W_i^{-1}\). If \(u\) is an eigenvector of \(W_i\) with eigenvalue
\(\lambda_{i,u}\), then an isolated client responds to the component of \(p\)
along \(u\) with strength proportional to \(\lambda_{i,u}^{-1}\). In this sense,
\(\lambda_{i,u}^{-1}\) is the client's preference for that feature mode. At the
same time, the network term makes neighboring models complementary: effort by
neighboring clients in the same mode, where ``effort'' is to be interpreted as local ML models,  increases client \(i\)'s marginal utility
through the factor \(W_i\). Hence peer exchange can partially correct the
client's local geometric bias by encouraging effort in directions that would
otherwise be dampened by the local Gram matrix.

This interpretation explains the informational structure studied in the paper.
The matrices \(W_i\) encode private statistical properties of the clients' data
and are therefore not observable by the server. The graph \(G\) also may not be known to the clients, because each client communicates with a small set of neighbors.
Nevertheless, the welfare-maximizing public intervention depends jointly on
these private geometries and on the network positions of the clients. A client
may also have an incentive to misrepresent its geometry in order to steer the
public reward toward directions---and thereby, towards ML models---that benefit it. The polling algorithm developed
in the main text addresses precisely this setting: the server learns the
welfare-maximizing public signal by querying clients and walking locally through
the communication graph, without requiring centralized access to the full network
or to the raw data geometries.

\textit{An alternative interpretation of the utility $u_i(s,p)$:} For a fixed magnitude $\|s_i\|$ of the model vector $s_i$, the agent knows
that $W_i$ penalizes $s_i$ more if it places more weight on feature directions
with higher variance. This is easily verified as follows: the second loss term
decomposes as
\begin{equation}
  s_i^\top W_i s_i
  = \|s_i\|^2 \sum_{\ell=1}^{m} \lambda_\ell \,\langle \hat{s}_i, v_\ell\rangle^2,
  \qquad \hat{s}_i := \frac{s_i}{\|s_i\|},
\end{equation}
where $v_\ell$ is the unit vector along the $\ell$-th feature direction, and
$\lambda_1 \ge \lambda_2 \ge \cdots \ge \lambda_m \ge 0$ are the eigenvalues of
$W_i$ (all non-negative since $W_i$ is PSD). Under centering, $\lambda_\ell$ is
proportional to the empirical variance along $v_\ell$. Thus the second loss
term is minimized at $s_i = v_m$, the feature mode with the smallest
eigenvalue (hence the lowest variance). Therefore, this term constrains $s_i$ to place
lower weight on higher-variance feature directions. Under the linear-model
assumption, this means the agent's bias underlying the second component of the
loss is that the labels are not as sensitive to high-variance features as they
are to low-variance ones. To correct for this bias, for every neighbor $j$ of
agent $i$, the $i$-th agent reduces the loss by
\begin{equation}
  \delta G_{ij}\, s_i^\top W_i s_j = \sum_{\ell} \alpha_\ell \beta_\ell \lambda_\ell,
\end{equation}
where $\alpha_\ell := \langle s_i, v_\ell\rangle$ (respectively
$\beta_\ell := \langle s_j, v_\ell\rangle$) is the alignment between the $i$-th
agent's model (respectively the $j$-th agent's model) and the $\ell$-th feature
direction. Thus, for any fixed model norm $\|s_j\|$, the higher the variance
$\lambda_\ell$ and the greater the value of $\beta_\ell$, which denotes the $j$-th agent's perceived
sensitivity of the labels to the $\ell$-th feature, the more strongly this
network-effect term increases $\alpha_\ell$, thereby correcting the weight that
model $s_i$ places on feature $\ell$ (which would have been smaller otherwise).

\section{Derivations of solution characterizations}\label{sec:calc}

\subsection{Bundle-induced networks}\label{apx:cen}

We formally present the proposition under the simplification in \cref{sec:voting}.

\begin{proposition}\label{prop:cen}
	The eigenvectors of $\Phi$ are the columns of $U$. Moreover, $\Phi$ decomposes as $\Phi = U \Lambda_{\Phi} U^{T}$, where $\Lambda_{\Phi}$ is diagonal and its $a$-th diagonal element is given by
	\[\label{eq:cen}
		\phi_{a} = \frac{1}{n}\sum_{i\in\X} \tilde{\cen}_{a,i}^{2} \cdot \lambda_{i,a}^{-1}
		,\]
	where $\tilde{\cen}_{a,i}$, which is the $i$-th entry of $\tilde{\cen}_{a}\coloneqq (I - \delta\tilde{G}_{a})^{-1} \bm{1}_{n}$, is the Katz--Bonacich centrality of agent $i$ in the interaction network induced by the activity bundle $a\in [m]$.
\end{proposition}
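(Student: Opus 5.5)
The plan is to work in the common eigenbasis $U$ and reduce everything to $m$ decoupled scalar problems, one per activity bundle. Since $\Psi=I_m$ in this simplified setting and $W_i=U\Lambda_iU^T$ for all $i$, we have $W=(I_n\otimes U)\,\Lambda\,(I_n\otimes U)^T$, where $\Lambda\coloneqq\operatorname{blkdiag}(\Lambda_i)$. Moreover, $(I_n-\delta G)\otimes I_m$ and $\mathbf{1}_n\otimes I_m$ commute with $I_n\otimes U$, in the sense that $((I_n-\delta G)\otimes I_m)(I_n\otimes U)=(I_n\otimes U)((I_n-\delta G)\otimes I_m)$ and $(I_n\otimes U)(\mathbf{1}_n\otimes I_m)=(\mathbf{1}_n\otimes I_m)U$.

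Substituting these into \cref{eq:sw-kernel} should give $\Phi=U\,\Lambda_\Phi\,U^T$ with
\[
\Lambda_\Phi=\tfrac{1}{n}(\mathbf{1}_n^T\otimes I_m)\Lambda^{-1}((I_n-\delta G^T)^{-1}\otimes I_m)\Lambda((I_n-\delta G)^{-1}\otimes I_m)\Lambda^{-1}(\mathbf{1}_n\otimes I_m).
\]
Every factor in this expression is a matrix whose $(i,j)$ blocks are diagonal $m\times m$ matrices. Such matrices form a commutative-block algebra, and after the perfect-shuffle permutation from $\R^{n}\otimes\R^m$ to $\R^m\otimes\R^n$ they become block diagonal over $a\in[m]$. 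Hence $\Lambda_\Phi$ is diagonal, which proves the first claim that the columns of $U$ are eigenvectors of $\Phi$. Its $a$-th entry is
\[
\phi_a=\tfrac{1}{n}\mathbf{1}_n^T L_a^{-1}(I-\delta G^T)^{-1}L_a(I-\delta G)^{-1}L_a^{-1}\mathbf{1}_n,
\]
where $L_a\coloneqq\operatorname{diag}(\lambda_{1,a},\dots,\lambda_{n,a})$.

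For the second claim, write $v_a\coloneqq(I-\delta G)^{-1}L_a^{-1}\mathbf{1}_n$. Then $\phi_a=\tfrac1n v_a^T L_a v_a$. The key identity is the similarity
\[
L_a(I-\delta G)^{-1}L_a^{-1}=(I-\delta L_aGL_a^{-1})^{-1}=(I-\delta\tilde G_a)^{-1},
\]
since $(\tilde G_a)_{ij}=G_{ij}\lambda_{i,a}/\lambda_{j,a}$. It follows that $L_a v_a=(I-\delta\tilde G_a)^{-1}\mathbf{1}_n=\tilde\cen_a$, so $v_a=L_a^{-1}\tilde\cen_a$. Therefore
\[
\phi_a=\tfrac1n\tilde\cen_a^TL_a^{-1}\tilde\cen_a=\tfrac1n\sum_i\tilde\cen_{a,i}^2\lambda_{i,a}^{-1},
\]
which is \cref{eq:cen}. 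Invertibility of $I-\delta\tilde G_a$ follows from \cref{assump:net}, because $\tilde G_a$ is similar to $G$ and so has the same spectrum.

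The main obstacle is the bookkeeping in the first step: justifying that the mixed Kronecker and block-diagonal product diagonalizes simultaneously. I would handle this cleanly with the commutation (perfect-shuffle) matrix $P$ satisfying $P(A\otimes B)P^T=B\otimes A$, noting that $P\Lambda P^T=\operatorname{blkdiag}_a(L_a)$. Alternatively, the same result follows from the general \cref{prop:cen-planner} by observing that $\tilde\cen_i=U\operatorname{diag}_a(\tilde\cen_{a,i})U^T$, because each block $\tilde G_{ij}=G_{ij}U\Lambda_i\Lambda_j^{-1}U^T$ is diagonalized by $U$.
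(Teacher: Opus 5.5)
Your proof is correct and follows essentially the same route as the paper. You first pull the common eigenbasis $U$ out of the kernel so that each bundle $a$ decouples into the scalar form $\frac1n \mathbf{1}_n^T L_a^{-1}(I-\delta G^T)^{-1}L_a(I-\delta G)^{-1}L_a^{-1}\mathbf{1}_n$, and then the similarity $L_a(I-\delta G)^{-1}L_a^{-1}=(I-\delta\tilde G_a)^{-1}$ identifies the bundle-induced centralities; the only difference is bookkeeping, since the paper multiplies the block products out explicitly to $U\bigl(\sum_{i,j,k}K_{ik}K_{ij}\Lambda_j^{-1}\Lambda_i\Lambda_k^{-1}\bigr)U^T$ with $K=(I-\delta G)^{-1}$, whereas you obtain the same decoupling via Kronecker commutation and the perfect shuffle.
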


\begin{proof}

	Let $K = (I-\delta G)^{-1}$. Note that $K \bm{1}$ is the Katz--Bonacich centrality vector of the agent network $G$.\footnote{Our notation specifies $G_{ij}=1$ if $y$ influences $i$, and thus $G$ is the transpose of the usual convention in the network literature. As a result, the Katz--Bonacich centrality here is higher for agents that receive more influence from others, which is consistent with the convention.}
	By \cref{eq:sw-kernel}, when agents share the same eigenbasis $U$ of their activity kernels $W_{i} = U\Lambda_{i}U^{T}$, $\Phi$ has the following transformation:
	\[
		n\Phi =& (W_1^{-1},\dots,W_n^{-1}) (K^{T} \otimes I_{m}) W (K \otimes I_{m})W^{-1} (\mathbf{1}_{n}\otimes I_{m}) \\
		=& U(\Lambda_{1}^{-1} U^T, \dots, \Lambda _{n}^{-1}U^{T}) (K^{T} \otimes I_{m}) W (K \otimes I_{m})W^{-1} (\mathbf{1}_{n}\otimes I_{m}) \\
		=& \textstyle U\left(  \sum_{j}K_{1j}\Lambda_{j}^{-1} U^T, \dots,\sum_{j}K_{nj} \Lambda _{j}^{-1}U^{T}  \right)  W (K \otimes I_{m})W^{-1} (\mathbf{1}_{n}\otimes I_{m}) \\
		=& \textstyle U\left(  \sum_{j}K_{1j}\Lambda_{j}^{-1}\Lambda _{1} U^T, \dots,\sum_{j}K_{nj} \Lambda _{j}^{-1}\Lambda _{n}U^{T}  \right)(K \otimes I_{m})W^{-1} (\mathbf{1}_{n}\otimes I_{m}) \\
		=& \textstyle U\left(  \sum_{i}\sum_{j}K_{i1}K_{ij}\Lambda_{j}^{-1}\Lambda _{i} U^T, \dots,\sum_{i}\sum_{j}K_{in}K_{ij} \Lambda _{j}^{-1}\Lambda _{i}U^{T}  \right)W^{-1} (\mathbf{1}_{n}\otimes I_{m}) \\
		=& \textstyle U  (\sum_{i,j,k}K_{ik}K_{ij}\Lambda_{j}^{-1}\Lambda _{i}\Lambda _{k}^{-1} )U^T \label{eq:phi}
		.\]
	Thus, $\Phi$ is also diagonalizable in the eigenbasis $U$; and we write $\Phi = U \Lambda_{\Phi} U^{T}$ with its $a$-th diagonal element given by
	\[
		\phi_{a} = \frac{1}{n}\sum_{i,j,k\in\X} K_{ij}K_{ik}\lambda_{j,a}^{-1}\lambda_{i,a}\lambda_{k,a}^{-1}, \quad \forall a=1,2,\dots,m
		.\]
	Define the vector $\lambda_{a}=(\lambda_{1,a},\dots,\lambda_{n,a})^{T}\in\R^{n}$ and $\lambda_{a}^{-1}$ as its element-wise inverse.
	Then, we can rewrite $\phi_{a}$ as
	\[\label{eq:cen-1}
		\phi_{a} = \frac{1}{n}\left< (K\lambda_{a}^{-1})\odot \lambda_{a}, \operatorname{diag}(\lambda_{a}^{-1} ) ((K\lambda_{a}^{-1} )\odot \lambda_{a}) \right>
		.\]

	Recall our construction of the bundle-specific network:
	\[
		\tilde{G}_{a} = \Lambda_{a} G \Lambda_{a}^{-1}
		.\]
	Its Katz--Bonacich centrality satisfies
	\[
		\tilde{\cen}_{a} =& (I-\delta \tilde{G}_{a})^{-1} \mathbf{1}_{n} \\
		=& (I - \delta \Lambda_{a}G \Lambda_{a}^{-1} )^{-1} \mathbf{1}_{n} \\
		=& \Lambda_{a}(I - \delta G  )^{-1} \Lambda_{a}^{-1} \mathbf{1}_{n} \\
		=& \Lambda_{a}K \lambda_{a}^{-1} \\
		=& K \lambda_{a}^{-1} \odot \lambda_{a}\label{eq:cen-2}
		.\]
	Combining \cref{eq:cen-1,eq:cen-2} gives the desired expression.
\end{proof}

\subsection{Planner-informed second-stage communication}\label{sec:planner}

This section discusses the path of generalizing \cref{prop:cen} to the more general \cref{prop:cen-planner}.
The simplifications in \cref{sec:voting} have two implications:
\begin{enumerate*}[label=(\roman*)]
	\item $\{ W_{i} \}_{i\in \X}$ share the same eigenbasis, so we can define bundle-specific networks, and
	\item $\Psi$ shares the same eigenbasis with $W_{i}$, so we directly get the optimal intervention for each activity bundle from the corresponding eigenvalue of $\Phi$ by $[U^{T}p_*]_{a} = (1 - \mu^{-1} \phi_{a})^{-1} [U^{T}p_0]_{a}$.
\end{enumerate*}
In words, different activity bundles induce different networks to determine the optimal intervention for the bundle independently with no communication between bundle-induced networks.
This is not possible when the agents and the planner do not share the same set of activity bundles.

We now model $\Psi$ as an arbitrary positive definite matrix that may not share the same eigenbasis as $\{ W_{i} \}_{i\in\X}$.
Similarly, we can interpret $\Psi$ as encoding the planner's \emph{preferences} across activities.
A general planner preference introduces a second stage of communication between bundle-induced networks.
Specifically, under \cref{assump:reg}, we can expand the optimal intervention as Neumann series:
\[\label{eq:opt-planner}
	p_* = \sum_{t=0}^{\infty} \mu ^{-t} (\Psi^{-1} \Phi)^{t} p_{0},
\]
which can be interpreted as the result of iterative \textit{planner-mediated communication between bundle-specific networks}, with the higher-level communication network being $\Psi^{-1}\Phi$.
This communication network becomes diagonal, once $\Psi$ shares the same eigenbasis as $\{ W_{i} \}$ and we apply an eigendecomposition to $\Phi$, reducing to the case of no communication between bundle-induced networks.
With a general $\Psi$, a activity bundle of the planner may partially align with multiple activity bundles of the agents, thus prompting communication between bundle-induced networks.
Intuitively, when the planner proposes a price change for one of its activity bundles, multiple agent activity bundles may be affected, and thus requiring different bundle-induced networks to coordinate to achieve the optimal intervention.
Formally, after being transformed by \(\Psi^{-1}\), agents' activity bundles are no longer orthogonal.

Combined with \cref{prop:cen-planner}, the intervention problem corresponds to a \emph{two-stage communication} scheme: agents first communicate within bundle-induced networks to form their
votes, and then the planner mediates the communication between bundle-specific networks to form the final intervention.

We further model the agent's activity kernels $\{ W_{i} \}_{i\in\X}$ as arbitrary positive definite matrices that may not share the same eigenbasis,
which invalidates bundle-induced networks in \cref{prop:cen} as agents no longer have a common set of activity bundles.
If we generalize the definition of a network to allow for matrix-valued edge weights, we can consider a single generalized network that also incorporates the planner's preference, similar to how an bundle-induced network incorporates agents' preferences.
By \cref{eq:opt-p}, we have \(\Psi ^{1 /2}p_* = (I - \mu^{-1} \tilde{\Phi})^{-1} (\Psi ^{1 /2}p_{0})\), where \(\tilde{\Phi} \coloneqq \Psi ^{-1 /2}\Phi \Psi ^{-1 /2}\in\R^{m\times m}\) is characterized in \cref{prop:cen-planner}.

We note that $\tilde{G}$ is \emph{planner-informed}, as it is modified by the planner's preference in addition to those of the agents.
If \(\Psi\) and $W_{i}$ share the same eigenbasis $U$, then \(\{\tilde{W}_{i}\}_{i\in\X}\) are simultaneously diagonalizable, and we can split \(\tilde{G}\) into \(m\) bundle-induced networks with scalar edge weights, reducing to \cref{prop:cen}. 
\subsection{Proof of \texorpdfstring{\cref{prop:cen-planner}}{Proposition 1}}

Recall that $\tilde{\Phi} = \Psi^{-1 /2}\Phi \Psi^{-1 /2}$.
Following \cref{eq:phi}, we have
\[
	\tilde{\Phi} = \textstyle \Psi^{-1 /2}  (\sum_{i,j,k}K_{ik}K_{ij}W_{j}^{-1}W _{i}W _{k}^{-1} )\Psi^{-1 /2}
	.\]
Our centrality measure has a ``dimensionlessness'' structure, which gives
\[
	\tilde{\Phi} = \textstyle \sum_{i,j,z}K_{ik}K_{ij} (\Psi^{-1 /2}W _{j}^{-1} \Psi^{-1 /2})(\Psi^{1 /2}W _{i}\Psi^{1 /2})(\Psi^{-1 /2} W _{k}^{-1} \Psi^{-1 /2}).
\]
Recall \(\tilde{W} _{i} = \Psi^{1/2}W_{i}\Psi^{1/2}\).
Then, following \cref{eq:cen-2}, we get
\[
	\tilde{\Phi} = \sum_{i,j,k}K_{ik}K_{ij} \tilde{W} _{j}^{-1}\tilde{W} _{i}\tilde{W} _{k}^{-1}
	= \sum_{i \in \X} \widetilde{\cen}_{i}^T \tilde{W}^{-1} _{i} \widetilde{\cen}_{i},
\]
where \(\widetilde{\cen}_{i}\) is the \(i\)-th block of the
Katz--Bonacich centrality of the modified network \(\tilde{G}\) with
matrix-valued edge weight $\tilde{G}_{ij} = G_{ij} \tilde{W} _{i}\tilde{W} _{j}^{-1}$.

\subsection{Proof of \texorpdfstring{\cref{prop:ce}}{Proposition 2}}

Following the derivation in \cref{apx:setup}, the utility of agent $i$ at the Nash equilibrium at price $p$ is
\[
	u_{i}(s^*(p)) =& \langle s^*_{i}(p), ( p + Qs^*(p) )_{i} \rangle\\
	=& \langle s^*_{i}(p), ( (Q^{\mathrm{I}}-Q^{\mathrm{C}})s^*(p)  + (-\tfrac{1}{2}Q^{\mathrm{I}}+Q^{\mathrm{C}})s^*(p) )_{i} \rangle\\
	=& \tfrac{1}{2}\langle s^*_{i}(p) , (Q^{\mathrm{I}} s^*(p))_{i} \rangle
	.\]
Note that $Q^{\mathrm{I}} = \operatorname{blkdiag}(W_{i})$ and $Q^{\mathrm{I}}-Q^{\mathrm{C}} = \operatorname{blkdiag}(W_{i})((I_{n}-\delta G)\otimes  I_{m})$. Recall that $K = (I-\delta G)^{-1}$.
Therefore,
\[
	u_{i}(s^*(p)) =& \tfrac{1}{2}\langle s^*_{i}(p) , W_{i} s^*_{i}(p)) \rangle \\
	=& \frac{1}{2} \langle p, \underbrace{\sum_{j,k}K_{ij}K_{ik}W_{j}^{-1}W_{i}W_{k}^{-1} }_{\Phi_{i}}p\rangle
	.\]
Then, when all agents share the same eigenbasis of activity kernel ($W_{i} = U \Lambda_{i} U^{T}$ for all $i\in\X$), we have
\[
	\Phi_{i} =& U\diag (\tilde{\cen}_{a,i}^{2}\lambda_{i,a}^{-1})U^{T}
	.\]
Thus, the $a$-th eigenvalue of $\Phi_{i}$ is $\phi_{i,a} = \tilde{\cen}_{a,i}^{2}\lambda_{i,a}^{-1}$.
We see that agent $i$'s utility kernel $\Phi_{i}$ has a similar form to the social welfare kernel $\Phi$ in \cref{prop:cen}, which is not a surprise since the social welfare is simply the aggregate of all agents' utilities.

Fixing a price, agents' equilibrium utilities are linear and monotonic in their utility kernel w.r.t. the Loewner order.
Therefore, we only need to know how $\phi_{i,a}$ changes w.r.t. the preference.
We have
\[\label{eq:ce-1}\ds
	\frac{\partial \ln \phi_{i,a}}{\partial \ln \lambda_{i,a}}
	= 2\frac{\partial \ln \tilde{\cen}_{a,i}}{\partial \ln \lambda_{i,a}} - 1
		.\]
Further,
\[\label{eq:ce-2}\ds
	\frac{\partial \ln \tilde{\cen}_{a,i}}{\partial \ln \lambda_{i,a}}
	= \frac{\lambda_{i,a}}{\tilde{\cen}_{a,i}} \frac{\partial \tilde{\cen}_{a,i}}{\partial \lambda_{i,a}}
	= \frac{\sum_{j\in\X} K_{ij} \frac{\lambda_{i,a}}{\lambda_{j,a}} - K_{ii} \frac{\lambda_{i,a}}{\lambda_{i,a}}}{\sum_{j\in\X}K_{ij}\frac{\lambda_{i,a}}{\lambda_{j,a}}}
	= 1 - \frac{K_{i i}}{\tilde{\cen}_{a,i}}
	,\]
where $K = (I-\delta G)^{-1}$ is the Katz--Bonacich centrality matrix of the original network.
Note that
\[
	K_{ij} \frac{\lambda_{i,a}}{\lambda_{j,a}} = (\diag(\lambda_{a})) K \diag(\lambda_{a}^{-1}))_{ij} = ((I - \delta \diag(\lambda_{a}) G \diag(\lambda_{a}^{-1}))^{-1})_{ij} = ((I - \delta \tilde{G}_{a})^{-1})_{ij} = \tilde{K}_{a,ij}.
\]
Thus, combining \cref{eq:ce-1,eq:ce-2} gives the desired result.

\section{Centrality simulations}\label{sec:exp}

This section provides a concrete example to illustrate the theoretical results derived in \cref{sec:voting}.
We consider a network of $n=7$ data centers. All centers engage in $m=2$ activities---\emph{compute} ($a_{1}$) and \emph{storage} ($a_{2}$)---and two activity patterns: \emph{archive} ($u_{1}=(1,1)^{T}/\sqrt{2}$) and \emph{retrieve} ($u_{2}=(1,-1)^{T}/\sqrt{2}$).
Specifically, the archival pattern means the center produces one unit of data and stores it; the retrieval pattern means the center retrieves one unit of data from storage and processes it.
The resulting activity interaction kernel basis is $U = \frac{1}{\sqrt{2}}\bigl(\begin{smallmatrix}
			1 & 1 \\
			1 & -1
		\end{smallmatrix}\bigr)$.

We categorize the data centers into three distinct tiers based on their role in the data lifecycle:

\begin{enumerate}
	\item \textbf{Provider-Side centers (Centers 1--2):} Center 1 is a cloud storage hub focusing on the archival pattern ($\lambda_{1,1} = 1, \lambda_{1,2} = 10$). Center 2 is a data processing hub focusing on the retrieval pattern ($\lambda_{2,1} = 10, \lambda_{2,2} = 1$). Both act as sources of influence in the network.
	\item \textbf{Intermediary centers (Centers 3--5):} These centers act as neutral bridges, balancing equally between archival and retrieval tasks. They possess moderate preferences for both patterns: $\lambda_{x,1} = 5, \lambda_{x,2} = 5$ for $x \in \{3,4,5\}$.
	\item \textbf{Client-Side centers (Centers 6--7):} These centers are at the edge of the network and are highly specialized. Center 6 serves user queries (high preference for retrieval: $\lambda_{6,1} = 10, \lambda_{6,2} = 1$). Center 7 serves data backups (high preference for archival: $\lambda_{7,1} = 1, \lambda_{7,2} = 20$).
\end{enumerate}

The network structure is given by the following attenuated adjacency matrix $G$, representing a directed cascade from providers to clients:
\[
	G = 0.1 \cdot \begin{pmatrix}
		0 & 0 & 0 & 0 & 0 & 0 & 0 \\
		0 & 0 & 0 & 0 & 0 & 0 & 0 \\
		1 & 1 & 0 & 0 & 0 & 0 & 0 \\
		1 & 1 & 1 & 0 & 0 & 0 & 0 \\
		1 & 1 & 1 & 1 & 0 & 0 & 0 \\
		0 & 0 & 1 & 1 & 1 & 0 & 0 \\
		0 & 0 & 1 & 1 & 1 & 0 & 0
	\end{pmatrix}
	.\]

We set the initial price $p_{0}=(2,0)^{T}$ and consider the optimal non-discriminatory intervention under an $L_{2}$ constraint with budget $B=1$.

\begin{figure}[th]
	\centering
	\includegraphics[width=0.9\textwidth]{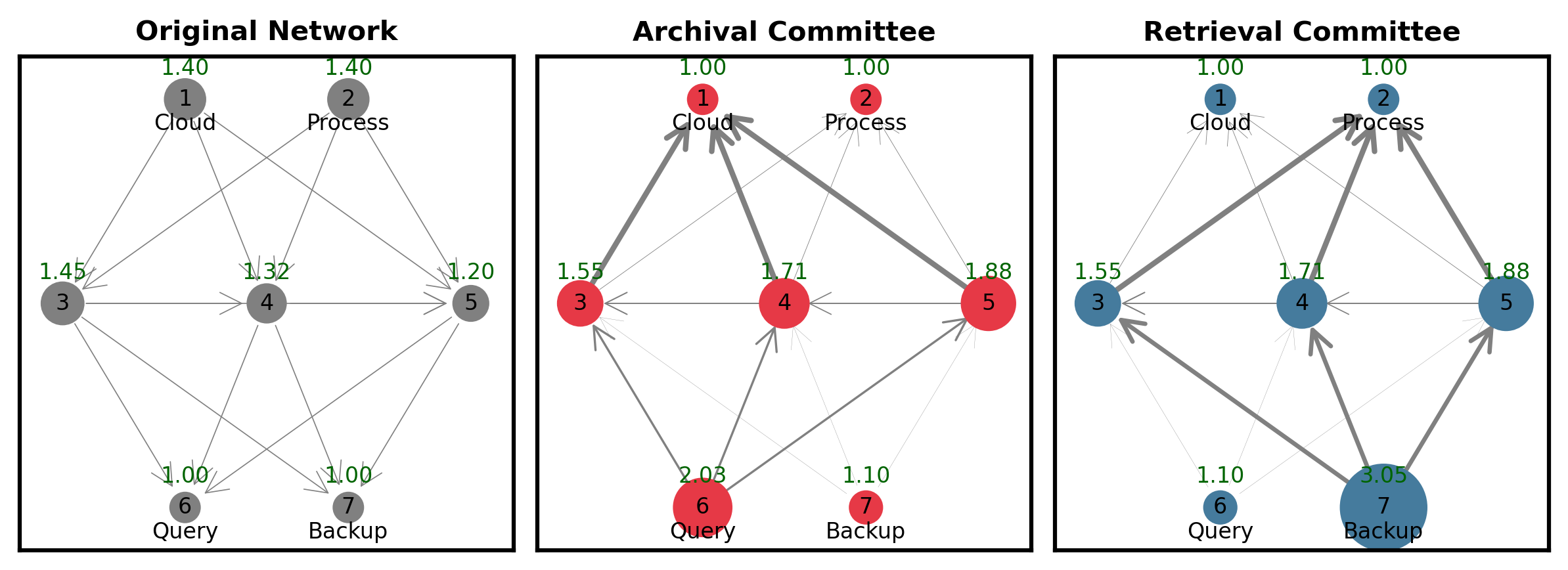}
	\caption{Original network and activity-specific networks for archival and retrieval patterns. Edge width and node size are proportional to edge weight and squared centrality, respectively.}
	\label{fig:network}
\end{figure}

We visualize the original network together with activity-specific networks in \cref{fig:network}. They verify our theory that while in the original network nodes on the provider side have higher influence and are more central, in the activity-specific networks, the most central nodes are on the client side, which absorb most of the influence while casting the largest votes in the committee.
Specifically, in the committee for archival activities, the query center (6), which has a low preference for archival activities, is the most central member; in the committee for retrieval activities, the backup center (7), which has a low preference in retrieval activities, is the most central.

\begin{figure}[th]
	\centering
	\includegraphics[width=0.7\textwidth]{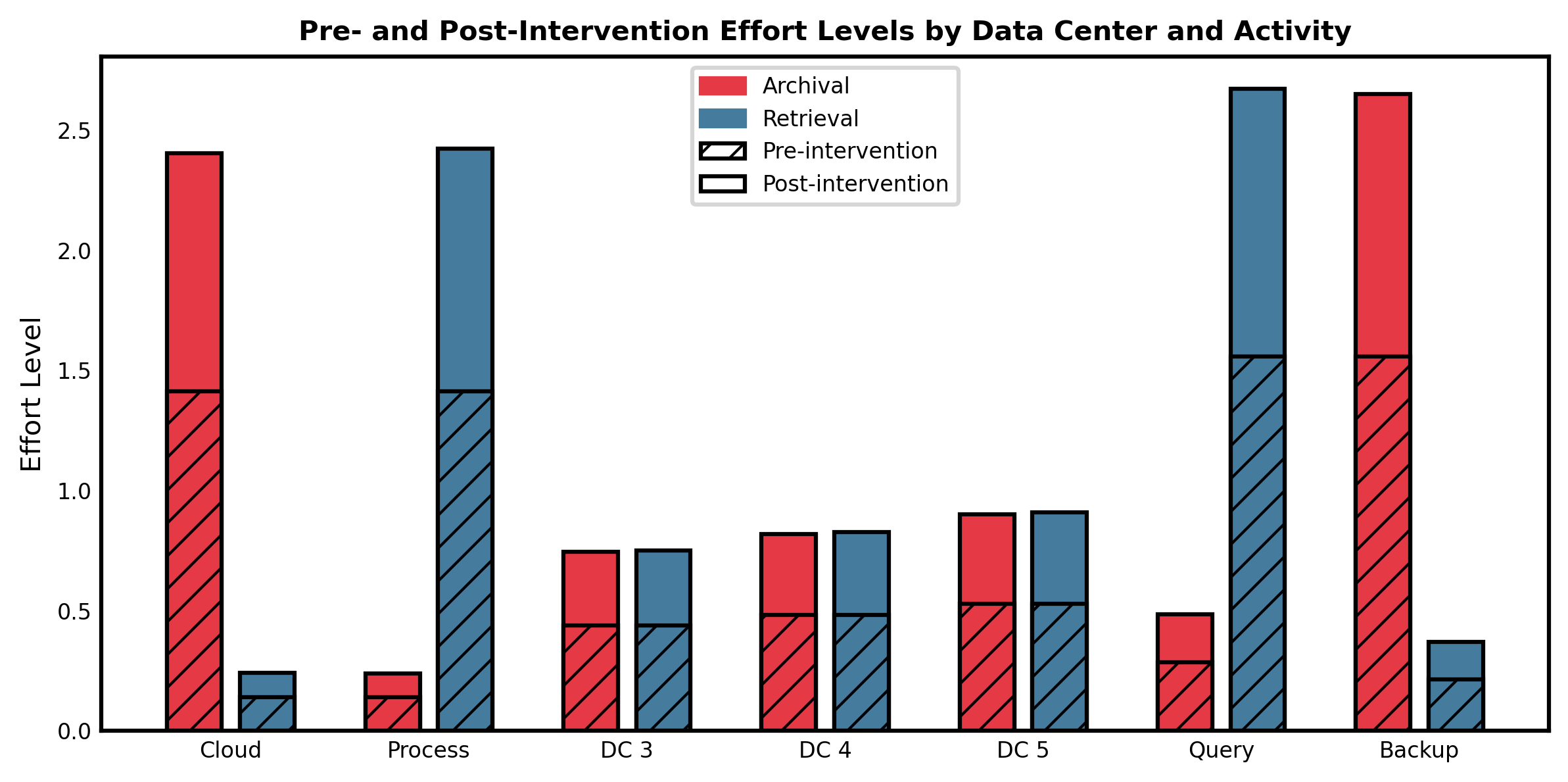}
	\caption{Pre- and post-intervention effort levels across data centers and activities.}
	\label{fig:effort}
\end{figure}

We plot the pre- and post-intervention effort levels in \cref{fig:effort} to further validate our analysis. We see that the cloud (1) and backup (7) data centers indeed prefer, both before and after the intervention, archival over retrieval activities; while the processing (2) and query (6) centers consistently prefer retrieval.
Additionally, the overall network has a slight bias towards archival activities (due to the large $\lambda_{7,2}$), but the intervention spends relatively more on retrieval activities, evident by the effort increases from the neutral centers (3, 4, 5) towards retrieval.
This confirms our counter-intuitive insight that the optimal intervention allocates more resources to the less preferred activity pattern to empower reluctant members who hold significant influence in that activity's committee.

\begin{figure}[th]
	\centering
	\includegraphics[width=0.7\textwidth]{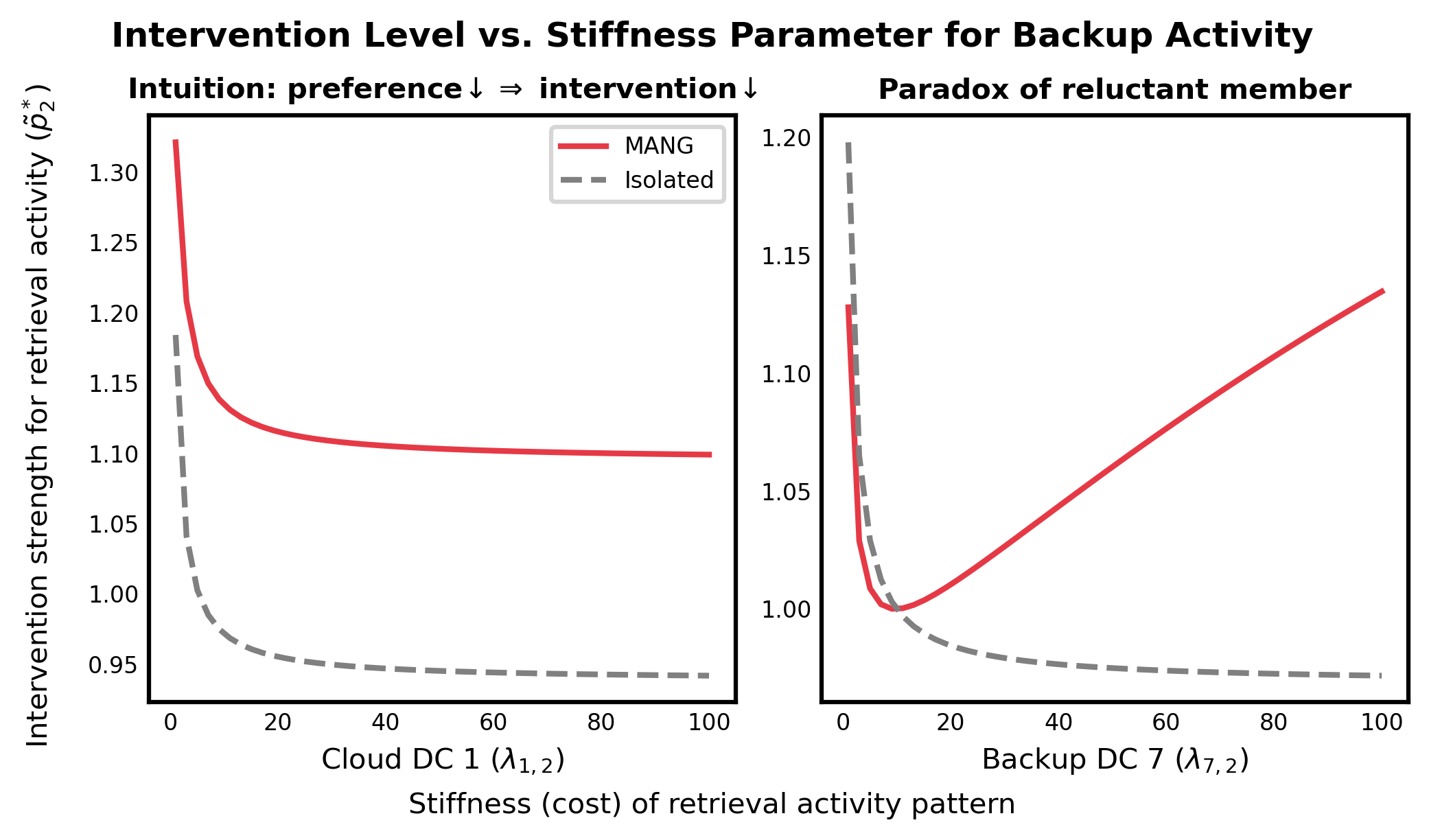}
	\caption{Optimal intervention level vs. stiffness for retrieval activities at cloud center (1) and backup center (7). The dashed lines represent intervention levels when centers are isolated (no network effects).}
	\label{fig:paradox}
\end{figure}

Finally, we further illustrate this ``paradox'' of empowering reluctant members and intervening less preferred activities in \cref{fig:paradox} by inspecting two key data centers: the cloud storage center (1) and the backup center (7). We vary their stiffness parameters for retrieval activities ($\lambda_{1,2}$ and $\lambda_{7,2}$) and plot the resulting optimal intervention level for retrieval activities. We also plot the intervention levels when the centers are isolated (i.e., no network effects) for reference.
We see that, changes in the stiffness of the cloud center (1) behave intuitively: as it becomes more reluctant towards retrieval activities (higher $\lambda_{1,2}$), the intervention level for retrieval activities decreases, as the planner focuses on other activities with higher participation.
However, the backup center (7) exhibits the paradoxical behavior: as it becomes more reluctant towards retrieval activities (higher $\lambda_{7,2}$), the intervention level for retrieval activities increases. This confirms our insight that in a networked setting, optimal interventions may need to counteract individual preferences to effectively leverage the influence dynamics within the network.

\section{Implementation details of \poll}\label{apx:poll}

\paragraph{Random walks}
We provide a more detailed derivation of Step 4 in \cref{sec:poll}:
\[
	[\tilde{G}^{k}\mathbf{1}]_{i} = & \sum_{j_{1}\in \X_{i}} \tilde{G}_{ij_{1}} [\tilde{G}^{k-1}\mathbf{1}]_{j_{1}}                                                                                                                                         \\
	=                                 & \deg_{i} \mathbb{E}_{j_{1} \sim \operatorname{Unif}(\X_i)} [\tilde{G}_{ij_{1}} [\tilde{G}^{k-1}\mathbf{1}]_{j_{1}}]                                                                                                   \\
	=                                 & \deg_i \mathbb{E}_{j_{1} \sim \operatorname{Unif}(\X_i)} [\tilde{G}_{ij_{1}} \deg_{j_1}\mathbb{E}_{j_{2} \sim \operatorname{Unif}(\X_{j_{1}})} [\tilde{G}_{j_{1}j_{2}} [\tilde{G}^{k-2}\mathbf{1}]_{j_{2}}]] \\
	=                                 & \mathbb{E}_{j_{1} \sim \operatorname{Unif}(\X_i), j_{2} \sim \operatorname{Unif}(\X(j_{1}))} [\deg_i\deg_{j_{1}} \cdot \tilde{G}_{ij_{1}} \tilde{G}_{j_{1}j_{2}} [\tilde{G}^{k-2}\mathbf{1}]_{j_{2}}]        \\
	=                                 & \cdots                                                                                                                                                                                                                         \\
	=                                 & \underset{(i=j_{0}, j_{1}, \ldots, j_{k})}{\mathbb{E}} \left[  \prod_{l=0}^{k-1} \deg_{j_{l}} \cdot \prod_{l=0}^{k-1} \tilde{G}_{j_{l}j_{l+1}} \right], \label{eq:walk}
\]
Please refer to \cref{apx:is} for a more general sampling scheme, which also accommodates weighted graphs.
On a random walk, when an intermediate agent has no incoming neighbor, we can simply stop the walk and return a zero estimate, as the degree product will be zero.

\paragraph{Knowledge}
There is a fundamental trade-off between the planner's knowledge and the agents' behavior: the more the planner knows, the less it relies on agents to behave faithfully, and vice versa. For example, if the planner knows the network topology, agents need not report their degrees.
In \poll, we shift this burden to the agents. The planner has no prior knowledge of the network or agents' technologies, making the algorithm fully model-free. Each agent knows only its local neighborhood; agents need their neighbors' degrees only when performing general importance sampling in \cref{apx:is}.
Agents' technologies are private; the initial price $p_0$ and the corresponding Nash equilibrium $s^*(p_0)$ are public.

\begin{algorithm}[th]
	\caption{\texttt{Poll}}
	\label{alg}
	\Input{Initial price $q_0=\Psi^{1 /2} p_0$, game parameters $n=|\X|$, $\delta$, $\nu = \mu^{-1}$}\;
	\Para{Step sizes $\{\alpha_t\}_{t \geq 0}$}\;
	Initialize $q \leftarrow \mathbf{0}$ (or some initial guess)\;
	\For{$t = 0, 1, 2, \dots$}{
		Sample an agent $i$ uniformly from $\X$\;
		Sample two trajectory lengths $k^{(1)}, k^{(2)}$ independently from $\operatorname{Geom}(\delta)$\;
		\For{$x \in \{1, 2\}$}{
			Initialize a random walk at $j^{(x)}_0 = i$\;
			Initialize $d_x \leftarrow 1$\;
			\For{$l = 0$ \KwTo $k^{(x)}-1$}{
				$d_x \leftarrow d_x \cdot \deg_{j^{(x)}_l}$\tcp*{by the planner}
				Sample $j^{(x)}_{l+1}$ uniformly from the incoming neighbors $\X_{j^{(x)}_l}$\tcp*{by the agents}
			}
		}
		Compute stochastic estimate: $\hat{\Phi}_t \leftarrow (1-\delta)^{-2} d_1 d_2 \tilde{W}_{j^{(1)}_{k^{(1)}}}^{-1} \tilde{W}_{i} \tilde{W}_{j^{(2)}_{k^{(2)}}}^{-1}$\tcp*{by the agents}
		Update intervention: $q \leftarrow (1 - \alpha_t) q + \alpha_t (q_0 +\nu \hat{\Phi}_t q )$\;
	}
	\Return $q$ or $p = \Psi^{-1 /2}q$\;
\end{algorithm}

\paragraph{Behavior}
We give a detailed behavioral description and timeline at each time step, expanding \cref{alg}:
\begin{enumerate}
	\item Planner: announces the update rule \poll;
	\item Planner: samples an agent $i$;
	\item Agent $i$: initiates two random walks;
	\item Agent along the walks: report its degree to the planner\footnote{The planner can also ask each walk's endpoint to report the degree product directly, in which case agents along the walk collect and multiply degrees before passing the message onward.} and choose the next step following the algorithm's instructions;
	\item Planner: collects the degree information along the random walks;
	\item Planner: sends the query $q_{t}$ to the endpoint agent $j^{(2)}_{k}$;
	\item Agent $j^{(2)}_{k}$: computes $\tilde{W}_{j^{(2)}_{k}}^{-1} q_t$ and sends it back to agent $i$ through the walk;
	\item Planner: signals the degree information of the walks to agent $i$;
	\item Agent $i$: left-multiplies the received vector with $\tilde{W}_{i}$ and sends it to the other endpoint agent $j^{(1)}_{k}$ through the other walk;
	\item Agent $j^{(1)}_{k}$: left-multiplies the received vector with $\tilde{W}_{j^{(1)}_{k}}^{-1}$ and sends it back to the planner.
\end{enumerate}
The updated price follows deterministically from the announced algorithm and the agents' responses, so no further action is required from the planner after receiving the final report.
This changes if the planner employs a payment mechanism to enforce truthful reporting, where the payment depends on the final report; see \cref{sec:groves}.

Variations of \poll, such as those with control variates (\cref{apx:cv}), use the same communication protocol.
In this work, we analyze strategic misreporting of $W_i$ only, assuming degrees are reported faithfully and walks are executed as specified; the analysis of broader strategic behaviors is left to future work.

\paragraph{Complexity} At each time step, each random walk requires at most two rounds of message passing---one forward traversal for degree collection and one backward pass for the computation result---with each message containing a scalar degree or a vector of length $m$, giving an expected communication complexity of $O(2(m+1)(1-\delta)^{-1} + m) = O(m)$. Three matrix-vector multiplications are performed at the endpoint agents, giving a computational complexity of $O(m^2)$.

\paragraph{Privacy} The planner only receives the final preference-price product $\tilde{W}_{j^{(1)}_{k} }^{-1} \tilde{W}_{i} \tilde{W}_{j^{(2)}_{k} }^{-1}q_{t}$.
Since agent $i$ only receives $\tilde{W}_{j^{(2)}_{k} }^{-1}q_{t}$ with the identity of agent $j^{(2)}_{k}$ masked, the former cannot infer the preference $W_{j^{(2)}_{k}}$. Similarly, agent $j^{(1)}_{k}$ cannot infer the preference $W_{i}$ or $W_{j^{(2)}_{k}}$ from its received vector $\tilde{W}_{i} \tilde{W}_{j^{(2)}_{k} }^{-1}q_{t}$.
Finally, since the agents' cost structures may not share the same eigenbasis, the planner cannot infer agents' preferences from the final report, even if the identities of the endpoint agents are known to the planner.

\section{Proofs of learning efficiency}\label{apx:eff-learn}

Recall the notation:
$\bar{\psi}\coloneqq\lambda_{\max}(\Psi)$, $\ubar{\psi}\coloneqq\lambda_{\min}(\Psi)$, $\overline{\lambda} \coloneqq \max_{i\in\X}\lambda _{\max}(W_{i})$, $\underline{\lambda}\coloneqq \min_{i\in\X}\lambda _{\min}(W_{i})$,
$\bar{d}\coloneqq \max_{i\in\X}\deg_{i}$,
$\kappa_{\Psi}\coloneqq\bar{\psi}/\ubar{\psi}$, $\kappa_{W}\coloneqq \overline{\lambda}/\underline{\lambda}$, $\kappa_{\nu}\coloneqq (1-\nu \lambda_{\max}(\tilde{\Phi}))^{-1}$, $\kappa_{\delta}\coloneqq (1-\delta \sigma_{\max}(G))^{-1}$.
Let $e_{t} = p_{t}-p_{*}$ and $\mathcal{F}_{t}$ be the $\sigma$-field generated by the randomness up to time $t$.

\begin{lemma}[Error dynamics]\label{lem:err-dyn}
	If we have a uniform bound $\nu^{2}\mathbb{E}\|\hat{\Phi}_{t}-\tilde{\Phi}\|^{2} \le \sigma ^{2}$, setting $\alpha = \kappa_{\nu}\ln t /t$ and letting $t /\ln t \ge \max\left\{ 2, 4\kappa_{\nu}^{2}\sigma^{2} \right\}$ gives
	\[
		\mathbb{E}\|e_{t}\|^{2} \le ( 1-\alpha /\kappa_{\nu})^{t}\|e_{0}\|^{2} + 2\alpha \kappa_{\nu} \sigma ^{2}\|p_{*}\|^{2} = \tilde{O}\left( t^{-1} \kappa_{\nu}^{2}\sigma^{2}\|p_{*}\|^{2} \right).
	\]
\end{lemma}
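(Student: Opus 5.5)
The plan is to run a standard one-step mean-square recursion for the linear stochastic approximation \cref{eq:poll-1}, working in the transformed variable $q=\Psi^{1/2}p$. There the fixed point is characterized by $(I-\nu\tilde{\Phi})q_*=q_0$. I will abuse notation and write $e_t=q_t-q_*$; converting back to $p$ only costs a factor of $\kappa_\Psi$, which is absorbed into $\sigma^2$ as in \cref{thm:query}. Set $A\coloneqq I-\nu\tilde{\Phi}$. The update reads $q_{t+1}=q_t-\alpha(Aq_t-q_0)+\alpha\nu(\hat{\Phi}_t-\tilde{\Phi})q_t$. Subtracting the fixed-point identity gives
\[
e_{t+1}=(I-\alpha A)e_t+\alpha\nu(\hat{\Phi}_t-\tilde{\Phi})(e_t+q_*).
\]

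\textbf{Step 1 (conditional expansion).} Conditional on $\mathcal{F}_t$, the noise term has mean zero, because $\mathbb{E}_{Z_t}\hat{\Phi}(Z_t)=\tilde{\Phi}$ and $Z_t$ is drawn afresh and independently of $q_t$. The cross term therefore vanishes, and
\[
\mathbb{E}[\|e_{t+1}\|^2\mid\mathcal{F}_t]\le\|I-\alpha A\|^2\|e_t\|^2+\alpha^2\nu^2\,\mathbb{E}\|\hat{\Phi}_t-\tilde{\Phi}\|^2\,\|e_t+q_*\|^2 .
\]
I then use the uniform bound $\nu^2\mathbb{E}\|\hat{\Phi}_t-\tilde{\Phi}\|^2\le\sigma^2$ together with $\|e_t+q_*\|^2\le 2\|e_t\|^2+2\|q_*\|^2$.

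\textbf{Step 2 (deterministic contraction).} The matrix $\tilde{\Phi}$ is symmetric and positive semidefinite, being a sum of terms $\tilde{\cen}_i^T\tilde{W}_i^{-1}\tilde{\cen}_i$ by \cref{prop:cen-planner}. Hence the spectrum of $A$ lies in $[\kappa_\nu^{-1},1]$, using \cref{assump:reg} for the lower end. It follows that $\|I-\alpha A\|=\max_{a\in[\kappa_\nu^{-1},1]}|1-\alpha a|\le 1-\alpha/\kappa_\nu$, provided $\alpha$ is not too large, say $\alpha\le1$. I need to confirm that the stated range of $t$ enforces this bound on $\alpha=\kappa_\nu\ln t/t$. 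If it does not, I will add the mild requirement $t/\ln t\ge\kappa_\nu$, which the $\tilde O$ statement tolerates.

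\textbf{Step 3 (absorbing the multiplicative noise).} This is the main obstacle. Combining Steps 1 and 2 gives the contraction factor $(1-\alpha/\kappa_\nu)^2+2\alpha^2\sigma^2$ on $\mathbb{E}\|e_t\|^2$, and I need it to be at most $1-\alpha/\kappa_\nu$. That holds if $2\alpha\sigma^2\kappa_\nu\le 1-\alpha/\kappa_\nu$. Substituting $\alpha=\kappa_\nu\ln t/t$, the condition $t/\ln t\ge2$ gives $1-\alpha/\kappa_\nu\ge 1/2$. The condition $t/\ln t\ge4\kappa_\nu^2\sigma^2$ gives $2\alpha\sigma^2\kappa_\nu\le 1/2$, which closes the inequality. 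This is exactly where the hypotheses of the lemma enter.

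\textbf{Step 4 (unrolling).} The resulting recursion is $\mathbb{E}\|e_{t+1}\|^2\le(1-\alpha/\kappa_\nu)\mathbb{E}\|e_t\|^2+2\alpha^2\sigma^2\|q_*\|^2$. Iterating it and summing the geometric series, whose total is at most $\kappa_\nu/\alpha$, gives $(1-\alpha/\kappa_\nu)^t\|e_0\|^2+2\alpha\kappa_\nu\sigma^2\|q_*\|^2$. Finally, $(1-\alpha/\kappa_\nu)^t\le e^{-\ln t}=t^{-1}$, and $2\alpha\kappa_\nu\sigma^2=2\kappa_\nu^2\sigma^2\ln t/t$. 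Together these yield the claimed $\tilde O(t^{-1}\kappa_\nu^2\sigma^2\|p_*\|^2)$ after passing back from $q$ to $p$.
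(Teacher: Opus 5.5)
Your proposal is correct and follows the paper's proof essentially step for step: the same error recursion with the cross term removed by independence of $Z_t$, the same split $\|q_t\|^2\le 2\|e_t\|^2+2\|q_*\|^2$, the same absorption condition (which with $\alpha=\kappa_\nu\ln t/t$ amounts to $\alpha\le\min\{\kappa_\nu/2,\,1/(4\kappa_\nu\sigma^2)\}$), and the same geometric unrolling; the paper also works implicitly in the $q$-coordinates while writing $p$. Your extra requirement $\alpha\le 1$ (i.e.\ $t/\ln t\ge\kappa_\nu$) is genuinely needed, because the paper uses $\|I-\alpha(I-\nu\tilde{\Phi})\|^2\le(1-\alpha/\kappa_\nu)^2$ without checking it, and the stated hypotheses alone do not guarantee it when $m\ge 2$, $\kappa_\nu$ is large and $\sigma^2$ is small, though the added condition does not affect the $\tilde{O}$ conclusion.
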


\begin{proof}
	Note that the optimal solution satisfies $(I-\nu \tilde{\Phi})p_{*}=p_{0}$.
	We have the following error dynamics
	\[
		e_{t+1} =& e_{t} - \alpha ((I-\nu \hat{\Phi}_{t})p_{t} - p_{0})\\
		=& e_{t} - \alpha (p_0 - (I-\nu \tilde{\Phi})p_{*}  + (I-\nu \hat{\Phi}_{t})p_{t}  - p_{0})\\
		=& e_{t} - \alpha ((I-\nu \tilde{\Phi})(p_{t}-p_{*}) -(I-\nu \tilde{\Phi})p_{t}  + (I-\nu \hat{\Phi}_{t})p_{t})\\
		=& (I - \alpha (I-\nu \tilde{\Phi})) e_{t} + \alpha \nu (\hat{\Phi}_{t}-\tilde{\Phi})p_t, \label{eq:err-1}
	\]
	which gives the mean squared error (MSE) dynamics
	\[
		\mathbb{E}\|e_{t+1}\|^{2} =& \mathbb{E}\|(I-\alpha (I-\nu \tilde{\Phi})) e_{t}\|^{2} + \alpha^{2} \nu^{2} \mathbb{E}\|(\hat{\Phi}_{t}-\tilde{\Phi})p_t\|^{2}  \label{eq:err-2}\\
		\le& \|I-\alpha (I-\nu \tilde{\Phi})\|^{2}\mathbb{E}\|e_{t}\|^{2} + \alpha^{2}\nu^{2}\mathbb{E}[\mathbb{E}[\|\hat{\Phi}_{t}-\tilde{\Phi}\|^{2}\given \mathcal{F}_{t}]\|p_{t}\|^{2}],
	\]
	where $\mathcal{F}_{t}$ be the filtration generated by the randomness up to time $t$.
	Since $\hat{\Phi}_{t}$ is independent of $\mathcal{F}_t$, we have $\mathbb{E}[\|\hat{\Phi}_{t}-\tilde{\Phi}\|^{2}\given \mathcal{F}_{t}] = \mathbb{E}\|\hat{\Phi}_{t}-\tilde{\Phi}\|^{2}$.
	Then, given the uniform bound,
	\[
		\mathbb{E}\|e_{t+1}\|^{2}\le& \|I-\alpha (I-\nu \tilde{\Phi})\|^{2}\mathbb{E}\|e_{t}\|^{2} + 2\alpha^{2}\sigma^{2}(\mathbb{E}\|p_{t}-p_{*}\|^{2} + \|p_{*}\|^{2}) \\
		\le& (1-2\alpha\kappa_{\nu}^{-1}+\alpha^{2}(\kappa_{\nu}^{-2} + 2\sigma^{2}))\mathbb{E} \|e_{t}\|^{2} + 2\alpha^{2}\sigma ^{2}\|p_{*}\|^{2}. \label{eq:err-3}
	\]
	Setting $\alpha \le \min\{ \kappa_{\nu}/2, 1/(4\kappa_{\nu}\sigma^{2}) \}$ gives
	\[
		\mathbb{E}\|e_{t+1}\|^{2} \le (1-\alpha \kappa_{\nu}^{-1}) \mathbb{E}\|e_t\|^{2} + 2\alpha^{2}\sigma^{2}\|p_{*}\|^{2}
		.\]
	Setting $t$ accordingly gives the desired result.
\end{proof}

As a corollary, we get:

\begin{proposition}\label{prop:noise}
	Consider \poll update rule \cref{eq:poll-1} with additive noise:
	\[
		p_{t+1} = p_t - \alpha_t ((I - \nu \hat{\Phi}_t) p_t - p_0 + \epsilon^{(1)}_t + \epsilon^{(2)}_{t})
		,\]
	where $\mathbb{E}_{Z_t}[\epsilon^{(1)}_{t}] = 0$, $\mathbb{E}\|\epsilon^{(1)}_t| \le \sigma_1^{2}$, and $\|\epsilon^{(2)} _{t}\|\le \sigma_{2}$ almost surely.
	Then, with a constant step size $\alpha \equiv \kappa_{\nu}\ln t / t$, we have
	\[
		\mathbb{E}\|p_{t} - p_{*}\|^{2} = \tilde{O}(t^{-1} \kappa_{\nu}^{2}(\sigma^{2}\|p_{*}\|^{2} + \sigma_1^{2}+\sigma_2^{2}) + \kappa_{\nu}^{2}\sigma_2^{2})
		.\]
\end{proposition}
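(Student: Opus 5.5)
The plan is to rerun the error-dynamics argument of \cref{lem:err-dyn}, carrying the two noise terms separately. With $e_t=p_t-p_*$ and $(I-\nu\tilde{\Phi})p_*=p_0$, the noisy update gives the recursion
\[
e_{t+1} = (I-\alpha(I-\nu\tilde{\Phi}))e_t + \alpha\nu(\hat{\Phi}_t-\tilde{\Phi})p_t - \alpha\epsilon^{(1)}_t - \alpha\epsilon^{(2)}_t .
\]
Call the first term the contraction term and the next two the martingale-difference terms. The contraction term is $\mathcal{F}_t$-measurable. Conditionally on $\mathcal{F}_t$, the two martingale-difference terms have zero mean, since $\mathbb{E}_{Z_t}\hat{\Phi}_t=\tilde{\Phi}$ and $\mathbb{E}_{Z_t}\epsilon^{(1)}_t=0$. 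Hence their cross terms with the contraction term vanish in expectation when we square.

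Between the two zero-mean terms, the cross term is bounded by $\|a+b\|^2\le 2\|a\|^2+2\|b\|^2$. This contributes $2\alpha^2\nu^2\mathbb{E}\|\hat{\Phi}_t-\tilde{\Phi}\|^2\|p_t\|^2 + 2\alpha^2\sigma_1^2$. Exactly as in \cref{lem:err-dyn}, the first piece is at most $4\alpha^2\sigma^2(\mathbb{E}\|e_t\|^2+\|p_*\|^2)$.

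The bias term $\epsilon^{(2)}_t$ is not zero-mean, so it must be handled differently. I would use Young's inequality in the form $\|x+y\|^2\le(1+\eta)\|x\|^2+(1+\eta^{-1})\|y\|^2$ with $\eta=\alpha/(2\kappa_\nu)$. Here $x$ is everything except $\alpha\epsilon^{(2)}_t$, and $y=\alpha\epsilon^{(2)}_t$ with $\|y\|\le\alpha\sigma_2$. This costs a factor $(1+\alpha/(2\kappa_\nu))$ on the contraction and adds $(1+2\kappa_\nu/\alpha)\alpha^2\sigma_2^2\lesssim \alpha^2\sigma_2^2 + 2\alpha\kappa_\nu\sigma_2^2$.

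Using $\|I-\alpha(I-\nu\tilde\Phi)\|\le 1-\alpha/\kappa_\nu$ and the step-size restriction of \cref{lem:err-dyn} (with constants adjusted), I expect a one-step bound of the form
\[
\mathbb{E}\|e_{t+1}\|^2\le(1-\tfrac{\alpha}{4\kappa_\nu})\mathbb{E}\|e_t\|^2 + C\alpha^2(\sigma^2\|p_*\|^2+\sigma_1^2+\sigma_2^2) + C\alpha\kappa_\nu\sigma_2^2 .
\]
Unrolling this recursion and summing the geometric series (dividing the constant terms by $\alpha/(4\kappa_\nu)$) gives
\[
(1-\tfrac{\alpha}{4\kappa_\nu})^t\|e_0\|^2 + O\bigl(\alpha\kappa_\nu(\sigma^2\|p_*\|^2+\sigma_1^2+\sigma_2^2)\bigr) + O(\kappa_\nu^2\sigma_2^2).
\]
With $\alpha=\kappa_\nu\ln t/t$, the transient term is $\tilde O(1/t)$. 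The middle term becomes $\tilde O(t^{-1}\kappa_\nu^2(\cdots))$ and the last is the non-vanishing bias floor $\kappa_\nu^2\sigma_2^2$, which matches the claim.

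The main obstacle is the bias term. A crude treatment yields a floor of order $\kappa_\nu^2\sigma_2^2$, which must be shown not to blow up further, so the choice of $\eta$ has to be balanced against the contraction margin. The alternative is to bound $\|\sum_\tau \alpha(I-\alpha(I-\nu\tilde\Phi))^{t-\tau}\epsilon^{(2)}_\tau\|\le\kappa_\nu\sigma_2$ directly via the triangle inequality and combine it with the martingale part. I would try the Young-inequality route first and fall back to this deterministic unrolling if the constants misbehave. Keeping the step-size condition consistent with \cref{lem:err-dyn} (now with $\sigma^2$ replaced by a constant multiple) is routine.
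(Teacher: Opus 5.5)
Your approach matches the paper's: the same error recursion, zero cross terms for the two mean-zero parts, Young's inequality with weight $\alpha/(2\kappa_{\nu})$ on the bias, and geometric unrolling. The paper applies Young only to the cross term $2\alpha\langle (I-\alpha(I-\nu\tilde{\Phi}))e_t,\epsilon^{(2)}_t\rangle$ rather than to the whole vector, which is equivalent. Your additive terms $O(\alpha\kappa_{\nu}(\sigma^{2}\|p_*\|^{2}+\sigma_1^{2}+\sigma_2^{2}))$ and the bias floor $O(\kappa_{\nu}^{2}\sigma_2^{2})$ come out correctly.

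\textbf{The step that fails is the transient.} The statement fixes $\alpha=\kappa_{\nu}\ln t/t$, so the constant in the contraction factor determines the decay rate: $(1-c\,\alpha/\kappa_{\nu})^{t}=(1-c\ln t/t)^{t}\approx t^{-c}$. With the factor $1-\alpha/(4\kappa_{\nu})$ you settle for, the transient is of order $t^{-1/4}\|e_0\|^{2}$, not $\tilde{O}(1/t)$. When $\sigma_2=0$, for example, your bound is strictly weaker than the claim. Only the transient is sensitive to $c$; the other two terms merely pick up a factor $1/c$.

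The lost constant comes from not keeping the square of the contraction. The fix is as follows.
\begin{enumerate}
\item Use $\|I-\alpha(I-\nu\tilde{\Phi})\|^{2}\le(1-\alpha/\kappa_{\nu})^{2}=1-2\alpha/\kappa_{\nu}+\alpha^{2}/\kappa_{\nu}^{2}$.
\item Multiplying by the Young factor $1+\alpha/(2\kappa_{\nu})$ gives $1-\tfrac{3}{2}\alpha/\kappa_{\nu}+\alpha^{3}/(2\kappa_{\nu}^{3})$.
\item Take $\alpha$ below a small absolute constant and $\alpha\lesssim 1/(\kappa_{\nu}\sigma^{2})$. Then the cubic term plus the $O(\alpha^{2}\sigma^{2})$ coefficient coming from the martingale variance cost at most $\alpha/(2\kappa_{\nu})$.
\item This leaves the factor $1-\alpha/\kappa_{\nu}$, so the transient is $(1-\ln t/t)^{t}\|e_0\|^{2}\le t^{-1}\|e_0\|^{2}$.
\end{enumerate}
This is exactly the paper's bookkeeping. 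It is the balancing of $\eta$ against the contraction margin that you flagged as the main obstacle but did not carry through.
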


If the additive noise has zero mean, then the variance of our update direction is increased by the noise variance.
Otherwise, if the additive noise is bounded, then we may have a non-vanishing error floor when $\sigma_2 = \omega(\sqrt{\alpha})$.

\begin{proof}[Proof of \cref{prop:noise}]
	With an additive noise $\epsilon_{t}$, \cref{eq:err-1} becomes
	\[
		e_{t+1}
		= (I - \alpha (I-\nu \tilde{\Phi})) e_{t} + \alpha \nu (\hat{\Phi}_{t}-\tilde{\Phi})p_t + \alpha \epsilon^{(1)} _{t} + \alpha \epsilon^{(2)}_t
		.\]
	Then, \cref{eq:err-2} becomes
	\[
		\mathbb{E}\|e_{t+1}\|^{2} =& \mathbb{E}\|(I-\alpha (I-\nu \tilde{\Phi})) e_{t}\|^{2} + \alpha^{2} \mathbb{E}\|\nu (\hat{\Phi}_{t}-\tilde{\Phi})p_t + \epsilon^{(1)}_{t} + \epsilon^{(2)}_{t} \|^{2}  \\
		&+ 2\alpha \mathbb{E}\left< (I-\alpha (I-\nu \tilde{\Phi}))e_t, \epsilon^{(2)}_{t}  \right>
		,\]
	where the additional cross term is due to the non-zero mean of $\epsilon^{(2)}_{t}$.
	By Young's inequality, \cref{eq:err-3} becomes
	\[
		\mathbb{E}\|e_{t+1}\|^{2} \le& \mathbb{E}\|(I-\alpha (I-\nu \tilde{\Phi})) e_{t}\|^{2} + 3\alpha^{2} (\nu ^{2}\mathbb{E}\|(\hat{\Phi}_{t}-\tilde{\Phi})p_t\|^{2} + \sigma_1^{2} + \sigma_2^{2})  \\
		&+ \alpha ( (2\kappa_{\nu})^{-1}   \mathbb{E}\|(I-\alpha (I-\nu \tilde{\Phi}))e_t\|^{2} + 2\kappa_{\nu} \sigma_2^{2})\\
		\le& ((1 + \alpha (2\kappa_{\nu})^{-1})(1-2\alpha\kappa_{\nu}^{-1}+\alpha^{2}\kappa_{\nu}^{-2})+6\alpha^{2}\sigma^{2}))\mathbb{E} \|e_{t}\|^{2} \\
		& + 3\alpha^{2}(2\sigma ^{2}\|p_{*}\|^{2} + \sigma_1^{2}+\sigma_2^{2}) + 2\alpha \kappa_{\nu}\sigma_2^{2} \\
		\le& (1-\tfrac{3}{2}\alpha\kappa_{\nu}^{-1}+6\alpha^{2}\sigma^{2}+ \tfrac{1}{2}\alpha^3  \kappa_{\nu}^{-3})\mathbb{E} \|e_{t}\|^{2} + 3\alpha^{2}(2\sigma ^{2}\|p_{*}\|^{2} + \sigma_1^{2}+\sigma_2^{2}) + 2\alpha \kappa_{\nu}\sigma_2^{2}
		.\]
	Letting $\alpha \le \min \{ \kappa_{\nu} /2, 1 /(24\kappa_{\nu}\sigma^{2}) \}$ gives
	\[
		\mathbb{E}\|e_{t}\|^{2} \le& (1-\alpha \kappa_{\nu}^{-1} )\mathbb{E}\|e_{t-1}\|^{2} + 3\alpha^{2}(2\sigma ^{2}\|p_{*}\|^{2} + \sigma_1^{2}+\sigma_2^{2}) + 2\alpha \kappa_{\nu}\sigma_2^{2} \\
		\le& (1-\alpha \kappa_{\nu}^{-1} )^{t}\|e_{0}\|^{2} + 3\alpha \kappa_{\nu} (2\sigma ^{2}\|p_{*}\|^{2} + \sigma_1^{2}+\sigma_2^{2}) + 2 \kappa_{\nu}^{2}\sigma_2^{2}
		.\]
	Setting $\alpha = \kappa_{\nu} \ln t / t$ gives the desired result.
\end{proof}

\begin{remark}[Noisy query]
	Querying a social network with economic agents is often influenced by subjective biases, recall difficulty, and incomplete observation, introducing inevitable estimation noises in the data \citep{marsden1990NetworkData,jackson2011Socialeconomic}.
	\cref{prop:noise} showcases the robustness of \poll to such endogenous noise and exogenous shocks.
	Additionally, it justifies the ``approximate truthfulness'' in \cref{apx:eff-econ}.
\end{remark}

Within one iteration, dropping the time index $t$, we denote
$\xi_{1},\xi_{2}$ as the two random walks, $k_{1}=k^{(1)}_{t},k_2=k^{(2)}_{t}$ the lengths, $j^{(1)}=j^{(1)}_{k^{(1)}_t}, j^{(2)}=j^{(2)}_{k^{(2)}_t}$ the endpoints, and $d_1=\prod_{l=0}^{k^{(1)}_t-1} \deg_{j^{(1)}_{l}},d_2=\prod_{l=0}^{k^{(2)}_t-1} \deg_{j^{(2)}_{l}}$ the degree products, of the two random walks.

Although \cref{asmp:indep} implies the independence between the degree and preference of a single agent, it does not decouple the degree product and the preference product in the update direction.
For example, if two random walks end at the same node $j^{(1)}=j^{(2)}$, then $W_{j^{(1)}}$ and $W_{j^{(2)}}$ coincide.
In general, random walks may cause the indices $i$, $j^{(1)}$, or $j^{(2)}$ to overlap, and thus couple the degree product and the preference product.
One can argue that when $n$ is large, such scenarios are rare and thus the coupling is weak.
Nonetheless, we can have a uniform variance decomposition, by using a reference point $M$ to be determined.

Let $\mathcal{F}_{\mathrm{\X}}$ be the $\sigma$-field generated by the network sampling, including $i$, $k^{(1)}$, $k^{(2)}$, $\xi_1$, $\xi_2$, $d_1$, and $d_2$.
$\mathcal{F}_{\X}$ is independent of $\mathcal{F}_{t}$, and by \cref{asmp:indep}, the marginal distribution of $W_{i}$ is independent of both $\mathcal{F}_{t}$ and $\mathcal{F}_{\X}$, i.e., any $\mathcal{F}_{t}$- or $\mathcal{F}_{\X}$-measurable function, for any $i\in\X$.

\begin{lemma}[Variance decomposition]\label{lem:var-decomp}
	Let $V_{\X} \coloneqq \Var(d_1d_2)$, $E_{\X}\coloneqq\mathbb{E}[d_1d_2]$.
	For any $M\in R^{m\times m}$, let $E_{\A}\coloneqq\|M\|$.
	If we have
	\[
		\mathbb{E}[\|W_{z^{(1)}}^{-1} W_{x} W_{z^{(2)}}^{-1} - M\|^{2}\given \mathcal{F}_{\X}] \le V_{\A},
	\]
	then
	\[
		\sigma^{2} \le 6\nu^{2}\kappa_{\Psi}^{2}(1-\delta)^{-4} \left(V_{\A}V_{\X} + E_{\X}^{2}V_{\A} + E_{\A}^{2}V_{\X} \right)
		.\]
\end{lemma}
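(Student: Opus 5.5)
We write the one-step estimator as a product of a network factor and a preference factor and split its deviation into three pieces. By the construction in Step 4 and \cref{alg}, $\hat{\Phi}_t = (1-\delta)^{-2} d_1 d_2\, \tilde{W}_{j^{(1)}}^{-1}\tilde{W}_{i}\tilde{W}_{j^{(2)}}^{-1}$. We use the identity $\tilde{W}_{j}^{-1}\tilde{W}_{i}\tilde{W}_{k}^{-1} = \Psi^{-1/2}W_{j}^{-1}W_{i}W_{k}^{-1}\Psi^{-1/2}$. Hence $\|\tilde{W}_{j}^{-1}\tilde{W}_{i}\tilde{W}_{k}^{-1} - \Psi^{-1/2}M\Psi^{-1/2}\| \le \ubar{\psi}^{-1}\|W_{j}^{-1}W_{i}W_{k}^{-1}-M\|$, which reduces everything to the untransformed quantity $A \coloneqq W_{z^{(1)}}^{-1}W_{x}W_{z^{(2)}}^{-1}$ with $x=i$, $z^{(\cdot)}=j^{(\cdot)}$. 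Two facts remain. First, $\sigma^2$ in \cref{lem:err-dyn} is $\nu^2\mathbb{E}\|\hat{\Phi}_t-\tilde{\Phi}\|^2$, and since $\hat{\Phi}_t$ is unbiased this is bounded by $\nu^2\mathbb{E}\|\hat{\Phi}_t - C\|^2$ for any deterministic $C$. Second, the factor $\kappa_\Psi^2$ will come out once the $\Psi^{-1/2}$ sandwiches are absorbed. Converting $\ubar{\psi}^{-2}$ into $\kappa_\Psi^{2}$ needs a normalization of $\Psi$; we expect the paper's intended normalization makes $\bar\psi$ effectively $1$. We will take this normalization for granted and flag it.

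The reference point is $C = (1-\delta)^{-2}E_{\X}\Psi^{-1/2}M\Psi^{-1/2}$. Write $D \coloneqq d_1d_2$ and decompose
\[
DA - E_{\X}M = (D-E_{\X})(A-M) + E_{\X}(A-M) + (D-E_{\X})M .
\]
Applying $\|a+b+c\|^2\le 3(\|a\|^2+\|b\|^2+\|c\|^2)$ gives three terms.

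\textbf{Term 1}, $\mathbb{E}[(D-E_{\X})^2\|A-M\|^2]$. Condition on $\mathcal{F}_{\X}$. The factor $(D-E_{\X})^2$ is $\mathcal{F}_{\X}$-measurable, and the hypothesis gives $\mathbb{E}[\|A-M\|^2\mid\mathcal{F}_{\X}]\le V_{\A}$. The term is therefore at most $V_{\A}\,\mathbb{E}(D-E_{\X})^2 = V_{\A}V_{\X}$.

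\textbf{Term 2}, $E_{\X}^2\,\mathbb{E}\|A-M\|^2$. By the tower property this is at most $E_{\X}^2V_{\A}$.

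\textbf{Term 3}, $\|M\|^2\,\mathbb{E}(D-E_{\X})^2$. This equals $E_{\A}^2V_{\X}$.

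The point of conditioning on $\mathcal{F}_{\X}$ is that it handles walks that overlap or share endpoints. Such overlaps couple $D$ and $A$, but the conditional bound on $A$ is uniform over all realizations of $\mathcal{F}_{\X}$, so no independence between $D$ and $A$ is needed. This works because of \cref{asmp:indep}: the hypothesis is stated conditionally on $\mathcal{F}_{\X}$ and is assumed uniform.

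To assemble, multiply by $\nu^2(1-\delta)^{-4}$ from the squared prefactor, by $3$ from the splitting, and by the $\Psi$-conversion constant. This gives $\sigma^2 \le 3\nu^2(1-\delta)^{-4}\cdot(\Psi\text{ factor})\,(V_{\A}V_{\X}+E_{\X}^2V_{\A}+E_{\A}^2V_{\X})$.

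The stated constant $6$ leaves room for an extra factor $2$. That factor would arise if we instead bounded $\mathbb{E}\|\hat{\Phi}-\tilde{\Phi}\|^2 \le 2\mathbb{E}\|\hat{\Phi}-C\|^2 + 2\|C-\tilde{\Phi}\|^2$ and used Jensen on the second piece, or if the operator norm were passed to the Frobenius norm. Either route fits within $6$.

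The main obstacle is the $\Psi$-conversion step. Absorbing the sandwich $\Psi^{-1/2}(\cdot)\Psi^{-1/2}$ while keeping the reference point $M$ in untransformed coordinates naturally produces $\ubar\psi^{-2}$, not $\kappa_\Psi^2$. The first attempt will be to check whether the paper implicitly scales $\Psi$ so that $\bar\psi = 1$, in which case $\ubar{\psi}^{-2} = \kappa_\Psi^2$. If it does not, we will instead define $M$ in transformed coordinates and carry $\bar\psi^2/\ubar\psi^2$ through the norm equivalence. Everything else is the routine three-term split above.
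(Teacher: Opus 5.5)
Your proof is correct, provided the norm is Frobenius. It reaches the same three-term bracket as the paper, but centers the estimator differently.

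\textbf{How the two arguments differ.} The paper keeps the exact mean. It writes
$d_1d_2A-\mathbb{E}[d_1d_2A]=d_1d_2(A-M)+(d_1d_2-E_{\X})M+\mathbb{E}[d_1d_2(M-A)]$
and bounds the bias term by Jensen, $\|\mathbb{E}[d_1d_2(M-A)]\|^2\le\mathbb{E}\|d_1d_2(A-M)\|^2$. It then combines $\mathbb{E}[d_1^2d_2^2]=V_{\X}+E_{\X}^2$ with the conditional bound given $\mathcal{F}_{\X}$. The doubled $d_1d_2(A-M)$ term is where the constant $6$ comes from.

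You instead center at the deterministic point $(1-\delta)^{-2}E_{\X}\Psi^{-1/2}M\Psi^{-1/2}$, using the fact that the mean minimizes expected squared distance. You also split off the cross term $(d_1d_2-E_{\X})(A-M)$, which gives constant $3$.

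The trade-off is in the norm. Your first step needs an inner-product norm. It holds for $\|\cdot\|_{\mathrm{F}}$, which is consistent with how the paper uses norms in \cref{lem:varA} and in the definition of $\sigma_{\A}$. It fails for the spectral norm, where the generic fix $\mathbb{E}\|X-\mathbb{E}X\|^2\le 4\,\mathbb{E}\|X-C\|^2$ would push your constant to $12$. The paper's route uses only the triangle inequality and Jensen, so it holds in any norm. You should state explicitly that you work in the Frobenius norm.

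\textbf{The $\Psi$ factor.} Your suspicion is correct, and no hidden normalization in the paper resolves it. The paper's own proof produces exactly your $\ubar{\psi}^{-1}$ inside the norm, hence $\ubar{\psi}^{-2}$ after squaring, and then writes $\kappa_{\Psi}^2$ without justification.

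In fact $\kappa_\Psi^2$ cannot be right in general. Under $\Psi\to c\Psi$, $\mu\to\mu/c$, the planner's problem depends only on $\mu\Psi$ and is unchanged. Since $\hat{\Phi}_t\to c^{-1}\hat{\Phi}_t$, $\tilde{\Phi}\to c^{-1}\tilde{\Phi}$ and $\nu\to c\nu$, the left-hand side $\sigma^2$ is invariant. The right-hand side scales by $c^2$, because $V_{\A},V_{\X},E_{\X},E_{\A}$ and $\kappa_\Psi$ are all invariant.

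For the same reason, your fallback of moving $M$ to transformed coordinates and invoking norm equivalence cannot produce $\kappa_\Psi^2$ either. The scale-consistent conclusion replaces $\nu^2\kappa_\Psi^2$ by $\nu^2\ubar{\psi}^{-2}$. That is at most the stated bound exactly when $\bar{\psi}\ge 1$. State the lemma in that form, or add the assumption $\bar{\psi}\ge 1$, rather than searching for a normalization.
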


If $E_{\A}$ and $V_{\A}$ are the exact mean and variance of the preference distribution, then the above result exactly recovers the variance decomposition of the product of two independent random variables, i.e., $\Var(XY) = \Var(X)\Var(Y) + \mathbb{E}[X]^{2}\Var(Y) + \mathbb{E}[Y]^{2}\Var(X)$, suggesting its tightness.
That said, our update direction has a much more intricate and coupled structure.
The reference point $M$ to be determined, is used to decouple the degree product and the preference product, and later used to derive a bound of $V_{\A}$.

\begin{proof}
	We recall the following derivation from \cref{sec:eff-econ}:
	\[\label{eq:poll-2}
		\hat{\Phi}_t = \hat{\Phi}(Z_t) \coloneqq (1-\delta)^{-2}  \prod_{l=0}^{k^{(1)}_{t}-1} \deg_{j^{(1)}_{l}} \prod_{l=0}^{k^{(2)}_{t}-1} \deg_{j^{(2)}_{l}} \cdot \tilde{W}_{j^{(1)}_{k}}^{-1} \tilde{W}_{i_{t}} \tilde{W}_{j^{(2)}_{k}}^{-1}
		.\]
	By \cref{eq:poll-2}, we have
	\[
		\|\hat{\Phi}_{t}-\tilde{\Phi}\| \le& (1-\delta)^{-2}\ubar{\psi}^{-1} \|d_{1}d_{2}W_{j^{(1)}}^{-1} W_{i} W_{j^{(2)}}^{-1}  - \mathbb{E}[d_{1}d_{2}W_{j^{(1)}}^{-1} W_{x} W_{j^{(2)}}^{-1} ]\|
		.\]
	We have
	\[
		& d_{1}d_{2}W_{j^{(1)}}^{-1} W_{i} W_{j^{(2)}}^{-1}  - \mathbb{E}[d_{1}d_{2}W_{j^{(1)}}^{-1} W_{x} W_{j^{(2)}}^{-1} ] \\
		=&d_{1}d_{2}W_{j^{(1)}}^{-1} W_{i} W_{j^{(2)}}^{-1}  - d_{1}d_{2} M+ d_{1}d_{2}M - \mathbb{E}[d_{1}d_{2}]M + \mathbb{E}[d_{1}d_{2}]M - \mathbb{E}[d_{1}d_{2}W_{j^{(1)}}^{-1} W_{i} W_{j^{(2)}}^{-1}]  \\
		=&d_{1}d_{2}( W_{j^{(1)}}^{-1} W_{i} W_{j^{(2)}}^{-1} - M ) + (d_{1}d_{2}-\mathbb{E}[d_{1}d_{2}])M + \mathbb{E}[d_{1}d_{2}(M-W_{j^{(1)}}^{-1} W_{i} W_{j^{(2)}}^{-1})].
	\]
	The last term is a by-product of our arbitrary reference point; nonetheless, it's dominated by the first term by Jensen's inequality.
							Therefore, we have
	\[
		& \mathbb{E}\|d_{1}d_{2}W_{j^{(1)}}^{-1} W_{i} W_{j^{(2)}}^{-1}  - \mathbb{E}[d_{1}d_{2}W_{j^{(1)}}^{-1} W_{i} W_{j^{(2)}}^{-1} ]\|^{2} \\
		\le& 3\left( 2 \mathbb{E}\|d_1d_2 (W_{j^{(1)}}^{-1} W_{i} W_{j^{(2)}}^{-1} - M)\|^{2} + \mathbb{E}\|(d_1d_2-\mathbb{E}[d_1d_2])M\|^{2} \right) \\
		\le& 6\left( \mathbb{E}[d_1^{2}d_2^{2}] V_{\A} + E_{\X}^{2}\Var(d_1d_2) \right) \\
		\le& 6\left( V_{\X} V_{\A} + E_{\X}^{2}V_{\A} + E_{\A}^{2}V_{\X} \right)
		.\]
\end{proof}

We thus only need to bound the first and second moments of the degree product and the preference product.

\begin{lemma}\label{lem:varA}
	Let $M \coloneqq \mathbb{E}W_{i}^{-1}$, where the expectation is w.r.t. the preference generation.
	We have
	\[
		\mathbb{E}[\|W^{-1}_{j^{(1)} }W_{i}W_{j^{(2)} }^{-1} -M \|^{2}\given \mathcal{F}_{\X}] \le 9\kappa_{W}^{2}\sigma_{\A}^{2}
		.\]
\end{lemma}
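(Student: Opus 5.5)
The plan is a telescoping decomposition of $W_{j^{(1)}}^{-1}W_iW_{j^{(2)}}^{-1}-M$ into three terms. Each term isolates the deviation of a \emph{single} agent's inverse technology from $M=\mathbb{E}W_i^{-1}$, multiplied by a bounded factor. Abbreviate $A\coloneqq W_{j^{(1)}}$, $B\coloneqq W_i$, $C\coloneqq W_{j^{(2)}}$. Using $M B B^{-1}=M$, we write
\[
	A^{-1}BC^{-1}-M
	= (A^{-1}-M)BC^{-1} + MB(C^{-1}-M) + MB(M-B^{-1}).
\]
This identity holds for every realization of the walks, including coincident indices such as $j^{(1)}=j^{(2)}$ or $j^{(1)}=i$. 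The decomposition never uses independence between $A$, $B$ and $C$, which is exactly what we cannot assume by the discussion preceding \cref{lem:var-decomp}.

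Next, we bound the multipliers in operator norm and use $\|XY\|_{\mathrm F}\le \|X\|_2\|Y\|_{\mathrm F}$ and $\|XY\|_{\mathrm F}\le \|X\|_{\mathrm F}\|Y\|_2$. Since $\ubar\lambda I\preceq W_j\preceq \bar\lambda I$ for every agent, we have $\|BC^{-1}\|_2\le \bar\lambda/\ubar\lambda=\kappa_W$. Also $M=\mathbb{E}W_i^{-1}$ is an average of matrices in $[\bar\lambda^{-1}I,\ubar\lambda^{-1}I]$, so $\|M\|_2\le\ubar\lambda^{-1}$ and hence $\|MB\|_2\le \kappa_W$. Applying $\|a+b+c\|^2\le 3(\|a\|^2+\|b\|^2+\|c\|^2)$ then gives the pointwise bound
\[
	\|A^{-1}BC^{-1}-M\|_{\mathrm F}^{2}\le 3\kappa_W^{2}\bigl(\|A^{-1}-M\|_{\mathrm F}^{2}+\|C^{-1}-M\|_{\mathrm F}^{2}+\|B^{-1}-M\|_{\mathrm F}^{2}\bigr).
\]

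Finally, we take $\mathbb{E}[\,\cdot\given\mathcal{F}_{\X}]$. Each of the three summands now depends on one agent's technology only; that agent's identity ($j^{(1)}$, $j^{(2)}$ or $i$) is $\mathcal{F}_{\X}$-measurable. By \cref{asmp:indep}, every agent's preference has the same marginal law, and this law is independent of $\mathcal{F}_{\X}$. Hence each conditional expectation equals $\mathbb{E}\|W^{-1}-\mathbb{E}W^{-1}\|_{\mathrm F}^2=\mathbb{E}\|W^{-1}\|_{\mathrm F}^2-\|\mathbb{E}W^{-1}\|_{\mathrm F}^2=\sigma_{\A}^2$ by \cref{def:het-pref}, which yields the bound $9\kappa_W^2\sigma_{\A}^2$.

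The step needing the most care is this last conditioning. We must check that independence is invoked only through single-agent marginals, which is the only form \cref{asmp:indep} provides, and that the random index (e.g.\ the endpoint $j^{(1)}$) is handled by conditioning on it as part of $\mathcal{F}_{\X}$. This is exactly why the deviations are grouped agent by agent rather than attempting a product-variance formula.
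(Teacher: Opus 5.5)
Your proof is correct, and it uses a different decomposition from the paper's. The paper writes
$W_{j^{(1)}}^{-1}W_iW_{j^{(2)}}^{-1}-M=(W_{j^{(1)}}^{-1}-W_i^{-1})W_iW_{j^{(2)}}^{-1}+(W_{j^{(2)}}^{-1}-W_i^{-1})+(W_i^{-1}-M)$.
That is, it centers the endpoints' deviations at the queried agent's own $W_i^{-1}$ rather than at $M$. The pairwise differences then have conditional second moments $2(1-\mathrm{Corr}_{\mathcal{F}_{\X}})\sigma_{\A}^2$, and this is deliberate: the resulting bound \cref{eq:corr} shrinks under positive preference correlation between the queried agent and the walk endpoints, and the paper reuses it in \cref{prop:pref-corr}.

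Your centering at $M$ makes each term a single-agent deviation with conditional second moment exactly $\sigma_{\A}^2$. This gives $9\kappa_W^2\sigma_{\A}^2$ directly, with no correlation bookkeeping. The cost is that it is blind to correlation, so it cannot be reused for the variance-reduction result.

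In fact your route is the cleaner justification of the stated constant. Taken at face value with $\mathrm{Corr}_{\mathcal{F}_{\X}}\in[-1,1]$, \cref{eq:corr} only gives $3(4\kappa_W^2+5)\sigma_{\A}^2$ in the worst case, which exceeds $9\kappa_W^2\sigma_{\A}^2$.

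Both arguments share one implicit requirement. $\ubar{\lambda}$ and $\bar{\lambda}$ must be almost-sure bounds on the support of the preference distribution, not merely realized extremes. You need this for $\|M\|_2\le\ubar{\lambda}^{-1}$, and both proofs need it to pull $\kappa_W$ out of the conditional expectation.
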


\begin{proof}
	The reference point $M$ helps decouple the product:
	\[
		W_{j^{(1)} }^{-1}W_{i}W_{j^{(2)} }^{-1} - M
		= (W_{j^{(1)} }^{-1} - W_{i}^{-1} ) W_{i} W_{j^{(2)} }^{-1} + (W_{j^{(2)} }^{-1} - W_{i}^{-1}) + (W_{i}^{-1}  -M)
		.\]
	First note that we have a trivial bound
	\[\label{eq:varA-trivial}
		\|W_{j^{(1)} }^{-1}W_{i}W_{j^{(2)} }^{-1} - M\| \le 9 \bar{\lambda}^{2} / \underline{\lambda}^{4}
		.\]
	By \cref{asmp:indep}, $W_{i}^{-1} ,W_{j^{(1)} }^{-1},W_{j^{(2)} }^{-1}$ have the same marginal distribution, giving
	\[
		\mathbb{E}\|W_{i}^{-1}	-M\|_{\mathrm{F}}^{2} = \mathbb{E}\|W_{j^{(1)} }^{-1} - W_{i}^{-1} \|_{\mathrm{F}}^{2} = \mathbb{E}\|W_{j^{(2)} }^{-1}  - W_{i}^{-1} \|_{\mathrm{F}}^{2} = \sigma_{\A}^{2}
		.\]
	Let $\mathbb{E}_{\mathcal{F}_{\X}}$ be the expectation conditioned on $\mathcal{F}_{\X}$.
	We have
	\[
		&\mathbb{E}_{\mathcal{F}_{\X}}\|W_{j^{(1)} }^{-1}W_{i}W_{j^{(2)} }^{-1} - M\|^{2}\\
		\le& 3 \left( \kappa_{W}^{2} \mathbb{E}_{\mathcal{F}_{\X}}\|W_{j^{(1)}}^{-1} - W_{i}^{-1} \|_{\mathrm{F}}^{2} + \mathbb{E}_{\mathcal{F}_{\X}}\|W_{j^{(2)}}^{-1}  - W_{i}^{-1} \|_{\mathrm{F}}^{2} + \mathbb{E}_{\mathcal{F}_{\X}}\|W_{i}^{-1}-\mathbb{E}W_{i}^{-1}  \|_{\mathrm{F}}^{2} \right) \\
		=& 3 ( 2\kappa_{W}^{2}(1-\mathrm{Corr}_{\mathcal{F}_{\X}}(W_{j^{(1)}}^{-1} , W_{i}^{-1} )) + 2(1-\mathrm{Corr}_{\mathcal{F}_{\X}}(W_{j^{(2)}}^{-1} , W_{i}^{-1} )) + 1) \sigma_{\A}^{2}
		,\]
	where $\mathrm{Corr}_{\mathcal{F}_{\X}}$ returns the correlation conditioned on $\mathcal{F}_{\X}$.
	By symmetry, we also have
	\[
		&\mathbb{E}_{\mathcal{F}_{\X}}\|W_{j^{(1)} }^{-1}W_{i}W_{j^{(2)} }^{-1} - M\|^{2} \\
		\le& 3 ( 2\kappa_{W}^{2}(1-\mathrm{Corr}_{\mathcal{F}_{\X}}(W_{j^{(2)}}^{-1} , W_{i}^{-1} )) + 2(1-\mathrm{Corr}_{\mathcal{F}_{\X}}(W_{j^{(1)}}^{-1} , W_{i}^{-1} )) + 1) \sigma_{\A}^{2}
		.\]
	Note that $i,j^{(1)},j^{(2)}$ are $\mathcal{F}_{\X}$-measurable. Thus, we can define $\Corr_{\mathcal{F}_{\X}}\!\coloneqq \!\frac{1}{2}(\mathrm{Corr}_{\mathcal{F}_{\X}}(W_{j^{(1)}}^{-1} , W_{i}^{-1} ) + \mathrm{Corr}_{\mathcal{F}_{\X}}(W_{j^{(2)}}^{-1} , W_{i}^{-1} ))$ to get
	\[\label{eq:corr}
		\mathbb{E}_{\mathcal{F}_{\X}}\|W_{j^{(1)} }^{-1}W_{i}W_{j^{(2)} }^{-1} - M\|^{2}
		\le 3 ( 2(\kappa_{W}^{2} + 1)(1-\mathrm{Corr}_{\mathcal{F}_{\X}}) + 1) \sigma_{\A}^{2}
		.\]
	Or, defining $\bar{\Corr}_{\mathcal{F}_{\X}}\coloneqq \max \{ \Corr_{\mathcal{F}_{\X}}(W_{j^{(1)} }^{-1},W_{i}^{-1}  ),\Corr_{\mathcal{F}_{\X}}(W_{j^{(2)} }^{-1},W_{i}^{-1}  )  \}$, we get
	\[\label{eq:corr-1}
		\mathbb{E}_{\mathcal{F}_{\X}}\|W_{j^{(1)} }^{-1}W_{i}W_{j^{(2)} }^{-1} - M\|^{2}
		\le 3 ( 2\kappa_{W}^{2}(1-\bar{\Corr}_{\mathcal{F}_{\X}}) + 5) \sigma_{\A}^{2}
		.\]
	Note that $\kappa_{W}\ge 1$ and $\Corr_{\mathcal{F}_{\X}}, \bar{\Corr}_{\mathcal{F}_{\X}}\in [-1,1]$.
	Thus, a universal bound is
	\[
		\mathbb{E}\|W_{j^{(1)} }^{-1}W_{i}W_{j^{(2)} }^{-1} - M\|^{2} \le 9 \kappa_{W}^{2}\sigma_{\A}^{2}\le 9 \bar{\lambda}^{2}/ \ubar{\lambda}^{2} \sigma_{\A}^{2}
		.\]

\end{proof}

For a system with a wide range of technologies, $\kappa_{W}$ can be large and becomes a bottleneck. In such cases, a positive correlation $\Corr_{\mathcal{F}_{\X}}$ can be of huge help.
Thus, \cref{eq:corr} is a strictly sharper bound than the universal $V_{\A}\coloneqq 9\kappa_{W}^{2}\sigma_{\A}^{2}$, and we will revisit it later once we introduce the correlation structure.

\begin{lemma}\label{lem:varX}
	We have
	\[
		V_{\X} = \Var(d_1d_2) = \sigma_{\X}^{2}
		.\]
\end{lemma}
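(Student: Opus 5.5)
The plan is to compute the first two moments of $d_1d_2$ exactly by conditioning on the polled agent $i$, and then match each moment with one term of \cref{def:het-net}. The key structural fact is that, given $i$, the two walks $(k_1,\xi_1)$ and $(k_2,\xi_2)$ are i.i.d. This gives
\[
\mathbb{E}[d_1d_2\given i]=\bigl(\mathbb{E}[d_1\given i]\bigr)^{2},\qquad \mathbb{E}[d_1^{2}d_2^{2}\given i]=\bigl(\mathbb{E}[d_1^{2}\given i]\bigr)^{2},
\]
and therefore
\[
\Var(d_1d_2)=\mathbb{E}_i\bigl(\mathbb{E}[d_1^{2}\given i]\bigr)^{2}-\bigl(\mathbb{E}_i(\mathbb{E}[d_1\given i])^{2}\bigr)^{2},
\]
where $\mathbb{E}_i$ is the empirical average over agents. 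I take the normalization consistent with \cref{lem:var-decomp}: the prefactor $(1-\delta)^{-2}$ has already been pulled out of $\hat\Phi_t$ there, so $d_1,d_2$ are the plain degree products and the geometric weights are as in Step 3.

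\textbf{First moment.} By the walk identity \cref{eq:walk}, applied with $\tilde G$ replaced by the unweighted $G$ (all matrix factors removed), we have $\mathbb{E}[\prod_{l<k}\deg_{j_l}\given i,k]=[G^{k}\mathbf 1]_i$. Averaging over $k\sim\operatorname{Geom}(\delta)$ turns $\sum_k\delta^{k}G^{k}\mathbf 1$ into the Neumann series of $(I-\delta G)^{-1}\mathbf 1$. This yields $\mathbb{E}[d_1\given i]\propto\cen_i$, so the subtracted term is $|\mathbb{E}\cen_i^{2}|^{2}$.

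\textbf{Second moment.} Each step contributes
\[
\mathbb{E}_{j_{l+1}\sim\operatorname{Unif}(\X_{j_l})}\bigl[\deg_{j_l}^{2}f(j_{l+1})\bigr]=\deg_{j_l}\sum_{j}G_{j_lj}f(j),
\]
so by the same telescoping induction as in \cref{eq:walk} I get $\mathbb{E}[\prod_{l<k}\deg_{j_l}^{2}\given i,k]=[(DG)^{k}\mathbf 1]_i$. Summing the geometric series then gives $\mathbb{E}[d_1^{2}\given i]\propto[(I-\delta DG)^{-1}\mathbf 1]_i$.

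To reach the form $\hat\cen_i\check\cen_i$ of \cref{def:het-net}, I will use the similarity $DG=D^{1/2}\widecheck G D^{-1/2}$ with $\widecheck G=D^{1/2}GD^{1/2}$, which gives $(I-\delta DG)^{-1}=D^{1/2}(I-\delta\widecheck G)^{-1}D^{-1/2}$. The remaining factor coming from the geometric length distribution (the $(1-\delta)$ weight) is exactly the base-network centrality $\hat\cen_i=(1-\delta)^{-1}$ of $\widehat G=I$. Squaring and averaging over $i$ then produces $\mathbb{E}|\hat\cen_i\check\cen_i|^{2}$. Subtracting the first-moment term gives $V_\X=\mathbb{E}|\hat\cen_i\check\cen_i|^{2}-|\mathbb{E}\cen_i^{2}|^{2}=\sigma_\X^{2}$.

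\textbf{Main obstacle.} I expect the hard step to be this last identification, i.e. bookkeeping the constants $(1-\delta)^{\pm1}$ and the $D^{\pm1/2}$ conjugation so that the second moment lands exactly on $\hat\cen_i\check\cen_i$ rather than on a similar but differently weighted vector. I would first verify the identification on a $d$-regular graph, where $D=dI$ and both sides are explicit. I would then carry out the general case by expanding both sides as power series in $\delta$ and matching them term by term. The rearrangement $\mathbb{E}|\hat\cen_i\check\cen_i|^{2}-|\mathbb{E}\cen_i^{2}|^{2}=\Var(\cen_i^{2})+\dots$ in \cref{def:het-net} is then a purely algebraic add-and-subtract of $\mathbb{E}\cen_i^{4}$ and needs no further argument.
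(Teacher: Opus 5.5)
Your moment computations are the same as the paper's proof: condition on $i$, use that the two walks are i.i.d.\ given $i$, compute $\mathbb{E}[\prod_{l<k}\deg_{j_l}\given i,k]=[G^{k}\mathbf{1}]_i$ and $\mathbb{E}[\prod_{l<k}\deg_{j_l}^{2}\given i,k]=[(DG)^{k}\mathbf{1}]_i$, then resum the geometric series. With $c\coloneqq(I-\delta DG)^{-1}\mathbf{1}$ and plain degree products, they give
\[
\Var(d_1d_2)=(1-\delta)^{2}\,\mathbb{E}_i c_i^{2}-(1-\delta)^{4}\bigl(\mathbb{E}_i\cen_i^{2}\bigr)^{2}=(1-\delta)^{4}\Bigl(\mathbb{E}_i\bigl[\hat{\cen}_i^{2}c_i^{2}\bigr]-\bigl(\mathbb{E}_i\cen_i^{2}\bigr)^{2}\Bigr).
\]
This coincides with the paper's last display, before the paper declares it equal to $\sigma_\X^{2}$.

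The gap is the identification step you flagged, and it cannot be fixed by bookkeeping. Your similarity gives $c=D^{1/2}(I-\delta\widecheck{G})^{-1}D^{-1/2}\mathbf{1}$, not $\check{\cen}=(I-\delta\widecheck{G})^{-1}\mathbf{1}$. The two vectors do not even have the same norm: $\|c\|^{2}=n+2\delta\sum_i\deg_i^{2}+O(\delta^{2})$, whereas $\|\check{\cen}\|^{2}=n+2\delta\sum_{i,j}G_{ij}\sqrt{\deg_i\deg_j}+O(\delta^{2})$. For the undirected star on three nodes, the linear coefficients are $12$ and $8\sqrt{2}$, so your term-by-term matching fails at first order in $\delta$. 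On a $d$-regular graph, $D=dI$ gives $c=\check{\cen}$, so the sanity check you propose would pass and mislead you. With $\widecheck{G}=D^{1/2}GD^{1/2}$ as in \cref{def:het-net}, the identity is false in general. What the argument (yours or the paper's) actually proves is the version with $\check{\cen}$ replaced by $(I-\delta DG)^{-1}\mathbf{1}$, i.e.\ with degree-scaled network $DG$.

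The constants are also off under the normalization you chose. The geometric weight multiplying $[(I-\delta DG)^{-1}\mathbf{1}]_i$ is $(1-\delta)$, which is the reciprocal of $\hat{\cen}_i=(1-\delta)^{-1}$, not equal to it. The display above therefore shows that plain degree products give $(1-\delta)^{4}$ times the target. The factor $\hat{\cen}_i^{2}$ appears only after rescaling each degree product by $(1-\delta)^{-1}$, as in $\hat{\Phi}_t$. The paper's proof carries the $(1-\delta)^{4}$ in its final display and then drops it; its remark $\|(I-\delta I)^{-1}\mathbf{1}\|^{2}=(1-\delta)^{-1}n$ should read $(1-\delta)^{-2}n$. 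A correct statement of what the computation proves is
\[
\Var\bigl((1-\delta)^{-2}d_1d_2\bigr)=\mathbb{E}_i\bigl[\hat{\cen}_i^{2}\,[(I-\delta DG)^{-1}\mathbf{1}]_i^{2}\bigr]-\bigl(\mathbb{E}_i\cen_i^{2}\bigr)^{2}.
\]
This equals $\sigma_\X^{2}$ on regular graphs, and in general only after redefining $\check{\cen}\coloneqq(I-\delta DG)^{-1}\mathbf{1}$.
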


\begin{proof}
	In this proof we adopt a modular approach so we can reuse the calculations later.
	Conditioned on $i$, two random walks $\xi_1$ and $\xi_2$ are i.i.d. We thus drop the subscript to refer to either of them.

	By our sampling procedure, we have
	\[
		\mathbb{E}\left[ \prod_{l=0}^{k-1}\deg_{j_{l}}\Given i,k \right]
		= \sum_{|\xi|=k} \prod_{i=0}^{k-1}\deg_{j_{i}} \cdot \frac{1}{\deg_{i}} \frac{1}{\deg_{j_{1}}}\cdots \frac{1}{\deg_{j_{k-1}}}
		= \sum_{|\xi|=k}1,
	\]
	where the summation counts the number of distinct walks of length $k$ starting from $x$, which is equivalent to
	\[\label{eq:d1}
		\mathbb{E}\left[ \prod_{l=0}^{k-1}\deg_{j_{l}}\Given i,k \right]
		= \sum_{|\xi|=k}1 = \sum_{j_{1}\in\X, \dots, j_{k}\in \X} G_{ij_{1}}G_{j_{1}j_{2}}\cdots G_{j_{k-1}j_{k}} = e_{i}^T G^{k}\mathbf{1}.
	\]
	Integrating over $k$ gives
	\[
		\mathbb{E}\left[ \prod_{l=0}^{k-1}\deg(j_{l})\Given i \right] = (1-\delta) e_{i}^T(I-\delta G)^{-1} \mathbf{1}.
	\]
	Integrating over $i$ gives
	\[
		\mathbb{E}[d_{1}d_{2}] =& \mathbb{E}\left[ (1-\delta)^{2} (e_{i}^T(I-\delta G)^{-1}\mathbf{1})^{2} \right] \\
		=& (1-\delta)^{2} \left< (I-\delta G)^{-1}\mathbf{1},\mathbb{E}[e_{i}e_{i}^{T}](I-\delta G)^{-1}\mathbf{1} \right> \\
		=& \frac{(1-\delta)^{2}}{n}\|(1-\delta G)^{-1}\mathbf{1}\|^{2}
		.\]

	The second moment is calculated similarly:
	\[\label{eq:d2k}
		\mathbb{E}\left[\prod_{l=0}^{k-1}\deg_{j_{l}}^{2} \Given i,k\right] =& \sum_{|\xi|=k} \prod_{l=0}^{k-1}\deg_{j_{l}}^{2} \cdot \frac{1}{\deg_{i}} \frac{1}{\deg_{j_{1}}}\cdots \frac{1}{\deg_{j_{k-1}}} \\
		=& \sum_{|\xi|=k} \prod_{l=0}^{k-1}\deg_{j_{l}} \\
		=& \sum_{j_{1}\in\X,\cdots, j_{k}\in\X} \prod_{l=0}^{k-1}\deg_{j_{l}} G_{ij_{1}}\cdots G_{j_{k-1}j_{k}}  \\
		=& \sum_{j_{1}\in\X,\cdots, j_{k}\in\X} D_{ii}G_{ij_{1}}D_{j_{1}j_1}\cdots G_{j_{k-2}j_{k-1}}D_{j_{k-1}j_{k-1}}G_{j_{k-1}j_{k}}  \\
		=& e_{i}^T(DG)^{k} \mathbf{1}
		,\]
	where $D = \operatorname{diag}(\deg_{i})$ is the diagonal degree matrix.
	Integrating over $k$ gives
	\[\label{eq:d2i}
		\mathbb{E}\left[\prod_{l=0}^{k-1}\deg_{j_{l}}^{2} \Given i\right] = (1-\delta) e_{i}^T (I-\delta DG)^{-1} \mathbf{1}
		.\]
	Then,
	\[\label{eq:d2}
		\mathbb{E}[d_{1}^{2}d_{2}^{2}] = \mathbb{E} [(1-\delta)^{2} \left<  (I-\delta DG)^{-1} \mathbf{1}, e_{i}e_{i}^{T} (I-\delta DG)^{-1} \mathbf{1} \right>]
		= \frac{(1-\delta)^{2}}{n}\|(I-\delta DG)^{-1} \mathbf{1}\|^{2}
		.\]

	Note that $\|(I-\delta I)^{-1}\mathbf{1}\|^{2} = (1-\delta)^{-1}n$. Thus, we have
	\[
		\Var(d_1d_2) = \frac{(1-\delta)^{4}}{n^{2}} \left(\|(I-\delta DG)^{-1} \mathbf{1}\|^{2}\|(I-\delta I)^{-1}\mathbf{1}\|^{2}  - \|(I-\delta G)^{-1}\mathbf{1}\|^{4} \right)
		= \sigma_{\X}^{2}
		.\]
	A direct consequence is a trivial bound:
	\[\label{eq:varX-trivial}
		\Var(d_1d_2) \le \mathbb{E}[d_1^{2}d_2^{2}] \le (1-\delta)^{2}(1-\delta \sigma_{\max}(DG))^{-2} \le (1-\delta)^{2}(1-\delta \bar{d}^{2})^{-2}
		.\]
\end{proof}

Combining \cref{lem:err-dyn,lem:var-decomp,lem:varA,lem:varX} gives the desired query complexity in \cref{thm:query}.

\section{Variance reduction}\label{apx:var}

Throughout this section, we denote $\sigma_{\X}^{\mathrm{bl}},\sigma_{\A}^{\mathrm{bl}}$ as the standard deviations defined in \cref{def:het-pref,def:het-net} under the baseline model, and $\sigma_{\X}^{\mathrm{vr}},\sigma_{\A}^{\mathrm{vr}}$ as those after variance reduction.

\subsection{Control variate}\label{apx:cv}

Notice that the centrality in \cref{eq:cen-planner} is equivalent to
\[
	\tilde{\cen} = \sum_{k=0}^{\infty} \delta^{k}\tilde{G}^{k}\mathbf{1}
	= \mathbf{1} + \delta\sum_{k=0}^{\infty} \delta^{k}\tilde{G}^{k+1}\mathbf{1}
	= \mathbf{1} + \delta(1-\delta)^{-1} \mathbb{E}_{k \sim \operatorname{Geom}(\delta)} \tilde{G}^{k+1}\mathbf{1}
	.\]
If we sample the walks using the above expression, although we would have a one-step longer walk in expectation, the whole walk is discounted by an additional $\delta$, leading to a much smaller variance when $\delta$ is small (\cref{assump:net}).

Specifically, the algorithm is modified as follows: we sample the agent $i$ and $k_1 ,k_2 \sim \operatorname{Geom}(\delta)$ as before, but we set the walk lengths to be $k_1+1$ and $k_2+1$ instead of $k_1$ and $k_2$.
Let $d_1$ and $d_2$ be the degree products of the walks calculated the same way as before.
The agent composes the estimator as
\[
	\hat{\Phi} =& \left( I_{m} + \delta(1-\delta)^{-1}d_1  \tilde{W}_{j^{(1)}_{k_1+1}}^{-1} \tilde{W}_{i} \right) \tilde{W}_{i}^{-1} \left( I_{m} + \delta(1-\delta)^{-1}d_2 \tilde{W}_{i}\tilde{W}_{j^{(2)}_{k_{2}+1}}^{-1} \right) \\
	=& \left( \tilde{W}_{i}^{-1} + \delta(1-\delta)^{-1} \left( d_1  \tilde{W}_{j^{(1)}_{k_1+1}}^{-1} + d_2 \tilde{W}_{j^{(2)}_{k_{2}+1}}^{-1} \right) + \delta^{2}(1-\delta)^{-2}d_1d_2 \tilde{W}_{j^{(1)}_{k_1+1}}^{-1}\tilde{W}_{i}\tilde{W}_{j^{(2)}_{k_{2}+1}}^{-1} \right)
	\label{eq:cv}
	,\]
and then reports $\hat{\Phi}q_t$ as before.

\begin{proposition}\label{prop:cv}
	The above update has the same operation complexity, but enjoys the following variance reduction:
	\[
		\left(\frac{\sigma^{\mathrm{vr}}_{\X}}{\sigma_{\X}^{\mathrm{bl}}}\right)^{2} \lesssim \delta^{2} (\delta^{2} \vee (1-\delta\sigma_{\max}(DG)))
		.\]
\end{proposition}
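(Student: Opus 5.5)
The plan is to redo the variance bookkeeping of \cref{lem:var-decomp,lem:varX} for the modified estimator \cref{eq:cv}, with the network part now split into a linear piece and a quadratic piece.

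\textbf{Operation complexity.} This is immediate. The walk lengths are $k_x+1$ rather than $k_x$, so the expected walk cost rises by only $O(1)$. The report $\hat{\Phi}q_t$ is still formed by three or four matrix--vector products of size $m$, so the per-iteration cost stays $O(m^{2})$.

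\textbf{Network randomness.} Conditioned on the preferences, \cref{eq:cv} is a deterministic term $\tilde{W}_i^{-1}$ plus two pieces.
\begin{itemize}
\item A linear piece $\delta(1-\delta)^{-1}(d_1\tilde{W}_{j^{(1)}}^{-1}+d_2\tilde{W}_{j^{(2)}}^{-1})$, whose network variance is driven by $\Var(d_x)$.
\item A quadratic piece $\delta^{2}(1-\delta)^{-2}d_1d_2\,\tilde{W}^{-1}_{j^{(1)}}\tilde{W}_i\tilde{W}^{-1}_{j^{(2)}}$, whose network variance is driven by $\Var(d_1d_2)$ for the extended walks.
\end{itemize}
I would define $(\sigma_{\X}^{\mathrm{vr}})^{2}$ as the sum of these two contributions, up to a constant from $(a+b)^2\le 2a^2+2b^2$. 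I would then reuse the modular computation in the proof of \cref{lem:varX}, shifting the power by one in \cref{eq:d1,eq:d2k}. This gives $\mathbb{E}[d\mid i,k]=e_i^TG^{k+1}\mathbf{1}$ and $\mathbb{E}[d^{2}\mid i,k]=e_i^T(DG)^{k+1}\mathbf{1}$. Summing over $k\sim\operatorname{Geom}(\delta)$ yields
\[
\mathbb{E}[d\mid i]=(1-\delta)e_i^TG(I-\delta G)^{-1}\mathbf{1},\qquad \mathbb{E}[d^{2}\mid i]=(1-\delta)e_i^TDG(I-\delta DG)^{-1}\mathbf{1}.
\]
Averaging over $i$, and over two independent walks for the product term, gives closed forms analogous to \cref{eq:d2}.

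\textbf{Comparing with the baseline.} Take $(\sigma^{\mathrm{bl}}_{\X})^2=\Var(d_1d_2)$ from \cref{lem:varX}.
\begin{itemize}
\item The quadratic piece carries the prefactor $\delta^{4}(1-\delta)^{-4}$ after squaring. Its second moment is $\frac{(1-\delta)^2}{n}\|DG(I-\delta DG)^{-1}\mathbf{1}\|^2$, against the baseline's $\frac{(1-\delta)^2}{n}\|(I-\delta DG)^{-1}\mathbf{1}\|^2$. Since $DG(I-\delta DG)^{-1}=\delta^{-1}\bigl((I-\delta DG)^{-1}-I\bigr)$, the extra $\delta^{-1}$ is absorbed into one factor of $\delta$. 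The contribution is therefore $O(\delta^{2}\cdot\delta^{2})$ relative to the baseline; this is the $\delta^{4}$ branch.
\item The linear piece carries $\delta^{2}(1-\delta)^{-2}$. Its second moment involves a \emph{single} resolvent $\|(I-\delta DG)^{-1}\mathbf{1}\|$, scaling like $(1-\delta\sigma_{\max}(DG))^{-1}$, whereas the baseline product of two walks scales like $(1-\delta\sigma_{\max}(DG))^{-2}$. Dividing gives $\delta^{2}(1-\delta\sigma_{\max}(DG))$. This is the second branch.
\end{itemize}
Taking the maximum of the two branches gives the claimed $\delta^{2}(\delta^{2}\vee(1-\delta\sigma_{\max}(DG)))$.

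\textbf{Main obstacle.} The upper bounds on the new moments are routine. The ratio, however, needs a matching \emph{lower} bound on the baseline $\Var(d_1d_2)$, namely that it is comparable to its second moment $\asymp(1-\delta\sigma_{\max}(DG))^{-2}$. This is exactly what the ``$\lesssim$'' hides. I would first try to show that $\mathbb{E}[d_1^2d_2^2]-(\mathbb{E}d_1d_2)^2$ is at least a constant fraction of $\mathbb{E}[d_1^2d_2^2]$. The route is to isolate the $k_1=k_2=0$ atom, of probability $(1-\delta)^2$ and value $1$, against the heavy tail along the top singular direction of $DG$. If that fails in general, I would state the ratio in terms of second moments, i.e.\ the trivial bound \cref{eq:varX-trivial}, and interpret $\lesssim$ accordingly.
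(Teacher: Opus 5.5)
Your skeleton matches the paper's proof. Like you, the paper bounds $(\sigma^{\mathrm{vr}}_{\X})^{2}$ by $\delta^{2}(1-\delta)^{-2}\Var(d_1)+\delta^{4}(1-\delta)^{-4}\Var(d_1d_2)$, takes $\Var(d_1)\lesssim(1-\delta\sigma_{\max}(DG))^{-1}$ and $\Var(d_1d_2)=(\sigma^{\mathrm{bl}}_{\X})^{2}$ from \cref{lem:varX}, and divides. The difference is that the paper plugs in the moments of the \emph{unshifted} walks, while you track the length-$(k+1)$ walks. That is where your derivation breaks.

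\textbf{The shifted-walk bookkeeping.} By your own formula,
$\mathbb{E}[d_1^{2}d_2^{2}]=\frac{(1-\delta)^{2}}{n}\|DG(I-\delta DG)^{-1}\mathbf{1}\|^{2}=\frac{(1-\delta)^{2}}{n\delta^{2}}\|(I-\delta DG)^{-1}\mathbf{1}-\mathbf{1}\|^{2}$.
So the identity costs $\delta^{-2}$ (one $\delta^{-1}$ per walk), not $\delta^{-1}$. Moreover, $(I-\delta DG)^{-1}\mathbf{1}-\mathbf{1}=\delta DG(I-\delta DG)^{-1}\mathbf{1}$ is smaller than $(I-\delta DG)^{-1}\mathbf{1}$ only by the factor $\delta\sigma_{\max}(DG)$, not by $\delta$. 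Hence the quadratic branch is $\delta^{4}\sigma_{\max}(DG)^{2}$, not $\delta^{4}$.

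In the linear branch you replace the $DG(I-\delta DG)^{-1}$ appearing in your own $\mathbb{E}[d^{2}\mid i]$ by $(I-\delta DG)^{-1}$. This drops a factor of up to $\sigma_{\max}(DG)$.

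These factors are genuine. On an undirected $\bar d$-regular graph the shifted degree products are exactly $\bar d$ times the unshifted ones. The two branches are therefore inflated by exactly $\bar d^{2}=\sigma_{\max}(DG)$ and $\bar d^{4}=\sigma_{\max}(DG)^{2}$. Also, $\sigma_{\max}(DG)\ge1$ on any unweighted graph with an edge.

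What your computation actually gives is $\delta^{2}\bigl((\delta\sigma_{\max}(DG))^{2}\vee\sigma_{\max}(DG)(1-\delta\sigma_{\max}(DG))\bigr)$. To reach the stated form you must either let the constant in $\lesssim$ depend on the degrees, or, as the paper does, not track the extra degree factor at all.

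\textbf{The lower bound on the baseline.} You are right that this is the crux. The paper does not resolve it; it simply asserts $\sigma^{\mathrm{bl}}_{\X}\asymp(1-\delta\sigma_{\max}(DG))^{-2}$.

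Your first route cannot work, because the $k_1=k_2=0$ atom dominates for small $\delta$. Expanding the formulas of \cref{lem:varX} to first order gives $\Var(d_1d_2)=\frac{2\delta}{n}\sum_{i}(\deg_i-1)^{2}+O(\delta^{2})\to0$, while $\mathbb{E}[d_1^{2}d_2^{2}]\ge(1-\delta)^{2}$. So the variance is not a constant fraction of the second moment, and the route through $\Var(d_1d_2)\gtrsim(1-\delta\sigma_{\max}(DG))^{-2}$ is closed.

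Your fallback of comparing second moments is the workable reading, but it has its own requirement. A ratio of upper bounds controls the true ratio only if the denominator's bound is tight, i.e.\ $\frac{1}{n}\|(I-\delta DG)^{-1}\mathbf{1}\|^{2}\gtrsim(1-\delta\sigma_{\max}(DG))^{-2}$. This holds on regular graphs, where $(I-\delta DG)^{-1}\mathbf{1}=(1-\delta\bar d^{2})^{-1}\mathbf{1}$. It fails on a star with $n$ nodes: there the spectral radius of $DG$ is $n-1$ while $\sigma_{\max}(DG)=(n-1)^{3/2}$, and $\frac{1}{n}\|(I-\delta DG)^{-1}\mathbf{1}\|^{2}$ stays bounded as $\delta\sigma_{\max}(DG)\uparrow1$. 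State this tightness as an explicit assumption rather than hiding it in $\lesssim$.
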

\begin{proof}
	Note that $\tilde{W}_{j}^{-1} = \tilde{W}_{j}^{-1} \tilde{W}_{j} \tilde{W}_{j}^{-1}$ for any $j\in\X$. Thus, the variance of the preference products in \cref{eq:cv} is still bounded by $\sigma_{\A}$ as before. The variance of the degree products becomes
	\[
		\sigma_{\X}^{\mathrm{vr}} \lesssim \delta^{2}(1-\delta)^{-2}\Var(d_1) + \delta^{4} (1-\delta)^{-4}\Var(d_1d_2)
		.\]
	By \cref{lem:varX}, $\Var(d_1d_2)=(\sigma_{\X}^{\mathrm{bl}})^{2}$ and
	\[
		\Var(d_1) = \frac{1-\delta}{n}\mathbf{1}^{T}(I-\delta DG)^{-1} \mathbf{1} - \frac{1-\delta}{n}\|(I-\delta G)^{-1}\mathbf{1}\|^{2} \lesssim (1 - \delta \sigma_{\max}(DG))^{-1}
		.\]
	Note that $\sigma^{\mathrm{bl}}_{\X} \asymp (1-\delta \sigma_{\max}(DG))^{-2}$. Combining the above gives the desired result.
\end{proof}

Notably, the above technique, purely motivated by our optimal intervention characterizations, coincides with the standard \emph{control variate} technique \citep{zhang2026affpcl}.
A control variate is a random variable with known expectation that is positively correlated with the original estimator. Thus, subtracting the control variate and adding its expectation reduces the variance.
Alternatively, if we know the expectation of a \emph{part} of the estimator, then we only need to sample the residual part, and the difference between the original estimator and the residual is a control variate.
Formally,
\[
	& X = X^{\mathrm{cv}} + X^{\mathrm{residual}},\quad\text{with } \mathbb{E}X^{\mathrm{cv}} \text{ known}\\
	\implies & X^{\mathrm{new}} = \mathbb{E}[X^{\mathrm{cv}}] + X^{\mathrm{residual}} = X - X^{\mathrm{cv}} + \mathbb{E}[X^{\mathrm{cv}}] \text{ is unbiased and variance reduced}
	.\]
In our case, when $k=0$, we do not need to sample the walk, and we know the expectation of this scenario. This base case happens to be the most significant part of the estimator due the exponential tail, and thus serves as a very effective control variate that correlates well with the original estimator.
Then, the new estimator contributes the residual part.

\subsection{Preference correlation}

Real-world network systems often observe correlated preferences within local neighborhoods.
There are two scenarios in which a system generation model satisfying \cref{asmp:indep} produces positive correlation: preferences are developed on a given network, e.g., through assimilation or social learning \citep{acemoglu2011BayesianLearning}, or the network is formed based on preference similarity, e.g., through homophily or assortative mixing \citep{newman2018Networks}.

\begin{proposition}\label{prop:pref-corr}
	Suppose there exists a non-increasing function $r\colon [\bar{k}]\to [-1,1]$ with $\bar{k}\ge 1$ and $[\bar{k}]\coloneqq\left\{ 0,\dots,\bar{k} \right\}$ such that $r(0)=1$ and $r(1)>0$, such that for any two agents $i,j\in\X$,
	\[
		\Corr(W_{i}^{-1},W_{j}^{-1}) \ge r(\dist(i,j))
		,\]
	where $\Corr$ is w.r.t. the preference generation and Frobenius inner product, and $\dist$ returns the shortest path distance.
	Then,
	\[
		\left(\frac{\sigma_{\A}^{\mathrm{vr}}}{\sigma_{\A}^{\mathrm{bl}}}\right)^{2} \lesssim \frac{ \frac{5}{4\kappa_{W}^{2}} + \min_{k\in[\bar{k}]}f_{\delta}(k)}{\frac{5}{4\kappa_{W}^{2}} + 1}
		,\]
	where $f_{\delta}(k) \coloneqq (2(1+\delta^{4}))^{-1} (1-r(k) + 2\delta^{2k+2})$ and $\min_{k}f_{\delta}(k) \le f_{\delta}(1) = 1-\frac{1 + r(1)}{2(1+\delta^{4})} < 1$.
\end{proposition}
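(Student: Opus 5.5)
Starting from the correlation-aware bound \cref{eq:corr-1} in the proof of \cref{lem:varA}, the plan is to compare it against the baseline $V_{\A}^{\mathrm{bl}}$, which corresponds to uncorrelated preferences. Under the baseline (no correlation, $\bar{\Corr}_{\mathcal{F}_{\X}}=0$), \cref{eq:corr-1} gives $V_{\A}^{\mathrm{bl}}\asymp 3(2\kappa_{W}^{2}+5)\sigma_{\A}^{2}$. This is proportional to $\kappa_{W}^{2}(1+\tfrac{5}{4\kappa_{W}^{2}})$ up to the constant normalization that produces the $\tfrac{5}{4\kappa_W^2}$ term in the statement. With correlation, the factor $1$ is replaced by $\mathbb{E}[1-\bar{\Corr}_{\mathcal{F}_{\X}}]$ (suitably normalized). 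The target ratio then reduces to showing
\[
\mathbb{E}\bigl[1-\bar{\Corr}_{\mathcal{F}_{\X}}\bigr]\lesssim \min_{k\in[\bar{k}]} f_{\delta}(k).
\]

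\textbf{Step 1: distance to the endpoints.} Since $\bar{\Corr}$ is a maximum over the two walks, it suffices to lower bound the correlation along one walk by $r(\dist(i,j^{(1)}))\ge r(k_1)$ whenever $k_1\le\bar{k}$, using that $r$ is non-increasing and $\dist(i,j^{(1)})\le k_1$. Fix a threshold $k\in[\bar{k}]$ and split on the event $\{\min(k_1,k_2)\le k\}$. On this event, $1-\bar{\Corr}\le 1-r(k)$. On the complement, both walks exceed length $k$, and we use the trivial bound $1-\bar{\Corr}\le 2$. Because $k_1,k_2$ are i.i.d.\ $\operatorname{Geom}(\delta)$, $\Pr(\min(k_1,k_2)>k)=\delta^{2(k+1)}$. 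This gives $\mathbb{E}[1-\bar{\Corr}]\le 1-r(k)+2\delta^{2k+2}$.

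\textbf{Step 2: normalization and optimization.} The factor $(2(1+\delta^{4}))^{-1}$ in $f_\delta$ should arise from normalizing against the baseline's effective correlation factor, computed in the same conditional way with $r\equiv 0$ beyond distance $0$. I would redo that baseline computation to pin down the constant. I would also allow the walk to return to $i$, which contributes the $\delta^{4}$ term from the two-step events. Minimizing over $k$ then yields the $\min_k f_\delta(k)$ bound. Plugging in $k=1$ gives the explicit expression $f_\delta(1)$ in the statement, and $r(1)>0$ makes it strictly below $1$.

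\textbf{Main obstacle.} I expect the hard part to be matching the precise constants $2(1+\delta^{4})$ and $\tfrac{5}{4\kappa_W^2}$. This requires being careful about whether the baseline uses the averaged $\Corr$ from \cref{eq:corr} or the max version from \cref{eq:corr-1}. Since the claim is stated only up to $\lesssim$, I would settle for the structural bound and absorb leftover absolute constants.
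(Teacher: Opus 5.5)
Your skeleton matches the paper's: split on $\{k_1\wedge k_2\le k\}$, use $\bar{\Corr}_{\mathcal{F}_{\X}}\ge r(k)$ there and $\ge -1$ otherwise, pay $\delta^{2(k+1)}$ for the tail, and minimize over $k$. But the reduction to the plain average $\mathbb{E}[1-\bar{\Corr}_{\mathcal{F}_{\X}}]$ is a genuine gap.

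The preference factor never enters the variance on its own. The term to control is $\mathbb{E}\bigl[d_1^{2}d_2^{2}\,\|W_{j^{(1)}}^{-1}W_iW_{j^{(2)}}^{-1}-M\|^{2}\bigr]$, and \cref{lem:var-decomp} factors it as $\mathbb{E}[d_1^2d_2^2]\,V_{\A}$ only when $V_{\A}$ is a uniform bound over $\mathcal{F}_{\X}$. Once you use the $\mathcal{F}_{\X}$-dependent bound \cref{eq:corr-1}, you need the degree-weighted average $\mathbb{E}[d_1^2d_2^2(1-\bar{\Corr}_{\mathcal{F}_{\X}})]/\mathbb{E}[d_1^2d_2^2]$. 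That weight is tied to walk length: by \cref{eq:d2k}, $\mathbb{E}[d_1^2d_2^2\mid k_1,k_2]=\frac{1}{n}\langle (DG)^{k_1}\mathbf{1},(DG)^{k_2}\mathbf{1}\rangle$. The long-walk event, where you use the trivial bound, is exactly where the weight concentrates. Its weighted mass is therefore not $\delta^{2(k+1)}$, and no renormalization fixes this.

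The paper handles this by conditioning on $(k_1,k_2)$ and using the tower property, so the split happens inside the weighted expectation. The short-walk part is then at most $3(2\kappa_W^2(1-r(k))+5)\sigma_{\A}^2\,\mathbb{E}[d_1^2d_2^2]$. The long-walk part is $3(4\kappa_W^2+5)\sigma_{\A}^2\,\delta^{2(k+1)}\,\mathbb{E}[d_1^2d_2^2\mid k_1\wedge k_2>k]$.

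Be aware that the paper then sets this conditional moment equal to $\mathbb{E}[d_1^2d_2^2]$ ``by memorylessness'', which drops the index shift in the powers of $DG$. In fact it equals $\frac{(1-\delta)^2}{n}\|(DG)^{k+1}(I-\delta DG)^{-1}\mathbf{1}\|^2\le\sigma_{\max}(DG)^{2(k+1)}\,\mathbb{E}[d_1^2d_2^2]$; on a $d$-regular graph it is exactly $d^{4(k+1)}\mathbb{E}[d_1^2d_2^2]$. So a careful version of either argument gives a tail term of order $(\delta\sigma_{\max}(DG))^{2(k+1)}$ rather than $\delta^{2k+2}$. That term is still below $1$ under \cref{assump:net}.

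Your account of the constants is also off. The paper's baseline is the correlation-free worst case $\bar{\Corr}_{\mathcal{F}_{\X}}\ge -1$, i.e.\ the factor $3(4\kappa_W^2+5)\,\mathbb{E}[d_1^2d_2^2]$, and this is what produces the denominator $\frac{5}{4\kappa_W^2}+1$. Your $\bar{\Corr}=0$ baseline gives $2\kappa_W^2+5=2\kappa_W^2(1+\frac{5}{2\kappa_W^2})$, which is not proportional to $1+\frac{5}{4\kappa_W^2}$; this costs only a factor $\le 2$ under $\lesssim$, but the claimed proportionality is false.

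The factor $2(1+\delta^4)$ has nothing to do with walks returning to $i$. It is bookkeeping, in three steps:
\begin{enumerate}
\item Put $\frac{2\kappa_W^2(1-r(k))+5}{4\kappa_W^2+5}+\delta^{2k+2}$ over the common denominator $2+\frac{5}{2\kappa_W^2}$.
\item Bound $\delta^{2k+2}\le\delta^4$ (valid for $k\ge 1$) in the coefficient of $\frac{5}{2\kappa_W^2}$.
\item Divide numerator and denominator by $2(1+\delta^4)$, absorbing the leftover $1+\delta^4\le 2$ into $\lesssim$.
\end{enumerate}
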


When $\kappa_{W}$ is large, positive preference correlation can significantly reduce the query complexity when $r(1)\uparrow 1$ and $\delta \downarrow 0$.

\begin{proof}
	To inspect the effect of preference correlation, we need to handle the coupling between the degree product and the preference product more carefully than in \cref{lem:varA}.
	Instead of conditioned on $\mathcal{F}_{\X}$, we condition on the walk lengths.
	By the tower property,
	\[
		\mathbb{E}[d_1^{2}d_2^{2}\|W^{-1}_{j^{(1)} }W_{i}W_{j^{(2)} }^{-1} -M \|^{2}\given k_1,k_2]
		= \mathbb{E}[d_1^{2}d_2^{2}  \mathbb{E}[\|W^{-1}_{j^{(1)} }W_{i}W_{j^{(2)} }^{-1} -M \|^{2} \given \mathcal{F}_{\X}]\given k_1,k_2]
		.\]
	By \cref{eq:corr-1},
	\[
		\mathbb{E}[d_1^{2}d_2^{2}\|W^{-1}_{j^{(1)} }W_{i}W_{j^{(2)} }^{-1} -M \|^{2}\given k_1,k_2]
		\le \mathbb{E}[d_1^{2}d_2^{2} \cdot 3(2\kappa_{W}^{2}(1-\bar{\Corr}_{\mathcal{F}_{\X} }) + 5)\sigma^{2}_{\A}  \given k_1,k_2]
		,\]
	where
	\[
		\bar{\Corr}_{\mathcal{F}_{\X}} = \max\{\mathrm{Corr}_{\mathcal{F}_{\X}}(W_{j^{(1)}}^{-1} , W_{i}^{-1} ), \mathrm{Corr}_{\mathcal{F}_{\X}}(W_{j^{(2)}}^{-1} , W_{i}^{-1} )\}
		.\]
	If $k_1\wedge k_2 \le \bar{k}$, since $r$ is non-increasing, we have
	\[
		\bar{\Corr}_{\mathcal{F}_{\X}} \ge r(k_1\wedge k_2)
		.\]
	If $k_1\wedge k_2 > \bar{k}$, we have the trivial bound $\bar{\Corr}_{\mathcal{F}_{\X}} \ge -1$.
	Thus,
	\[
		&\mathbb{E}[d_1^{2}d_2^{2}\|W^{-1}_{j^{(1)} }W_{i}W_{j^{(2)} }^{-1} -M \|^{2}\given k_1,k_2] \\
		\le& 3(2\kappa_{W}^{2}(1-r(k_1\wedge k_2)\mathbbm{1}\left\{ k_1\wedge k_2 \le \bar{k} \right\} + \mathbbm{1}\left\{ k_1\wedge k_2 > \bar{k} \right\}) + 5)\sigma^{2}_{\A} \cdot \mathbb{E}[d_1^{2}d_2^{2} \given k_1,k_2]
		.\]
	By \cref{eq:d2k} and the memoryless property of the geometric distribution, we have
	\[
		\mathbb{E}[d_1^{2}d_2^{2}\given k_1\wedge k_2 > \bar{k}]
		=& \frac{(1-\delta)^{2}}{n} \sum_{k_1=\bar{k}+1}^{\infty} \sum_{k_2=\bar{k}+1}^{\infty} \delta^{k_1+k_2-2(\bar{k}+1)} \left< (DG)^{k} \mathbf{1},  (DG)^{k'} \mathbf{1} \right>\\
		=& \frac{(1-\delta)^{2}}{n} \sum_{k_1=0}^{\infty} \sum_{k_2=0}^{\infty} \delta^{k_1+k_2} \left< (DG)^{k} \mathbf{1},  (DG)^{k'} \mathbf{1} \right>\\
		=& \frac{(1-\delta)^{2}}{n} \|(I-\delta DG)^{-1} \mathbf{1}\|^{2}\\
		=& \mathbb{E}[d_1^{2}d_2^{2}]
		.\]
	Finally, we have
	\[
		&\mathbb{E}[d_1^{2}d_2^{2}\|W^{-1}_{j^{(1)} }W_{i}W_{j^{(2)} }^{-1} -M \|^{2}]\\
		=& P(k_1\wedge k_2 \le \bar{k}) \cdot \mathbb{E}[d_1^{2}d_2^{2}\|W^{-1}_{j^{(1)} }W_{i}W_{j^{(2)} }^{-1} -M \|^{2}\given k_1\wedge k_2 \le \bar{k}] + \\
		& P(k_1\wedge k_2 > \bar{k}) \cdot \mathbb{E}[d_1^{2}d_2^{2}\|W^{-1}_{j^{(1)} }W_{i}W_{j^{(2)} }^{-1} -M \|^{2}\given k_1\wedge k_2 > \bar{k}] \\
		\le& 3(2\kappa_{W}^{2}(1-r(\bar{k})) + 5) \cdot P(k_1\wedge k_2 \le \bar{k}) \cdot  \mathbb{E}[d_1^{2}d_2^{2}\given k_1\wedge k_2 \le \bar{k}] + \\
		& 3(4\kappa_{W}^{2} + 5) \cdot P(k_1\wedge k_2 > \bar{k}) \cdot \mathbb{E}[d_1^{2}d_2^{2}\given k_1\wedge k_2 > \bar{k}] \\
		\le& 3(2\kappa_{W}^{2}(1-r(\bar{k})) + 5) \cdot \mathbb{E}[d_1^{2}d_2^{2}] + \\
		& 3(4\kappa_{W}^{2} + 5) \cdot P(k_1\wedge k_2 > \bar{k}) \cdot \mathbb{E}[d_1^{2}d_2^{2}\given k_1\wedge k_2 > \bar{k}] \\
		=& 3(2\kappa_{W}^{2}(1-r(\bar{k})) + 5) \cdot \mathbb{E}[d_1^{2}d_2^{2}] + \\
		& 3(4\kappa_{W}^{2} + 5) \cdot \delta^{2(\bar{k}+1)}
		\cdot \mathbb{E}[d_1^{2}d_2^{2}] \\
		=& 3(4\kappa_{W}^{2} + 5)\mathbb{E}[d_{1}^{2}d_2^{2}]\underbrace{\left( \frac{2\kappa_{W}^{2}(1-r(\bar{k})) + 5}{4\kappa_{W}^{2} + 5} + \delta^{2(\bar{k}+1)} \right)}_{\text{shrinkage of }\sigma_{\A}^{2}}
		.\]
	The above result holds for arbitrary $k\in[\bar{k}]$. Thus,
	\[
		\left(\frac{\sigma_{\A}^{\mathrm{vr}}}{\sigma_{\A}^{\mathrm{bl}}}\right)^{2} \le \min_{k\in[\bar{k}]} \frac{2\kappa_{W}^{2}(1-r(k)) + 5}{4\kappa_{W}^{2} + 5} + \delta^{2(k+1)}
		\le \frac{\frac{5(1+\delta^{4})}{2\kappa_{W}^{2}} + \min_{k\in [\bar{k}]}\left\{ 1-r(k)+ 2\delta^{2k +2} \right\} }{\frac{5}{2\kappa_{W}^{2}} +2}
		.\]
\end{proof}

\subsection{Degree correlation}

Real-world network systems often exhibit degree correlation \citep{newman2018Networks}.
Two polar cases are \emph{assortativity}, where high-degree agents tend to connect to other high-degree agents, and \emph{disassortativity}, where high-degree agents tend to connect to low-degree agents.
In our setting, degree correlation affects the variance through the degree products accumulated along the random walks.
While a general graph-level condition is less transparent than in \cref{prop:pref-corr}, a stylized two-degree model already captures the key effect: assortativity keeps the walk in the high-degree class and amplifies the estimator, whereas disassortativity forces alternation between high- and low-degree classes and stabilizes it.

\begin{proposition}\label{prop:deg-corr}
	Suppose the network is undirected and unweighted, and every agent belongs to one of two degree classes with degrees $d_{\mathrm{L}}<d_{\mathrm{H}}$.
	Assume that under the vanilla uniform-neighbor walk in \poll, the induced degree-class process is exactly the two-state Markov chain
	\[
		P_{\eta} \coloneqq
		\begin{pmatrix}
			\eta   & 1-\eta \\
			1-\eta & \eta
		\end{pmatrix},
		\qquad \eta\in[0,1],
	\]
	where $\eta=1$ corresponds to perfect assortativity, $\eta=1/2$ to degree-independent mixing at the class level, and $\eta=0$ to perfect disassortativity.
	Assume also that $\delta d_{\mathrm{H}}^{2}<1$.

	Let $(\sigma_{\X}^{\mathrm{bl}})^{2}$ and $(\sigma_{\X}^{\mathrm{vr}})^{2}$ denote the corresponding class-level upper bounds on the network variance term $\sigma_{\X}^{2}$ under $\eta=1$ and under a general $\eta\in[0,1]$, respectively.
	Then,
	\[
		\left(\frac{\sigma_{\X}^{\mathrm{vr}}}{\sigma_{\X}^{\mathrm{bl}}}\right)^{2}
		\lesssim
		\frac{1-\delta d_{\mathrm{H}}^{2}}{1-\delta f_{d}(\eta)},
	\]
	where
	\[
		f_{d}(\eta)
		\coloneqq
		\frac{
			\eta(d_{\mathrm{L}}^{2}+d_{\mathrm{H}}^{2})
			+
			\sqrt{
				\eta^{2}(d_{\mathrm{H}}^{2}-d_{\mathrm{L}}^{2})^{2}
				+
				4(1-\eta)^{2}d_{\mathrm{L}}^{2}d_{\mathrm{H}}^{2}
			}
		}{2}
		.
	\]
	Moreover, the right-hand side is non-decreasing in $\eta$, with
	\[
		\frac{1-\delta d_{\mathrm{H}}^{2}}{1-\delta f_{d}(1)}=1,
		\qquad
		\frac{1-\delta d_{\mathrm{H}}^{2}}{1-\delta f_{d}(0)}
		=
		\frac{1-\delta d_{\mathrm{H}}^{2}}{1-\delta d_{\mathrm{L}}d_{\mathrm{H}}}
		<1.
	\]
	In particular, degree disassortativity strictly reduces the variance proxy of the original estimator.
\end{proposition}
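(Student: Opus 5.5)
The plan is to redo the second-moment computation of \cref{lem:varX} at the level of degree classes. First, bound $\sigma_{\X}^{2}=\Var(d_1d_2)$ by $\mathbb{E}[d_1^{2}d_2^{2}]$. Then control it by the growth rate of $\mathbb{E}[\prod_{l=0}^{k-1}\deg_{j_l}^{2}\given i,k]$.

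In \cref{eq:d2k} this conditional moment was written as $e_i^{T}(DG)^{k}\mathbf{1}$. The walk picks a uniform neighbor at each step, so the step operator is $D^{-1}G$, and each step contributes the weight $\deg^{2}$. Under the hypothesis, the class process is exactly the Markov chain $P_\eta$. The conditional second moment therefore equals $e_{c(i)}^{T}(\Delta^{2}P_\eta)^{k}\mathbf{1}_{2}$, where $\Delta=\diag(d_{\mathrm{L}},d_{\mathrm{H}})$ and $c(i)$ is the class of $i$. Summing the geometric series in $k$ gives
\[
\mathbb{E}\Bigl[\prod_{l}\deg_{j_l}^{2}\Given i\Bigr]=(1-\delta)\,e_{c(i)}^{T}(I-\delta\Delta^{2}P_\eta)^{-1}\mathbf{1}_{2},
\]
which plays the role of \cref{eq:d2i}. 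This series converges as long as $\delta\rho(\Delta^{2}P_\eta)<1$.

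Next comes the spectral step. The matrix $\Delta^{2}P_\eta$ is similar to the symmetric matrix $\Delta P_\eta\Delta$, whose entries are $\eta d_{\mathrm{L}}^{2}$, $\eta d_{\mathrm{H}}^{2}$ on the diagonal and $(1-\eta)d_{\mathrm{L}}d_{\mathrm{H}}$ off the diagonal. Its largest eigenvalue is
\[
\tfrac12\Bigl(\eta(d_{\mathrm{L}}^{2}+d_{\mathrm{H}}^{2})+\sqrt{\eta^{2}(d_{\mathrm{H}}^{2}-d_{\mathrm{L}}^{2})^{2}+4(1-\eta)^{2}d_{\mathrm{L}}^{2}d_{\mathrm{H}}^{2}}\Bigr)=f_d(\eta).
\]
The matrix $\Delta^{2}P_\eta$ is nonnegative, so the entries of $(I-\delta\Delta^{2}P_\eta)^{-1}\mathbf{1}$ are of order $(1-\delta f_d(\eta))^{-1}$, up to constants depending on $d_{\mathrm{L}},d_{\mathrm{H}}$. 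These constants are absorbed in $\lesssim$. Squaring and averaging over $i$, as in \cref{eq:d2}, gives the class-level upper bound $(\sigma_{\X}^{\mathrm{vr}})^{2}\lesssim(1-\delta)^{2}(1-\delta f_d(\eta))^{-2}$.

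For the baseline $\eta=1$ the chain never leaves the starting class, and $f_d(1)=d_{\mathrm{H}}^{2}$, so $(\sigma_{\X}^{\mathrm{bl}})^{2}\asymp(1-\delta)^{2}(1-\delta d_{\mathrm{H}}^{2})^{-2}$. This is also the rate used for $\sigma^{\mathrm{bl}}_{\X}$ in \cref{prop:cv}. The ratio of the two bounds is then controlled by $(1-\delta d_{\mathrm{H}}^{2})/(1-\delta f_d(\eta))$, raised to a power of at most $2$. Since that quotient is at most $1$, it is bounded by the stated quotient itself, which is how I would match the first power claimed.

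It remains to check monotonicity and the endpoints. At $\eta=0$, $f_d(0)=d_{\mathrm{L}}d_{\mathrm{H}}<d_{\mathrm{H}}^{2}$, which gives the strict inequality. For monotonicity, I would avoid differentiating the square root. Instead I would argue via the characteristic polynomial
\[
\chi(\lambda)=(\lambda-\eta d_{\mathrm{L}}^{2})(\lambda-\eta d_{\mathrm{H}}^{2})-(1-\eta)^{2}d_{\mathrm{L}}^{2}d_{\mathrm{H}}^{2}.
\]
The step I expect to be the real obstacle is showing $\partial_\eta\chi<0$ at the top root $\lambda=f_d(\eta)$; by the implicit function theorem this gives $f_d$ non-decreasing.
\[
\partial_\eta\chi=-d_{\mathrm{L}}^{2}(\lambda-\eta d_{\mathrm{H}}^{2})-d_{\mathrm{H}}^{2}(\lambda-\eta d_{\mathrm{L}}^{2})+2(1-\eta)d_{\mathrm{L}}^{2}d_{\mathrm{H}}^{2}.
\]
I would bound this using $\lambda\ge\eta d_{\mathrm{H}}^{2}$, which holds because $f_d(\eta)$ is at least the larger diagonal entry, together with $\lambda\ge(1-\eta)d_{\mathrm{L}}d_{\mathrm{H}}+\eta d_{\mathrm{L}}^{2}$, which is the Perron bound by minimum row sum. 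If this inequality turns out to fail near $\eta=0$, the fallback is a direct check on the explicit formula, treating $\eta\in[0,1/2]$ and $[1/2,1]$ separately.

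Finally, the condition $\delta d_{\mathrm{H}}^{2}<1$ together with $f_d\le d_{\mathrm{H}}^{2}$ keeps every denominator positive. This ensures the series above converges for all $\eta$.
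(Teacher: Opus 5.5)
Your class-level reduction coincides with the paper's proof: the transfer matrix $\Delta^{2}P_\eta$, its similarity to the symmetric $\Delta P_\eta\Delta$ (the paper's $S_\eta$) with top eigenvalue $f_d(\eta)$, the $(1-\delta f_d(\eta))^{-2}$ bound, the diagonal baseline at $\eta=1$, and the remark that the squared ratio is dominated by its first power because the quotient is at most $1$. The gap is in the monotonicity step, and your stated tools do not close it. Write $a=d_{\mathrm{L}}^{2}$ and $b=d_{\mathrm{H}}^{2}$. Your derivative simplifies exactly to $\partial_\eta\chi=2ab-(a+b)\lambda$. So you need $f_d(\eta)>2ab/(a+b)$, the harmonic mean of $d_{\mathrm{L}}^{2}$ and $d_{\mathrm{H}}^{2}$.

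The two lower bounds you propose do not give this in general. Plugged in termwise, they give $\partial_\eta\chi\le(1-\eta)\,d_{\mathrm{L}}d_{\mathrm{H}}^{2}(2d_{\mathrm{L}}-d_{\mathrm{H}})$. This has the right sign only when $d_{\mathrm{H}}\ge 2d_{\mathrm{L}}$. Even the maximum of the two bounds falls short. For $d_{\mathrm{L}}=1$, $d_{\mathrm{H}}=3/2$, $\eta=6/11$, both bounds equal $27/22\approx1.23$, while the harmonic mean is $18/13\approx1.38$ (the true $f_d$ there is about $1.65$). So for $d_{\mathrm{L}}<d_{\mathrm{H}}<2d_{\mathrm{L}}$ the implicit-function argument fails as written. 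The fallback case split at $\eta=1/2$ is not worked out.

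There are two easy repairs.

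\textbf{Rayleigh quotient.} Evaluate the Rayleigh quotient of $\Delta P_\eta\Delta$ at $(1,1)^{T}/\sqrt{2}$. This gives $f_d(\eta)\ge\eta\frac{a+b}{2}+(1-\eta)\sqrt{ab}$, a convex combination of the arithmetic and geometric means of $a,b$. Both means lie strictly above the harmonic mean when $a<b$, so $\partial_\eta\chi<0$ at the top root. Also $\partial_\lambda\chi=2\lambda-\eta(a+b)=\sqrt{\eta^{2}(b-a)^{2}+4(1-\eta)^{2}ab}>0$ there. Hence $f_d'(\eta)>0$.

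\textbf{Loewner order (the paper's route).} For $\eta'\ge\eta$, $S_{\eta'}-S_\eta=(\eta'-\eta)vv^{T}\succeq0$ with $v=(d_{\mathrm{L}},-d_{\mathrm{H}})^{T}$. So $\lambda_{\max}(S_\eta)=f_d(\eta)$ is non-decreasing by Weyl's inequality, with no derivatives needed.

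Either repair also gives $f_d\le f_d(1)=d_{\mathrm{H}}^{2}$, which your final convergence remark relies on. The max-row-sum bound $(1-\eta)d_{\mathrm{L}}d_{\mathrm{H}}+\eta d_{\mathrm{H}}^{2}\le d_{\mathrm{H}}^{2}$ gives this directly as well.
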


When the degree gap is large, positive degree correlation can severely inflate the query complexity, while strong disassortativity can substantially reduce it, especially as $\delta d_{\mathrm{H}}^{2}\uparrow 1$.

\begin{proof}
	Let $c_{l}\in\{\mathrm{L},\mathrm{H}\}$ denote the degree class visited at step $l$ of a uniformly sampled walk, and write $\bm{1}_{2}=(1,1)^{T}$.
	Under the assumed two-class model, $(c_{l})_{l\ge 0}$ is a Markov chain with transition matrix $P_{\eta}$.
	Define
	\[
		B_{\eta}
		\coloneqq
		\diag(d_{\mathrm{L}}^{2},d_{\mathrm{H}}^{2})P_{\eta}
		=
		\begin{pmatrix}
			\eta d_{\mathrm{L}}^{2}    & (1-\eta)d_{\mathrm{L}}^{2} \\
			(1-\eta)d_{\mathrm{H}}^{2} & \eta d_{\mathrm{H}}^{2}
		\end{pmatrix}.
	\]

	Then the class-level analogue of \cref{eq:d2k} gives, for a walk starting in class $c\in\{\mathrm{L},\mathrm{H}\}$,
	\[
		\mathbb{E}\!\left[\prod_{l=0}^{k-1}\deg_{j_{l}}^{2}\,\middle|\, c_{0}=c,k\right]
		=
		e_{c}^{T}B_{\eta}^{k}\bm{1}_{2},
	\]
	where $e_{\mathrm{L}},e_{\mathrm{H}}$ are the standard basis vectors in $\mathbb{R}^{2}$.
	Integrating over $k\sim \operatorname{Geom}(\delta)$ yields
	\[
		\mathbb{E}\!\left[\prod_{l=0}^{k-1}\deg_{j_{l}}^{2}\,\middle|\, c_{0}=c\right]
		=
		(1-\delta)e_{c}^{T}(I-\delta B_{\eta})^{-1}\bm{1}_{2}.
	\]

	Conditioned on the starting node, the two walks in \poll are independent. Hence
	\[
		\mathbb{E}[d_{1}^{2}d_{2}^{2}\mid c_{0}=c]
		=
		(1-\delta)^{2}
		\Bigl(
		e_{c}^{T}(I-\delta B_{\eta})^{-1}\bm{1}_{2}
		\Bigr)^{2}.
	\]
	Averaging over the starting class and using $\Var(d_{1}d_{2})\le \mathbb{E}[d_{1}^{2}d_{2}^{2}]$, we get
	\[
		(\sigma_{\X}^{\mathrm{vr}})^{2}
		\le
		\mathbb{E}[d_{1}^{2}d_{2}^{2}]
		\lesssim
		(1-\delta)^{2}\left\|(I-\delta B_{\eta})^{-1}\bm{1}_{2}\right\|_{2}^{2}.
	\]

	Let $T\coloneqq\diag(d_{\mathrm{L}},d_{\mathrm{H}})$. Since
	\[
		B_{\eta}=TS_{\eta}T^{-1},
		\qquad
		S_{\eta}\coloneqq
		\begin{pmatrix}
			\eta d_{\mathrm{L}}^{2}              & (1-\eta)d_{\mathrm{L}}d_{\mathrm{H}} \\
			(1-\eta)d_{\mathrm{L}}d_{\mathrm{H}} & \eta d_{\mathrm{H}}^{2}
		\end{pmatrix},
	\]
	the matrices $B_{\eta}$ and $S_{\eta}$ have the same eigenvalues.
	Since $S_{\eta}$ is symmetric and nonnegative,
	\[
		\left\|(I-\delta B_{\eta})^{-1}\bm{1}_{2}\right\|_{2}
		\lesssim
		(1-\delta\lambda_{\max}(S_{\eta}))^{-1},
	\]
	where the hidden constant is universal because the class space has fixed dimension two.
	Thus,
	\[
		(\sigma_{\X}^{\mathrm{vr}})^{2}
		\lesssim
		(1-\delta f_{d}(\eta))^{-2},
	\]
	where $f_{d}(\eta)=\lambda_{\max}(S_{\eta})$ equals
	\[
		f_{d}(\eta)
		=
		\frac{
			\eta(d_{\mathrm{L}}^{2}+d_{\mathrm{H}}^{2})
			+
			\sqrt{
				\eta^{2}(d_{\mathrm{H}}^{2}-d_{\mathrm{L}}^{2})^{2}
				+
				4(1-\eta)^{2}d_{\mathrm{L}}^{2}d_{\mathrm{H}}^{2}
			}
		}{2}.
	\]

	For the perfectly assortative baseline $\eta=1$, we have
	\[
		S_{1}=\diag(d_{\mathrm{L}}^{2},d_{\mathrm{H}}^{2}),
		\qquad
		f_{d}(1)=d_{\mathrm{H}}^{2},
	\]
	and hence
	\[
		(\sigma_{\X}^{\mathrm{bl}})^{2}
		\lesssim
		(1-\delta d_{\mathrm{H}}^{2})^{-2}.
	\]
	Taking square roots and then the ratio of the two bounds yields
	\[
		\left(\frac{\sigma_{\X}^{\mathrm{vr}}}{\sigma_{\X}^{\mathrm{bl}}}\right)^{2}
		\lesssim
		\frac{1-\delta d_{\mathrm{H}}^{2}}{1-\delta f_{d}(\eta)}.
	\]

	It remains to show monotonicity in $\eta$.
	For $\eta'\ge \eta$,
	\[
		S_{\eta'}-S_{\eta}
		=
		(\eta'-\eta)
		\begin{pmatrix}
			d_{\mathrm{L}}^{2}            & -d_{\mathrm{L}}d_{\mathrm{H}} \\
			-d_{\mathrm{L}}d_{\mathrm{H}} & d_{\mathrm{H}}^{2}
		\end{pmatrix}
		=
		(\eta'-\eta)
		\begin{pmatrix}
			d_{\mathrm{L}} \\
			-d_{\mathrm{H}}
		\end{pmatrix}
		\begin{pmatrix}
			d_{\mathrm{L}} \\
			-d_{\mathrm{H}}
		\end{pmatrix}^{T}
		\succeq 0.
	\]
	Thus $S_{\eta}$ is increasing in the Loewner order, and therefore $f_{d}(\eta)$ is non-decreasing in $\eta$.

	Finally,
	\[
		f_{d}(0)=d_{\mathrm{L}}d_{\mathrm{H}},
		\qquad
		f_{d}(1)=d_{\mathrm{H}}^{2},
	\]
	which implies
	\[
		\frac{1-\delta d_{\mathrm{H}}^{2}}{1-\delta f_{d}(1)}=1,
		\qquad
		\frac{1-\delta d_{\mathrm{H}}^{2}}{1-\delta f_{d}(0)}
		=
		\frac{1-\delta d_{\mathrm{H}}^{2}}{1-\delta d_{\mathrm{L}}d_{\mathrm{H}}}
		<1.
	\]
	This proves the claim.
\end{proof}

The same proof extends to any finite number of degree classes: the relevant quantity is the Perron root of the class-level matrix $\diag(d_{c}^{2})P$, where $P$ is the transition matrix of the degree-class process. The two-class model makes this dependence explicit.

\subsection{Power law and importance sampling}\label{apx:is}

Many real-world networks exhibit a heavy-tailed power law degree distribution $\deg_{i}\propto i^{-\gamma}$ \citep{newman2003StructureFunction}, which has large higher-order moments.
For these networks, $(1-\delta \sigma_{\max}(DG))^{-1}$ can be a severe bottleneck.
Our centrality characterization in \cref{eq:cen-planner} gives a natural remedy as it allows flexible sampling  in \cref{eq:walk}.
First, we note that the uniform sampling of neighbors can be directly generalized to accommodate weighted graphs using a weighted sampling such that agent $j_{l}$ is chosen with probability $\propto G_{j_{l-1}j_{l}}$ without any modification to the algorithm or its analysis.
We further notice that the degree product in our update direction is simply the probability normalization factor, or the \emph{importance weight}, of the random walk.
This motivates us to design a tailored sampling scheme and adjust the importance weight to ensure unbiasedness; this is the standard importance sampling technique.

\begin{proposition}\label{prop:power}
	Consider a random walk sampling scheme where a neighbor $j_{l}$ is sampled with probability $\propto \deg_{j_{l}}$. The importance weight of this scheme is
	\[
		\frac{\deg_{i}}{\deg_{j_{k}}} \prod_{i=0}^{k-1}\deg'_{j_{i}},\quad \text{where }\deg_{j}' = \frac{\sum_{j\in \X_{j}}\deg_{j}}{|\X_{j}|} \text{ is the \textup{average neighbor degree}}
		.\]
	We replace the degree product with this importance weight; other parts of the algorithm remain the same.

	Suppose the degrees $\{\deg_{i}\}_{i\in\X}$ follows a power law with exponent $\gamma$ and the network is generated by a mean field configuration model \citep{chung2002averagedistances} with $G_{ij} = \frac{d_{i}d_{j}}{\sum_{i}d_{i}}$.
	Then,
	\[
		\frac{\sigma^{\mathrm{vr}}_{\X}}{\sigma^{\mathrm{bl}}_{\X}} \lesssim \frac{ \frac{\gamma-4}{\delta(\gamma-2)} - 1 }{\frac{\gamma-4}{\delta(\gamma-2)} - \min_{\beta\in[\gamma-2]} f_{\gamma}(\beta) }
		,\]
	where $f_{\gamma}(\beta)\coloneqq\frac{1}{(1-\frac{\beta}{\gamma-2})(1+\frac{\beta}{\gamma-4})}$
	and $\min_{\beta}f_{\gamma}(\beta) = f_{\gamma}(1) = \frac{(\gamma-2)(\gamma-4)}{(\gamma-3)^{2}}<1$.
\end{proposition}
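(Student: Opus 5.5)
The plan has two parts: check that the reweighted walk is unbiased, then redo the second-moment computation of \cref{lem:varX} under the mean-field model. I would embed the proposed scheme in a one-parameter family where a neighbor $j_{l+1}$ of $j_l$ is sampled with probability $\deg_{j_{l+1}}^{\beta}/\sum_{j\in\X_{j_l}}\deg_j^{\beta}$. Here $\beta=0$ is the baseline uniform walk and $\beta=1$ is the proposed scheme. The optimization over $\beta$ in $f_\gamma$ would then come from comparing members of this family.

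\textbf{Importance weight and unbiasedness.} For $\beta=1$ the path probability is $\prod_{l=0}^{k-1}\deg_{j_{l+1}}/(|\X_{j_l}|\deg'_{j_l})$. Using $|\X_{j_l}|=\deg_{j_l}$ for an unweighted graph, its reciprocal is
\[
\prod_{l=0}^{k-1}\frac{\deg_{j_l}}{\deg_{j_{l+1}}}\,\deg'_{j_l}=\frac{\deg_i}{\deg_{j_k}}\prod_{l=0}^{k-1}\deg'_{j_l},
\]
because the degree ratios telescope. Replacing the degree product in \cref{eq:walk} by this weight again gives $\mathbb{E}[\,\cdot\,]=\sum_{|\xi|=k}\prod\tilde G_{j_lj_{l+1}}$. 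So $\hat\Phi$ stays unbiased, and the preference factor and $\sigma_{\A}$ are unchanged. Only the network term $\sigma_{\X}$ changes.

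\textbf{Second moment.} Following \cref{eq:d2k}, I would write $\mathbb{E}[w^2\mid i,k]=\sum_{|\xi|=k} w(\xi)$, since each path contributes weight squared times its probability, which is the weight itself. This sum is the expression $e_i^T B_\beta^k\mathbf 1$ for a transfer matrix $B_\beta=D^{1-\beta}\,G\,\operatorname{diag}(\cdot)$ built from the normalizers, up to endpoint factors. Integrating over $k\sim\operatorname{Geom}(\delta)$ and over $i$ gives $\mathbb{E}[d_1^2d_2^2]\asymp(1-\delta)^2 n^{-1}\|(I-\delta B_\beta)^{-1}\mathbf 1\|^2$. The bottleneck is therefore $(1-\delta\rho(B_\beta))^{-1}$, exactly as in \cref{prop:deg-corr} where the relevant quantity was a Perron root.

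Under the mean-field configuration model $G_{ij}=d_id_j/\sum_l d_l$, the matrix $G$ is rank one. Hence $B_\beta$ is rank one, and $\rho(B_\beta)$ is an explicit ratio of degree moments, of the form $\mathbb{E}[d^{a}]\,\mathbb{E}[d^{b}]/\mathbb{E}[d]^{c}$ with exponents depending on $\beta$.

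\textbf{Power-law evaluation (main obstacle).} I would approximate the moments by the continuous power-law formula $\mathbb{E}[d^s]\propto 1/(\gamma-1-s)$, which is finite for $s<\gamma-1$. This should turn $\rho(B_0)$ into a quantity proportional to $(\gamma-2)/(\gamma-4)$, which explains the $\frac{\gamma-4}{\delta(\gamma-2)}$ normalization in the statement. It should also turn $\rho(B_\beta)/\rho(B_0)$ into the factor $f_\gamma(\beta)$. The factors $(1-\beta/(\gamma-2))$ and $(1+\beta/(\gamma-4))$ would arise from the shifted moment exponents.

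This is the step I expect to be hardest. Matching constants exactly is delicate, and the endpoint factor $\deg_i/\deg_{j_k}$ needs care: I would bound it by absorbing it into the rank-one outer vectors, so that only the spectral radius matters up to $\lesssim$.

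\textbf{Conclusion.} The ratio $(1-\delta\rho(B_0))/(1-\delta\rho(B_\beta))$, after dividing numerator and denominator by $\delta\rho(B_0)$, gives the stated bound. The remaining steps are short:
\begin{itemize}
\item Minimize $f_\gamma$ over $\beta$: the product $(1-x/(\gamma-2))(1+x/(\gamma-4))$ is a concave quadratic in $x$, so it can be maximized directly.
\item Evaluate at $\beta=1$ to get $(\gamma-2)(\gamma-4)/(\gamma-3)^2$.
\item Note that this value is $<1$ by AM--GM.
\end{itemize}
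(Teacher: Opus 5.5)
Your proposal is correct and follows essentially the same route as the paper. The paper also embeds the scheme in the family $h_\beta=\deg^\beta$, writes the second moment of the importance weight as $e_i^{T}H(H'G)^{k}H^{-1}\mathbf{1}$, and uses the rank-one Chung--Lu structure via Sherman--Morrison, which isolates the denominator $z_1^2-\delta z_{1+\beta}z_{3-\beta}$ (with $z_l$ the $l$-th degree moment), i.e.\ your $1-\delta\rho(B_\beta)$; it then plugs in continuous power-law moments to get $\rho(B_\beta)=\frac{\gamma-2}{\gamma-4}f_\gamma(\beta)$. The endpoint factor $\deg_i/\deg_{j_k}$ that you flagged as delicate needs no separate bounding: $H(H'G)^{k}H^{-1}=(HH'GH^{-1})^{k}$ is a similarity transform, so the relevant matrix is simply $HH'GH^{-1}$ acting on $\mathbf{1}$.
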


We see that $\sigma^{\mathrm{vr}}_{\X} / \sigma^{\mathrm{bl}}_\X \downarrow 0$ as $\gamma \downarrow (1-\delta)^{-1}(4-2\delta)$, which is when the original uniform sampling scheme fails, showcasing a significant variance reduction for heavy-tailed networks.

Heuristically, our weighted sampling offers variance reduction from three aspects:

\begin{enumerate}\tightlist
	\item  With lower entropy, weighted sampling returns more consistent trajectories. Specifically, weighted sampling consistently walks towards high-degree nodes.
	\item Essentially all information, technological, and biological networks appear to be \textit{disassortative} \citep{newman2003StructureFunction},
	      in which a high-degree node $j$ is more likely to satisfy $\deg_{j}\ge \deg_{j'}$ for its neighbors $j'$, thus it is more likely that $\deg_{j} = |\X_{j}|\ge \deg'_{j}$, resulting in a smaller value of the degree product when using weighted sampling.
	\item When ending at a high-degree node, $\deg_{i} / \deg_{j_{k}} < 1$, further reducing the magnitude of the estimator when using weighted sampling.
\end{enumerate}

Consistent with importance sampling theory, we cannot expect a universal variance reduction across all classes of networks; instead, we should choose the right scheme based on the given structure.
In \cref{apx:is}, we consider general importance sampling schemes and discuss principled ways of choosing the right importance weighting.

\subsubsection{Proof of \texorpdfstring{\cref{prop:power}}{Proposition 9}}

We consider a general importance sampling scheme which samples a neighbor of $j_{l-1}$ with probability $\propto G_{j_{l-1}j_{l}}h(j_{l})$, where $h(\cdot)$ is a positive function to be designed.
Following \cref{eq:walk}, we have
\[
	[G^{k}\mathbf{1}]_{i}
	= \sum_{j_1\in \X_{i}} \!G_{ij_1} [G^{k-1}\mathbf{1}]_{j_1}
	= \left(\sum_{j_1\in\X_{i}}\!G_{ij_{1}}h(j_{1})\right)\sum_{j_1\in \X_{i}} \!h^{-1}(j_1) \frac{G_{ij_1}h(j_1)}{\sum_{j_1\in\X_{i}}G_{ij_{1}}h(j_{1})} [G^{k-1}\mathbf{1}]_{j_1}
	.\]
Letting $h'(i) \coloneqq h^{-1}(i) \sum_{j\in\X_{i}}G_{ij}h(j)$
and denoting $\mathbb{E}^{\mathrm{imp}}$ as the expectation under the importance sampling, we get
\[
	[G^{k}\mathbf{1}]_{i} = h(i) h'(i) \mathbb{E}^{\mathrm{imp}} _{j_1}h^{-1}(j_1) [[G^{k-1}\mathbf{1} ]_{j_1}]
	.\]
By induction, we get
\[
	[G^{k}\mathbf{1} ]_{i} = \mathbb{E}^{\mathrm{imp}}_{(j_1,\dots,j_{k})} \left[ \prod_{l=0}^{k-1} h(j_{l})h'(j_l)h^{-1}(j_{l+1}) \right]
	= \mathbb{E}^{\mathrm{imp}}_{(j_1,\dots,j_{k})} \left[ h(i)h^{-1}(j_k) \prod_{l=0}^{k-1} h'(j_l) \right]
	.\]
When $h(j) \equiv 1$, $h'(j) = \deg_{j}$, and we recover the original uniform sampling procedure.

\begin{lemma}[Importance weight variance]\label{lem:imp}
	Denote $h_1,h_2$ as the importance weights, which replace the degree products, of the two walks under the importance sampling procedure. Let $H = \diag(h(i))$ and $H' = \diag(h'(i))$. We have
	\[
		\mathbb{E}^{\mathrm{imp}} [h_1^{2}h_2^{2}] = \frac{(1-\delta)^{2}}{n}\|(I-\delta H H' G H^{-1})^{-1}\mathbf{1}\|^{2}
		.\]
\end{lemma}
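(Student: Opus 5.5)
The plan mirrors the second-moment computation in the proof of \cref{lem:varX}, with the uniform-neighbor transition replaced by the tilted one. I first compute the conditional second moment of a single walk's importance weight given its start $i$ and length $k$. Under the importance sampling scheme, the walk moves from $j_{l}$ to $j_{l+1}$ with probability $G_{j_lj_{l+1}}h(j_{l+1})/(h(j_l)h'(j_l))$. The weight is $h(i)h^{-1}(j_k)\prod_{l=0}^{k-1}h'(j_l)$, which telescopes as $\prod_{l=0}^{k-1}h(j_l)h'(j_l)h^{-1}(j_{l+1})$. Summing the squared weight against the path probability gives
\[
	\mathbb{E}^{\mathrm{imp}}[h_1^{2}\given i,k] = \sum_{j_1,\dots,j_k}\prod_{l=0}^{k-1} \bigl(h(j_l)h'(j_l)h^{-1}(j_{l+1})\bigr)^{2}\,\frac{G_{j_lj_{l+1}}h(j_{l+1})}{h(j_l)h'(j_l)}
	.\]
Each factor reduces to $h(j_l)h'(j_l)G_{j_lj_{l+1}}h^{-1}(j_{l+1})$, which is exactly the $(j_l,j_{l+1})$ entry of $HH'GH^{-1}$. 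The conditional second moment is therefore $e_i^{T}(HH'GH^{-1})^{k}\mathbf{1}$. This is the analogue of \cref{eq:d2k}, and setting $h\equiv 1$ recovers $e_i^T(DG)^k\mathbf{1}$.

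Next I integrate over $k\sim\operatorname{Geom}(\delta)$. Using $P(k)=(1-\delta)\delta^{k}$, this gives $(1-\delta)e_i^{T}(I-\delta HH'GH^{-1})^{-1}\mathbf{1}$. The Neumann series converges whenever the spectral radius of $\delta HH'GH^{-1}$ is below one. This matrix is similar, via the diagonal $H$, to $\delta H'G$, so I would assume or verify that condition, as the statement implicitly does.

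Conditioned on $i$, the two walks are independent with the same law, so $\mathbb{E}^{\mathrm{imp}}[h_1^2h_2^2\given i]=(1-\delta)^2\bigl(e_i^T(I-\delta HH'GH^{-1})^{-1}\mathbf{1}\bigr)^2$. Averaging over $i\sim\operatorname{Unif}(\X)$ with $\mathbb{E}[e_ie_i^T]=I/n$ then yields $\frac{(1-\delta)^2}{n}\|(I-\delta HH'GH^{-1})^{-1}\mathbf{1}\|^2$, exactly as in \cref{eq:d2}.

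The only delicate step is the bookkeeping in the first display. The normalization $h(j_l)h'(j_l)$ from the transition probability must cancel one power of the squared weight factor and leave the correct ordering $H H' G H^{-1}$. Getting the orientation wrong would instead produce $H^{-1}GHH'$ or a transpose. I would check this against the $h\equiv1$ special case before concluding.
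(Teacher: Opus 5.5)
Your proposal is correct and follows the paper's proof essentially step by step. The paper likewise cancels one power of the squared weight against the tilted transition probabilities to get the conditional moment $e_i^{T}H(H'G)^{k}H^{-1}\mathbf{1}=e_i^{T}(HH'GH^{-1})^{k}\mathbf{1}$, then integrates over the geometric lengths, uses conditional independence of the two walks, and averages over the uniform start exactly as in \cref{eq:d2i,eq:d2}; your spectral-radius condition on $\delta H'G$ is a point the paper leaves implicit.
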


\begin{proof}
	We omit the superscript $\mathrm{imp}$ for brevity.
	Following \cref{eq:d2k}, we have
	\[
		&\mathbb{E}\left[ h^{2}(i)h^{-2}(j_k) \prod_{l=0}^{k-1}h'(j_{l})^{2} \Given i,k  \right] \\
		=& \sum_{|\xi|=k}h^{2}(i)h^{-2}(j_k)\prod_{l=0}^{k-1}h'(j_{l})^{2} \cdot \frac{G_{ij_1}h(j_1)}{\sum_{j_1\in\X_{i}}G_{ij_{1}}h(j_{1})} \cdots \frac{G_{j_{k-1}j_k}h(j_k)}{\sum_{j_k\in\X_{j_{k-1}}}G_{j_{k-1}j_k}h(j_k)} \\
		=& \sum_{|\xi|=k}h(i)h^{-1}(j_k)\prod_{l=0}^{k-1}h'(j_{l}) \cdot G_{ij_1} \cdots G_{j_{k-1}j_k} \\
		=& \sum_{|\xi|=k}H_{ii}H^{-1}_{j_{k}j_{k}} \prod_{l=0}^{k-1}H'_{j_{l}j_{l}}  G_{j_{l}j_{l+1}} \\
		=& e_{i}^T  H(H' G)^{k} H^{-1} \mathbf{1}
		.\]
	Then, following \cref{eq:d2i,eq:d2} gives
	\[
		\mathbb{E}[h_1^{2}h_2^{2}] = \frac{(1-\delta)^{2}}{n}\|(I - \delta HH'GH^{-1} )^{-1}  \mathbf{1} \|^{2}
		.\]
\end{proof}

Again, setting $h(i) \equiv 1$ gives the original variance in \cref{lem:varX}.
We should design $h$ to minimize $\sigma_{\max}(HH'GH^{-1})$.

\begin{lemma}[Expected configuration model]\label{lem:config}
	Consider a parameterized family of importance sampling schemes with $h_{\beta}(j) = \deg_{j}^{\beta}$ for $\beta>0$.
	Given a degree sequence $\{\deg_i\}_{i\in\X}$, let $G$ be generated by the expected Chung-Lu configuration model \citep{chung2002averagedistances}, i.e., $G_{ij} = \frac{\deg_i \deg_j}{\sum_{i}\deg_i}$.
	We have
	\[
		\frac{1}{n}\|(I - \delta H_{\beta}H'_{\beta}GH_{\beta}^{-1})^{-1}\mathbf{1}\|^{2} = (1 + 2z_{2}g_{\delta}(\beta) + z_{4}g^{2}_{\delta}(\beta) )
		,\]
	where $z_{l} = \mathbb{E}[\deg_{i}^{l}]$ is the $l$-th moment of the degree distribution, and $g_{\delta}(\beta) = \frac{\delta z_{1+\beta}z_{1-\beta}}{z_1^{2} - \delta z_{1+\beta}z_{3-\beta}}$.
\end{lemma}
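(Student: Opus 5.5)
The plan is to exploit the fact that under the expected Chung--Lu model the matrix $H_\beta H'_\beta G H_\beta^{-1}$ is rank one, and then invert $I-\delta(\cdot)$ with the Sherman--Morrison formula. Throughout, write $S\coloneqq\sum_i \deg_i = n z_1$ and $d\coloneqq(\deg_1,\dots,\deg_n)^T$, with powers of $d$ taken entrywise.

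\textbf{Step 1: compute $h'_\beta$.} By the definition $h'(i)=h^{-1}(i)\sum_{j}G_{ij}h(j)$ from the proof of \cref{prop:power}, and with $G_{ij}=\deg_i\deg_j/S$, we get
\[
h'_\beta(i)=\deg_i^{-\beta}\,\frac{\deg_i}{S}\sum_j \deg_j^{1+\beta}=\deg_i^{1-\beta}\,\frac{z_{1+\beta}}{z_1}.
\]

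\textbf{Step 2: rank-one structure.} Multiplying out,
\[
A\coloneqq H_\beta H'_\beta G H_\beta^{-1}
=\frac{z_{1+\beta}}{z_1}\,\operatorname{diag}(d)\,\frac{d d^T}{S}\,\operatorname{diag}(d^{-\beta})
= c\,u v^T,
\]
where $u\coloneqq d^{2}$, $v\coloneqq d^{1-\beta}$ and $c\coloneqq z_{1+\beta}/(n z_1^2)$. The two inner products we will need are $v^T\mathbf{1}=n z_{1-\beta}$ and $v^Tu=n z_{3-\beta}$.

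\textbf{Step 3: Sherman--Morrison.} Provided $\delta c\,v^Tu<1$, i.e.\ $\delta z_{1+\beta}z_{3-\beta}<z_1^2$, we have
\[
(I-\delta c\,uv^T)^{-1}\mathbf{1}=\mathbf{1}+\frac{\delta c\,(v^T\mathbf{1})}{1-\delta c\,v^Tu}\,u .
\]
The condition $\delta c\,v^Tu<1$ says the single nonzero eigenvalue of $\delta A$ is below one, which is also what is needed for the Neumann series behind \cref{lem:imp} to converge.

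Substituting the inner products from Step 2, the coefficient becomes
\[
\frac{\delta z_{1+\beta}z_{1-\beta}/z_1^2}{1-\delta z_{1+\beta}z_{3-\beta}/z_1^2}=g_\delta(\beta),
\]
so $(I-\delta A)^{-1}\mathbf{1}=\mathbf{1}+g_\delta(\beta)\,d^2$.

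\textbf{Step 4: squared norm.} Averaging over agents gives
\[
\frac1n\sum_i\bigl(1+g_\delta(\beta)\deg_i^2\bigr)^2=1+2z_2\,g_\delta(\beta)+z_4\,g_\delta(\beta)^2,
\]
which is the claimed identity.

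The only delicate point is the invertibility condition in Step 3. It should be stated as a standing requirement, and it is exactly the requirement that the denominator of $g_\delta(\beta)$ be positive. Everything else is direct algebra.
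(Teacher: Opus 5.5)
Your proposal is correct and follows the paper's proof essentially step for step. Both compute $H'_\beta=\frac{z_{1+\beta}}{z_1}\operatorname{diag}(d^{1-\beta})$, note that $H_\beta H'_\beta G H_\beta^{-1}=\frac{z_{1+\beta}}{nz_1^2}\,d^{2}(d^{1-\beta})^T$ is rank one, invert it by Sherman--Morrison, and expand the squared norm; your explicit standing condition $\delta z_{1+\beta}z_{3-\beta}<z_1^2$ is a sensible addition, and your final normalization matches the statement (the paper's last display carries a stray factor $\frac{1}{n}$).
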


\begin{proof}
	We denote $w_{l} \coloneqq (\deg_{1}^{l},\dots,\deg_{n}^{l})^{T}\in\R^{n}$ and $z_{l}\coloneqq \mathbb{E}[\deg_{i}^{l}]$ for $l> 0$. Then,
	\[
		G = \frac{1}{nz_{1}}w_{1}w_{1}^{T},\quad H_{\beta}=\diag(w_{\beta})
		,\]
	and
	\[
		H'_{\beta} = \diag(H_{\beta}^{-1} G w_\beta)
		= \frac{1}{nz_{1}} \diag(H_{\beta}^{-1}w_1 w_1^{T}w_\beta )
		= \frac{z_{1+\beta}}{z_{1}} \diag(w_{1-\beta} )
		.\]
	Thus,
	\[
		H_{\beta}H_{\beta}'GH_{\beta}^{-1}
		=& \frac{1}{nz_{1}} H_{\beta}' \diag(w_{\beta})w_1 (\diag(w_{\beta}^{-1})w_1)^{T}\\
		=& \frac{1}{nz_{1}} H_{\beta}' w_{1+\beta} w_{1-\beta}^{T} \\
		=& \frac{z_{1+\beta}}{nz_{1}^{2}}  w_{2} w_{1-\beta}^{T}.
	\]
	By the Sherman-Morrison formula, we have
	\[
		(I - \delta H_{\beta}H'_{\beta}GH_{\beta}^{-1} )^{-1}
		= I + \frac{\delta}{n}\frac{z_{1+\beta}}{z_{1}^{2} - \delta z_{1+\beta}z_{3-\beta}}  w_{2} w_{1-\beta}^{T}
		.\]
	Let $g_{\delta}(\beta)\coloneqq \frac{\delta z_{1+\beta}z_{1-\beta}}{z_1^{2} - \delta z_{1+\beta}z_{3-\beta}}$.
	Finally,
	\[
		\frac{1}{n}\|(I - \delta H_{\beta}H_{\beta}'GH_{\beta}^{-1})^{-1}\mathbf{1}\|^{2}
		= \frac{1}{n}\left\|\mathbf{1} + \frac{\delta z_{1+\beta}z_{1-\beta}}{z_{1}^{2} - \delta z_{1+\beta}z_{3-\beta} } w_{2} \right\|^{2}
		= \frac{1}{n}( 1 + 2z_{2}g_{\delta}(\beta) + z_{4}g^{2}_{\delta}(\beta) )
		.\]
\end{proof}

\begin{proof}[Proof of \cref{prop:power}]
	Since we allow weighted graphs, we consider real-valued degrees.
	The general moment of a continuous power law distribution is given by
	\[
		\mathbb{E}[\deg_{i}^{\beta}] = \frac{\gamma-1}{\gamma-1-\beta}
		.\]
	Plugging this into \cref{lem:config} gives
	\[
		\|(I - \delta H_{\beta}H_{\beta}'GH_{\beta}^{-1})^{-1}\mathbf{1}\|^{2} = \frac{n(\gamma-1)}{(\gamma-5)}\left( 1 - \delta \frac{(\gamma-2)^{2}}{(\gamma-2-\beta)(\gamma-4+\beta)} \right)^{-2} \frac{(\gamma-4+\beta)^{2}}{(\gamma-2+\beta)^{2}}
		.\]
	Note that the second term monotonically decreases with $f_{\gamma}(\beta)\coloneqq (\gamma-2-\beta)(\gamma-4+\beta)$, which achieves its maximum at $\beta = 1$.
	The third term monotonically increases with $\beta$.
	Thus, the optimal choice of $\beta$ is in $(0,1)$.

	Compared to the baseline:
	\[
		\|(I - \delta DG)^{-1} \mathbf{1}\|^{2} = \|(I - \delta H_{0}H_{0}'GH_{0}^{-1})^{-1}\mathbf{1}\|^{2} = \frac{n(\gamma-1)}{(\gamma-5)}\left( 1 - \delta \frac{(\gamma-2)^{2}}{(\gamma-2)(\gamma-4)} \right)^{\!\!\!-2}\! \frac{(\gamma-4)^{2}}{(\gamma-2)^{2}}
		,\]
	we have
	\[
		\frac{\sigma^{\mathrm{vr}}_{\X}}{\sigma^{\mathrm{bl}}_{\X}} = \left(\frac{ \frac{\gamma-4}{\delta(\gamma-2)} - 1 }{\frac{\gamma-4}{\delta(\gamma-2)} - \frac{1}{(1-\frac{\beta}{\gamma-2})(1+\frac{\beta}{\gamma-4})}} \cdot \frac{1+\frac{\beta}{\gamma-4}}{1+\frac{\beta}{\gamma-2}} \right)^{2}
		.\]
	We see that in the critical regime where $1 - \delta \frac{\gamma-2}{\gamma-4} \downarrow 0$, $\sigma^{\mathrm{bl}}_\X$ diverges while $\sigma^{\mathrm{\mathrm{vr}}}_{\X}$ is upper bounded by a constant, leading to $\sigma^{\mathrm{vr}}_{\X} / \sigma^{\mathrm{bl}}_\X \downarrow 0$.
\end{proof}

\section{Proofs of economic efficiency}\label{apx:eff-econ}

\subsection{Groves payment}\label{sec:groves}

For a large-scale system, given the small query complexity in \cref{thm:query}, a queried agent should not expect to be queried again.
Thus, consistent with memoryless agents, we consider \textit{myopic} agents who only care about maximizing their immediate utility in the current iteration when they are queried.

Another feature of our method is that its asynchronous nature greatly simplifies an agent's strategic considerations.
When queried, an agent is presented with the current price $p_{t}$ and can influence the updated price $p_{t+1}$. Previous agents' reports are settled in $p_{t}$, unchangeable by the current agent, and a myopic agent is also not strategic against future agents.
Therefore, the queried agent only plays a one-shot game against the planner.

A fundamental aspect of any asynchronous distributed mechanism for \cref{eq:planner} is that it's \emph{incentive compatible}, in the sense that at each iteration, the planner is essentially maximizing the ``social welfare'' of the queried agent, which is its utility:
\[
	\ts p_{t+1} = \argmax_{p} u_{i}(p) - \mathrm{reg}(p; p_{t},\alpha, \mu)
	,\]
where $u_{i}(p)\coloneqq u_{i}(s^*(p),p)$ and the regularizer ensures the update is not too aggressive and takes into account the planner's considerations.
In other words, the update direction (linear system residual) calculated at each iteration is the steepest ascent direction for the queried agent's regularized utility.
Intuitively, the fundamental idea of an asynchronous distributed mechanism is to use the queried agent's utility as a proxy for the social welfare, making the planner's and the agent's objectives aligned at each iteration.

Formally, let $\Phi_{i}$ be the utility kernel of agent $i$ derived in \cref{prop:ce} and $\tilde{\Phi}_{i}\coloneqq \Psi^{-1 /2}\Phi_{i}\Psi^{-1 /2}$.
Following \cref{eq:cen-2}, we have $\tilde{\Phi}_{i} = \tilde{\cen}_{i}^{T} \tilde{W}_{i}^{-1} \tilde{\cen}_{i}$, where $\tilde{W}_{i}^{-1} = \Psi^{-1 /2}W_{i}^{-1}\Psi^{-1 /2}$ is the planner-informed preference.
Notably, we have established that
\[
	\mathbb{E}[\hat{\Phi}_{t}\given i] = \tilde{\cen}_{i}^{T} \tilde{W}_{i}^{-1} \tilde{\cen}_{i} = \tilde{\Phi}_{i}
	,\]
i.e., $\hat{\Phi}_{t}$ is an unbiased estimate of $\tilde{\Phi}_{i}$,
implying that the planner optimizes their objective using the queried agent's utility as a proxy at each iteration, and thus \poll is incentive compatible for the agent.

However, elimination of any incentive to misreport requires the agent to accept the regularizer, i.e., internalize the planner's considerations in its utility.
Otherwise, for example in our setting where agents have a utility convex in the price, the queried agent can scale up the price update by an arbitrary factor, leading to an unbounded unregularized utility.

This internalization of the planner's considerations can be achieved through a \emph{payment}.

\begin{proposition}\label{prop:groves}
	Suppose the queried agent has full control over the price update $p_{t+1}-p_{t}$, i.e., it has a strategy space $\mathbb{R}^{m}$,
	and for a submitted updated price $p$, the planner collects a payment of
	\[
		v(p) = \frac{\mu}{2}\left(\langle p-p_0, \Psi (p-p_0)\rangle + \alpha_t^{-1}\left< p-p_{t},\Psi(p-p_t) \right>\right)
		.\]
	Then, maximizing the total payoff gives
	\[
		\argmax_{p} \{u_{i}(p)-v(p)\} = p_{t} - \alpha_{t}( (I-\nu \tilde{\Phi}_{i})p_t - p_{0} ) + O(\alpha_{t}^{2})
		,\]
	where $u_{i}(p) \coloneqq u_{i}(s^*(p),p)$ with $s^*$ specified in \cref{eq:ne} is the equilibrium utility of agent $i$, and $\tilde{\Phi}_{i}$ is its utility kernel such that $u_{i}(p) = \frac{1}{2}\langle p,\tilde{\Phi}_{i}p \rangle$.
\end{proposition}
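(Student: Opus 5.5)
The plan is to reduce the statement to a first-order condition for a strictly concave quadratic, and then expand the resulting resolvent to first order in $\alpha_t$. All computations are done in the planner-informed coordinates $q=\Psi^{1/2}p$ used in \cref{eq:poll-1}, where the update reads $q_{t+1}=q_t-\alpha_t((I-\nu\tilde\Phi_i)q_t-q_0)$. I read the displayed identity in the statement in these coordinates; mapping back by $\Psi^{-1/2}$ only changes the constant in the $O(\alpha_t^2)$ term by a factor $\kappa_\Psi^{1/2}$.

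\textbf{Step 1: write the objective as a quadratic in $q$.} By \cref{prop:ce}, $u_i(p)=\frac12\langle p,\Phi_i p\rangle$. With $\tilde\Phi_i=\Psi^{-1/2}\Phi_i\Psi^{-1/2}$, this becomes $\frac12\langle q,\tilde\Phi_i q\rangle$. The payment becomes
\[
v=\frac{\mu}{2}\bigl(\|q-q_0\|^2+\alpha_t^{-1}\|q-q_t\|^2\bigr).
\]

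\textbf{Step 2: check existence and uniqueness of the maximizer.} The Hessian of $u_i-v$ is $\tilde\Phi_i-\mu(1+\alpha_t^{-1})I$. This is negative definite as soon as $\mu(1+\alpha_t^{-1})>\lambda_{\max}(\tilde\Phi_i)$. That inequality holds for all sufficiently small $\alpha_t$, because $\tilde\Phi_i$ is a fixed matrix bounded by $\bar\psi^{-1}\|\Phi_i\|$. So for small step sizes the argmax is unique and is characterized by the first-order condition. This is the one place where a genuine condition enters, and I expect it to be the main point requiring care. Without the $\alpha_t^{-1}$ proximal term, the agent's convex-in-price utility could make the problem unbounded, which is exactly the issue the payment is designed to fix. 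So I would state explicitly that $\alpha_t<\mu/(\lambda_{\max}(\tilde\Phi_i)-\mu)_+$ is assumed, which is automatic as $\alpha_t\to0$.

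\textbf{Step 3: solve the first-order condition.} The condition is
\[
\tilde\Phi_i q-\mu(q-q_0)-\mu\alpha_t^{-1}(q-q_t)=0.
\]
Multiplying by $\nu\alpha_t$ and rearranging gives
\[
\bigl(I+\alpha_t(I-\nu\tilde\Phi_i)\bigr)q=q_t+\alpha_t q_0.
\]
Hence
\[
q^\star=\bigl(I+\alpha_t(I-\nu\tilde\Phi_i)\bigr)^{-1}(q_t+\alpha_t q_0).
\]

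\textbf{Step 4: expand in $\alpha_t$.} For $\alpha_t\|I-\nu\tilde\Phi_i\|<1$, use the Neumann expansion $(I+\alpha_t A)^{-1}=I-\alpha_t A+R$ with $\|R\|\le\alpha_t^2\|A\|^2/(1-\alpha_t\|A\|)$. This yields
\[
q^\star=q_t-\alpha_t\bigl((I-\nu\tilde\Phi_i)q_t-q_0\bigr)+O\bigl(\alpha_t^2\|A\|(\|A\|\|q_t\|+\|q_0\|)\bigr).
\]
The remainder is $O(\alpha_t^2)$ with constants depending only on $\|I-\nu\tilde\Phi_i\|$ and on the (bounded) iterates $q_t$ and $q_0$.

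Combined with the earlier observation that $\mathbb{E}[\hat\Phi_t\given i]=\tilde\Phi_i$, this shows the payoff-maximizing report coincides with the truthful \poll step up to $O(\alpha_t^2)$. By \cref{prop:noise}, a deviation of this size is absorbed without changing the learning complexity.
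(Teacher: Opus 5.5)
Your proof is correct and follows the paper's argument. Like the paper, you write the first-order condition of the quadratic payoff, which is strictly concave once $\alpha_t$ is small, and expand the resolvent to first order in $\alpha_t$. The only difference is that you pass to $q=\Psi^{1/2}p$ at the outset, which turns the paper's two-stage expansion with $(1+\alpha_t)^{-1}$ factors into the single resolvent $(I+\alpha_t(I-\nu\tilde\Phi_i))^{-1}$; the paper instead computes in $p$ with $\Psi^{-1}\Phi_i$ and renormalizes at the end. One harmless slip: the bound is $\|\tilde\Phi_i\|\le\lambda_{\min}(\Psi)^{-1}\|\Phi_i\|$, not $\lambda_{\max}(\Psi)^{-1}\|\Phi_i\|$, and nothing changes because you only use that this norm is finite.
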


Since the agent does not know $\tilde{\Phi}_{i}$ but only has a stochastic estimate $\hat{\Phi}_{i}$,
the above proposition implies that the agent's best response when facing a payment is to report $\hat{\Phi}_{i}$ according to \poll almost truthfully, i.e., with at most a higher-order deviation tolerated by our method.

Our payment is \textit{Groves}-like \citep{groves1973Incentivesteams}, as it does not depend on the reported preference of the queried agent but only on other agents' reports and the planner's objective.
This connection suggests the strategyproofness of \poll with payment.

We also share the same \textbf{information structure} as incentive design, where the planner first announces a strategy, the agent takes an action, and then the planner \textbf{observes} the action and takes its own action (collecting payment).
In contrast, in a classic Stackelberg game, the leader does not observe the follower's action before taking its own action.
Note that the price update rule is the planner's strategy but the  update itself can be seen as the agent's action, as it is determined by the agent's report and is not modified by the planner.

\begin{proof}[Proof of \cref{prop:groves}]
	Under payment, the total payoff of the queried agent is
	\[
		u_{i}^{\mathrm{G}}(p)\coloneqq u_{i}(p) - v(p) = \frac{1}{2} \langle p, \Phi_{i}p\rangle - \frac{\mu}{2}\langle p-p_0, \Psi (p-p_0)\rangle - \frac{\mu}{2\alpha _{t}} \left< p-p_t, \Psi(p-p_t) \right>
		,\]
	whose first-order condition is
	\[
		\frac{\mathrm{d} u^{\mathrm{G}}_{i}(p) }{\mathrm{d} p} = \Phi_{i}p - \mu \Psi(p-p_0) - \frac{\mu}{\alpha _{t}} \Psi(p-p_t) = 0
		.\]
	For a sufficiently small step size $\alpha_{t}$, the payoff is concave and achieves its maximum at
	\[
		p^\mathrm{G}_{*} =& (\mu(1+\alpha _{t}^{-1})\Psi - \Phi_{i})^{-1} \mu\Psi(p_0 + \alpha _{t}^{-1} p_t) \\
		=& \left(I - \nu \frac{\alpha _t}{1+\alpha _t}\Psi^{-1} \Phi_{i}\right)^{-1} \frac{\alpha _tp_0 + p_t}{1+\alpha _t}  \\
		=& \left(I + \nu \frac{\alpha _t}{1+\alpha _t}\Psi^{-1} \Phi_{i}\right) \frac{p_t + \alpha _tp_0}{1+\alpha _t} + O(\alpha _t^{2})  \\
		=& p_t - \frac{\alpha _t}{1+\alpha _t}\left( I - \frac{\nu}{1+\alpha _t}\Psi^{-1} \Phi_{i} \right)p_{t} + \frac{\alpha _t}{1+\alpha _t}\left( I + \nu \frac{\alpha _t}{1+\alpha _t}\Psi^{-1} \Phi_{i} \right)p_0 + O(\alpha _t^{2})\\
		=& p_t - \alpha _t\left( I - \nu\Psi^{-1} \Phi_{i} \right)p_{t} + \alpha _t p_0 + O(\alpha _t^{2})\\
		=& p_t - \alpha _t\left( \left( I - \nu\Psi^{-1} \Phi_{i} \right)p_{t} - p_0\right) + O(\alpha _t^{2})
		,\]
	where the third and fifth equalities use $(I + \alpha _{t}C)^{-1} = I - \alpha _tC + O(\alpha _t^{2})$ for any constant $C$.
	Renormalizing all prices by $p \gets \Psi^{1 /2}p$ gives the desired result.
\end{proof}

\subsection{Knightian uncertainty}\label{apx:knight}

\begin{lemma}[Regret]\label{lem:regret}
	Let $u_{i}(\hat{W}_{i})$ be the equilibrium utility of agent $i$ at the price updated by \poll with control variate and the reported preference $\hat{W}_{i}$. The expected loss (or negative gain) w.r.t. the agent's posterior belief $\mathcal{P}$ by deviating from truthful reporting is
	\[
		R_{\mathcal{P}}(\hat{W}_{i}) \coloneqq \mathbb{E}_{\mathcal{P}}[u_{i}(W_{i})-u_{i}(\hat{W}_{i})]
		= -\alpha _{t} \tr\bigl( C^{T} \bigl(I-W_{i}^{-1}\hat{W}_{i} \bigr)\bigl(I  - c_{\mathcal{P}}W_{i}^{-1} \hat{W}_{i} \bigr) \hat{W}_{i}^{-1}\bigr) + O(\alpha _{t}^{2})
		,\]
	where $C=\nu \Phi_{i}\Psi^{-1 /2}p_t p_t^{T} \Psi^{-1 /2}$ with $\Phi_{i}$ being the utility kernel of agent $i$, and $c_{\mathcal{P}} = \delta^{2} / (1-\delta)^{2} \mathbb{E}_{\mathcal{P}}[d_1d_2]$ are constants w.r.t. $\hat{W}_{i}$.
\end{lemma}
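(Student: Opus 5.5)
We will compute the regret by linearizing the agent's equilibrium utility around the current price. The report $\hat{W}_{i}$ only affects the single update $p_{t+1}$, and that update moves by $O(\alpha_t)$.

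First, recall from \cref{prop:ce} that $u_{i}(p)=\frac{1}{2}\langle p,\Phi_{i}p\rangle$. Write the \poll update with control variate in price coordinates:
\[
p_{t+1}(\hat{W}_{i}) = p_t - \alpha_t\bigl((I-\nu\Psi^{-1/2}\hat{\Phi}(\hat{W}_{i})\Psi^{1/2})p_t - p_0\bigr),
\]
where $\hat{\Phi}(\hat{W}_{i})$ is the estimator \cref{eq:cv} with $\tilde{W}_{i}$ replaced by its reported counterpart. Then
\[
p_{t+1}(\hat{W}_{i})-p_{t+1}(W_{i}) = \alpha_t\nu\,\Psi^{-1/2}\bigl(\hat{\Phi}(\hat{W}_{i})-\hat{\Phi}(W_{i})\bigr)\Psi^{-1/2}p_t .
\]
Expanding the quadratic $u_i$ around $p_{t+1}(W_i)=p_t+O(\alpha_t)$ gives
\[
u_{i}(\hat{W}_{i})-u_{i}(W_{i}) = \alpha_t\nu\,\langle \Phi_{i}p_t,\Psi^{-1/2}(\hat{\Phi}(\hat{W}_{i})-\hat{\Phi}(W_{i}))\Psi^{-1/2}p_t\rangle + O(\alpha_t^{2}).
\]
With $C=\nu\Phi_{i}\Psi^{-1/2}p_tp_t^{T}\Psi^{-1/2}$, this equals $\alpha_t\tr\bigl(C^{T}\Psi^{-1/2}(\hat{\Phi}(\hat{W}_{i})-\hat{\Phi}(W_{i}))\Psi^{1/2}\bigr)+O(\alpha_t^{2})$. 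The conjugations by $\Psi^{\pm1/2}$ cancel against those inside $\tilde{W}$, leaving untilded $W$'s. This cancellation is only bookkeeping, and we will check it carefully.

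Second, we isolate the report-dependent terms of \cref{eq:cv}. The cross term $\delta(1-\delta)^{-1}(d_1\tilde{W}_{j^{(1)}}^{-1}+d_2\tilde{W}_{j^{(2)}}^{-1})$ does not involve agent $i$'s report and cancels in the difference. Two terms remain: the base term $\hat{W}_{i}^{-1}$ versus $W_{i}^{-1}$, and the quadratic term $\delta^{2}(1-\delta)^{-2}d_1d_2W_{j^{(1)}}^{-1}(\hat{W}_{i}-W_{i})W_{j^{(2)}}^{-1}$. We then take $\mathbb{E}_{\mathcal{P}}$. Under the belief in \cref{prop:knight}, the degree products have the signalled mean, and the endpoint preferences have mean equal to $W_i$ and are treated as independent of the degree products and of each other. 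Hence
\[
\mathbb{E}_{\mathcal{P}}\bigl[d_1d_2W_{j^{(1)}}^{-1}\hat{W}_{i}W_{j^{(2)}}^{-1}\bigr]=\mathbb{E}_{\mathcal{P}}[d_1d_2]\,W_{i}^{-1}\hat{W}_{i}W_{i}^{-1}.
\]
This is the step I expect to be the main obstacle. Endpoint coincidences ($j^{(1)}=j^{(2)}$, or an endpoint equal to $i$) break the factorization. I would handle this by stating that the belief $\mathcal{P}$ is a product belief over endpoints, which matches the ``in the absence of other information'' clause. The expected gain matrix is therefore
\[
\hat{W}_{i}^{-1}-W_{i}^{-1}+c_{\mathcal{P}}\bigl(W_{i}^{-1}\hat{W}_{i}W_{i}^{-1}-W_{i}^{-1}\bigr),\qquad c_{\mathcal{P}}=\tfrac{\delta^{2}}{(1-\delta)^{2}}\mathbb{E}_{\mathcal{P}}[d_1d_2].
\]

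Finally, a direct expansion verifies the factorization
\[
(I-W_{i}^{-1}\hat{W}_{i})(I-c_{\mathcal{P}}W_{i}^{-1}\hat{W}_{i})\hat{W}_{i}^{-1}=\hat{W}_{i}^{-1}-W_{i}^{-1}-c_{\mathcal{P}}W_{i}^{-1}+c_{\mathcal{P}}W_{i}^{-1}\hat{W}_{i}W_{i}^{-1},
\]
and the right-hand side is exactly the expected gain matrix. The regret is minus the expected gain, so
\[
R_{\mathcal{P}}(\hat{W}_{i})=-\alpha_t\tr\bigl(C^{T}(I-W_{i}^{-1}\hat{W}_{i})(I-c_{\mathcal{P}}W_{i}^{-1}\hat{W}_{i})\hat{W}_{i}^{-1}\bigr)+O(\alpha_t^{2}).
\]
Both $C$ and $c_{\mathcal{P}}$ are independent of $\hat{W}_{i}$, as claimed.
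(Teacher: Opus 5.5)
Your proposal is correct and follows the paper's proof essentially step for step. Both linearize the quadratic equilibrium utility in $\alpha_t$ around $p_t$, absorb the report-independent cross terms of the control-variate estimator into a constant, use a belief under which the endpoint preferences are independent with mean $W_i$ (and independent of $d_1d_2$) to obtain $c_{\mathcal{P}}W_i^{-1}\hat{W}_iW_i^{-1}$, and then factor the resulting gain matrix. The one thing to tidy is the $\Psi^{\pm 1/2}$ placement: your update has $\Psi^{-1/2}\hat{\Phi}\Psi^{1/2}$ while your difference display has $\Psi^{-1/2}(\cdot)\Psi^{-1/2}$, a looseness the paper shares. This affects only the exact form of the rank-one, $\hat{W}_i$-independent matrix $C$, not the factorization that the later argument relies on.
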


\begin{proof}
	Recall the update rule of \poll with control variate \cref{eq:cv}:
	\[
		p(\hat{W}_{i}) =& p_{t} - \alpha_t( (I-\nu \hat{\Phi}_{i})p_t - p_{0} )\\
		=& \alpha_t\nu \Psi^{-1 /2}\left(\hat{W}_{i}^{-1} +  \delta^{2}/(1-\delta)^{2}d_1d_2 W_{j^{(1)}}^{-1} \hat{W}_{i} W_{j^{(2)}}^{-1} \right)\Psi^{-1 /2}p_t \\
		& + \underbrace{(1-\alpha _t)p_{t} + \alpha _t p_0 + \alpha_t\nu \delta/(1-\delta) \Psi^{-1 /2}  \left( d_1 W_{j^{(1)}}^{-1} + d_2 W_{j^{(2)} }^{-1} \right) \Psi^{-1 /2}p_t}_{c}
		.\]
	$c$ is a constant vector with respect to $\hat{W}_{i}$.
	$W_{j^{(1)}}^{-1},W_{j^{(2)}}^{-1}$ are the preferences reported by the endpoint agents of the two random walks, which are unknown to agent $i$ and are not necessarily truthful.
	$d_1,d_2$ are the degree products of the two random walks, which are also unknown to agent $i$ but signaled by the planner.

	The actual utility of agent $i$ depends on its true preference $W_{i}$ and the updated price determined by $\hat{W}_{i}$.
	Let $\mathcal{P}$ be the agent's posterior belief over the unknown variables after receiving the planner's signals.
	We have
	\[
		\mathbb{E}_{\mathcal{P}} u_{i}(\hat{W}_{i})
		=& \mathbb{E}_{\mathcal{P}} \frac{1}{2} \bigl\| p(\hat{W}_{i}) \bigr\|_{\Phi_{i}}^{2} \\
		=& \mathbb{E}_{\mathcal{P}} \frac{1}{2} \bigl\| c + \alpha _{t}\nu \Psi^{-1 /2}(\hat{W}_{i}^{-1} + \delta^{2} /(1-\delta)^{2} d_1d_2 W_{j^{(1)}}^{-1} \hat{W}_{i} W_{j^{(2)}}^{-1} )\Psi^{-1 /2}p_{t} \bigr\|_{\Phi_{i}}^{2} \\
		=& \alpha _{t}\tr\bigl( (\underbrace{\nu \Phi_{i}\Psi^{-1 /2}p_t p_{t}^{T} \Psi^{-1 /2}}_{C})^{T}   \bigl(\underbrace{\hat{W}_{i}^{-1} + \delta^{2} /(1-\delta)^{2} \mathbb{E}_{\mathcal{P}}\bigl[d_1d_2 W_{j^{(1)}}^{-1} \hat{W}_{i} W_{j^{(2)}}^{-1} \bigr]}_{\mathbb{E}_{\mathcal{P}}\hat{\Phi}_{i}}\bigr) \bigr) \\
		&+ \frac{1}{2}\|c\|_{\Phi_{i}}^{2} + O(\alpha _{t})^{2} \label{eq:eu}
		.\]
		Suppose in the posterior belief $W_{j^{(1)}}^{-1}$ and $W_{j^{(2)}}^{-1}$ are independent and agent $i$ uses its own preference as the posterior mean.
	Thus,
	\[
		\mathbb{E}_{\mathcal{P}}u_{i}(\hat{W}_{i})
		= \alpha _{t} \tr(C^{T}(\hat{W}_{i}^{-1} + \underbrace{\delta^{2} / (1-\delta)^{2} \mathbb{E}_{\mathcal{P}}[d_1d_2]}_{c_{\mathcal{P}}} W_{i}^{-1} \hat{W}_{i} W_{i}^{-1}
		)) + \frac{1}{2}\|c\|_{\Phi_{i}}^{2} + O(\alpha _{t}^{2})
		.\]
	Then, the expected utility loss by deviating from truthful reporting is
	\[
		\mathbb{E}_{\mathcal{P}}[u_{i}(W_{i}) - u_{i}(\hat{W}_{i})]
		=& \alpha _{t} \tr\bigl( C^{T} \bigl( {W}_{i}^{-1}-\hat{W}_{i}^{-1} + c_{\mathcal{P}} W_{i}^{-1} \bigl( {W}_{i} - \hat{W}_{i} \bigr) W_{i}^{-1} \bigr) \bigr) + O(\alpha _{t}^{2}) \\
		=& -\alpha _{t} \tr\bigl( C^{T} \bigl(I-W_{i}^{-1}\hat{W}_{i} \bigr)\bigl(I  - c_{\mathcal{P}}W_{i}^{-1} \hat{W}_{i} \bigr) \hat{W}_{i}^{-1}\bigr) + O(\alpha _{t}^{2})
		.\]
\end{proof}

\begin{proof}[Proof of \cref{prop:knight}]
	For a radial misreport $\hat{W}_{i} = \rho W_{i}$, \cref{lem:regret} gives
	\[
		R_{\mathcal{P}}(\hat{W}_{i})
		= \alpha _t (1-\rho) (c_{\mathcal{P}} - \rho^{-1})\tr(C^{T}W_{i}^{-1}) + O(\alpha _t^{2})
		.\]
	Note that $C^{T}W_{i}^{-1}$ is of rank-one but non-zero, so $\tr(C^{T}W_{i}^{-1})
		\ne 0$. Without loss of generality, suppose $\tr(C^{T}W_{i}^{-1}) > 0$.
	Then, when the agent \emph{overstates} its preference, i.e., $\rho > 1$, the planner picks $c_{\mathcal{P}} < \rho^{-1}$;
	when the agent \emph{understates} its preference, i.e., $\rho < 1$, the planner picks $c_{\mathcal{P}} > \rho^{-1}$, leading to a positive regret in both cases when $\alpha _t$ is sufficiently small.

	Specifically, let $\ubar{\rho} = \inf\{\rho: \rho W_{i} \succeq \ubar{\lambda}I\}$ and $\bar{\rho} = \sup\{\rho: \rho W_{i} \preceq \bar{\lambda}I\}$. Then $\rho\in[\ubar{\rho},\bar{\rho}]$ and we know that $\ubar{\rho} \ge \ubar{\lambda} / \bar{\lambda} = \kappa_{W}^{-1}$ and $\bar{\rho} \le \bar{\lambda} / \ubar{\lambda} = \kappa_{W}$.
	The planner's signals ensure that the agent has posterior beliefs $\mathcal{P}_{-}$ and $\mathcal{P}_{+}$ such that $c_{\mathcal{P}_{-}} = \delta^{2} /(1-\delta)^{2} d_{-}^{2} < \kappa_{W}^{-1} \le \bar{\rho}^{-1}$ and $c_{\mathcal{P}_{+}} = \delta^{2} /(1-\delta)^{2} d_{+}^{2} > \kappa_{W} \ge \ubar{\rho}^{-1}$.
	Then, the worst-case regret is always non-negative, and almost truthful reporting is the only strategy that achieves a zero worst-case regret, or formally:
	\[
		\argmin_{\rho} \max_{\mathcal{P}\in\{\mathcal{P}_{-},\mathcal{P}_{+}\}} R_{\mathcal{P}}(\rho W_{i}) = 1 + O(\alpha _t)
		.\]
	In other words, the planner induces the agent to believe the true update might be an extreme case where any deviation from truthful reporting leads to a positive regret, thus driving the agent to report truthfully.

	Note that the slope of $\mathcal{R}_{\mathcal{P}}$ at $\rho=1$ is proportional to $c_{\mathcal{P}} - 1$, but it has a negative Hessian independent of $c_{\mathcal{P}}$.
	Therefore, the uncertainty structure needs to meet two requirements:
	it needs to change the sign of the slope and also make it steep enough to prevent the negative curvature from bending the regret below zero at larger deviations.
	\cref{fig:sb} illustrates the intuition.
\end{proof}

\begin{remark}[General manipulation space]\label{rmk:general-space}
	When the agent's strategy space is relaxed to the general $m$-dimensional space of all possible preference matrices, \emph{coaltion} between activities occurs (see \cref{rmk:coalition}).
	Specifically, since the decoy signals only kills one degree of freedom, the agent can coordinate its activities to misreport in an orthogonal hyperplane that is not penalized by the decoy signals.
	An illustration is provided in \cref{fig:sb-3d}, generalizing the 2D case in \cref{fig:sb}.

	To break this coalition, the planner can query one activity at a time.
	Consider the simplification in \cref{sec:voting} where agents share the same activity kernel eigenbasis $U$. Note that
	\[
		\Phi_{i} = \sum_{a\in [m]} \phi_{i,a} u_{a}u_{a}^{T} = \mathbb{E}_{a\sim \mathrm{Unif}([m])} \phi_{i,a}u_{a}u_{a}^{T}
	.\]
	Thus, the planner queries one activity $a$ at a time, which provides an unbiased estimate of $\Phi_{i}$.
	Since now the manipulation space of $\phi_{i,a}$ reduces to the one-dimensional case, \cref{prop:knight} applies.
\end{remark}

\begin{figure}[th]
	\centering
	\begin{tikzpicture}[scale=0.45]
		\begin{axis}[
				view={148}{26},
				axis lines=center,
				axis line style={thick, ->},
				xlabel={$\ln\rho_2$},
				ylabel={$\ln\rho_1$},
				zlabel={},
				xlabel style={font=\large, anchor=north east},
				ylabel style={font=\large, anchor=west},
				xmin=-1.4, xmax=1.4,
				ymin=-1.4, ymax=1.4,
				zmin=-1.5, zmax=1.7,
																xtick=\empty, ytick=\empty, ztick=\empty,
				width=13cm, height=13cm,
				clip=false,
				tick style={draw=none},
				declare function={
						ux(\x) = 0.35*\x + 0.5 * \x^2;
					},
				declare function={
						u1(\x,\y) = ux(\x) + 0.5 * \y^2 + 1*\y;
					},
				declare function={
						u2(\x,\y) = ux(\x) + 0.5 * \y^2 - 1*\y;
					}
			]

						\addplot3[
				surf, shader=flat,
				fill=mygreen!50, draw=mygreen!70,
				fill opacity=0.60, draw opacity=0.30,
				domain=-1:1, domain y=-1:0, samples=12,
			]{u1(x,y)};
						\addplot3[thick, color=black, domain=-1:1, samples y=0, samples=12] (x,-1,{u1(x,-1)});
						\addplot3[thick, color=black, domain=-1:0, samples=12] (-1,y,{u1(-1,y)});
						\addplot3[thick, color=black, domain=-1:0, samples=12] (1,y,{u1(1,y)});

						\addplot3[
				surf, shader=flat,
				fill=mygreen!50, draw=mygreen!70,
				fill opacity=0.60, draw opacity=0.30,
				domain=-1:1, domain y=0:1, samples=12,
			]{u2(x,y)};
						\addplot3[
				surf, shader=flat,
												fill=myviolet!50, draw=myviolet!70,
				fill opacity=0.60, draw opacity=0.30,
				domain=-1:1, domain y=-1:0, samples=12,
			]{u2(x,y)};
						\addplot3[thick, color=black, domain=-1:1, samples y=0, samples=12] (x,1,{u2(x,1)});
						\addplot3[thick, color=black, domain=-1:1, samples y=0, samples=12] (x,-1,{u2(x,-1)});
						\addplot3[thick, color=black, domain=-1:1, samples=12] (-1,y,{u2(-1,y)});
						\addplot3[thick, color=black, domain=-1:1, samples=12] (1,y,{u2(1,y)});

						\addplot3[
				surf, shader=flat,
												fill=myviolet!50, draw=myviolet!70,
				fill opacity=0.60, draw opacity=0.30,
				domain=-1:1, domain y=0:1, samples=12,
			]{u1(x,y)};
						\addplot3[thick, color=black, domain=-1:1, samples y=0, samples=12] (x,1,{u1(x,1)});
						\addplot3[thick, color=black, domain=0:1, samples=12] (-1,y,{u1(-1,y)});
						\addplot3[thick, color=black, domain=0:1, samples=12] (1,y,{u1(1,y)});

															\addplot3[dashed, color=gray!55, thick, domain=-1:1, samples=2]
			({-1},{y},{0});
			\addplot3[dashed, color=gray!55, thick, domain=-1:1, samples=2]
			({1},{y},{0});
			\addplot3[dashed, color=gray!55, thick, domain=-1:1, samples=2]
			({x},{-1},{0});
			\addplot3[dashed, color=gray!55, thick, domain=-1:1, samples=2]
			({x},{1},{0});

															\addplot3[
				domain=-1:1, samples y=0, samples=12,
				color=mygreen!80!black, line width=2pt,
			]({x},{0},{ux(x)});

						\addplot3[only marks, mark=o, mark size=4pt, color=mygreen!80!black, mark options={draw=mygreen!80!black, fill=white, line width=1pt}]
			coordinates {(0,0,0)};
			\node[font=\large, right] at (axis cs:0.08, 0.00, -0.22) {$0$};

						\addplot3[dashed, color=gray!55, line width=1pt,
				domain=0:1, samples=2]({1},{0},{min(x,ux(1))});
						
						\addplot3[only marks, mark=*, mark size=4pt, color=mygreen!80!black]
			({1}, {0}, {ux(1)});

						\node[font=\large, anchor=south west]
			at (axis cs:0.05, 0.05, 1.70) {$\hat{u}-u$};

						\node[font=\large]
			at (axis cs: -0.10, 2.10, 2.4) {$d_+$};
			\node[font=\large]
			at (axis cs:-0.95,-0.6, 1.6) {$d_-$};

		\end{axis}
	\end{tikzpicture}
	\caption{Although an ambiguity averse agent would avoid misreporting along one dimension, it can coordinate its activities to misreport along the other direction (highlighted in green).}\label{fig:sb-3d}
\end{figure}

\begin{remark}[Regret]
	We benchmark regret against truthful reporting, as truth minimizes the agent's cognitive load \citep{li2017ObviouslyStrategyProof}. Using the optimal misreport as the benchmark, which leads to the standard minimax regret in online learning, would require both the agent to compute the best misreport for each signal and the planner to craft signals with approximately equal optimal utility gains.
\end{remark}

\subsubsection{Beyond full ambiguity aversion}
We extend \cref{prop:knight} to consider a continuum of ambiguity attitudes, from full ambiguity aversion to full optimism.

\begin{definition}[$\theta$-maxmin preference; \citet{marinacci2002Probabilisticsophistication,ghirardato2004Differentiatingambiguity}]\label{def:theta}
	Given an uncertainty set $\mathcal{C}$ and an ambiguity attitude $\theta\in[0,1]$, a \emph{$\theta$-maxmin} agent  solves
	\[
		\max_{\hat{W}_{i}} \Bigl( \theta \min_{\mathcal{P}\in\mathcal{C}} \mathbb{E}_{\mathcal{P}}[u_{i}(\hat{W}_{i})-u_{i}(W_{i})] + (1-\theta) \max_{\mathcal{P}\in\mathcal{C}} \mathbb{E}_{\mathcal{P}}[u_{i}(\hat{W}_{i})-u_{i}(W_{i})] \Bigr)
		.\]
\end{definition}

The case $\theta=1$ recovers the maxmin preference in \cref{prop:knight}, and $\theta=0$ corresponds to a fully optimistic agent that selects the best case.

\begin{proposition}\label{prop:theta}
	Consider the setup of \cref{prop:knight} with the maxmin preference replaced by the $\theta$-maxmin preference, and let $\mathcal{C}=\{\mathcal{P}_{-},\mathcal{P}_{+}\}$ be induced by signals $d_{-}<d_{+}$.
	If
	\[
		\theta > \theta_{*}(d_{-},d_{+}) \coloneqq \max\left\{ \frac{d_{+}^{2} - \kappa_{W}^{-1}(1-\delta)^{2}/\delta^{2}}{d_{+}^{2}-d_{-}^{2}},\; \frac{- d_{-}^{2} + \kappa_{W}(1-\delta)^{2}/\delta^{2} }{d_{+}^{2}-d_{-}^{2}} \right\},
	\]
	the agent's optimal strategy deviates from the truth $W_{i}$ by at most $O(\alpha_t)$.
\end{proposition}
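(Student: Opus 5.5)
We reduce the $\theta$-maxmin problem to the scalar regret formula already obtained in the proof of \cref{prop:knight}. By \cref{lem:regret}, for a radial misreport $\hat{W}_{i}=\rho W_{i}$ the expected utility gain under belief $\mathcal{P}$ is
\[
	g_{\mathcal{P}}(\rho) \coloneqq -R_{\mathcal{P}}(\rho W_{i}) = \alpha_t\,\tr(C^{T}W_{i}^{-1})\,(1-\rho)(\rho^{-1}-c_{\mathcal{P}}) + O(\alpha_t^{2}),
\]
with $c_{\mathcal{P}_{\pm}} = \delta^{2}(1-\delta)^{-2}d_{\pm}^{2}$. As in \cref{prop:knight}, assume without loss of generality that $\tr(C^{T}W_{i}^{-1})>0$ and drop this positive factor and $\alpha_t$. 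The key observation is that $g_{\mathcal{P}}(\rho)$ is affine in $c_{\mathcal{P}}$, with slope $-(1-\rho)$.

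\textbf{Step 1.} Compute the min and max over $\mathcal{C}$ for each $\rho$. For $\rho>1$ the slope is positive, so $\min_{\mathcal{C}}$ is attained at $c_{-}$ and $\max_{\mathcal{C}}$ at $c_{+}$. For $\rho<1$ the roles are swapped. The $\theta$-maxmin objective is therefore
\[
	(1-\rho)\bigl(\rho^{-1} - \bar{c}_{\theta}(\rho)\bigr),
\]
where $\bar{c}_{\theta}(\rho) = \theta c_{-}+(1-\theta)c_{+}$ for $\rho>1$ and $\bar{c}_{\theta}(\rho) = \theta c_{+}+(1-\theta)c_{-}$ for $\rho<1$. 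In effect, the agent acts as a single-belief agent with an effective $c$ that depends on the direction of deviation.

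\textbf{Step 2.} Show that this objective is strictly negative on $[\ubar{\rho},\bar{\rho}]\setminus\{1\}$.
\begin{itemize}
	\item For $\rho>1$ we need $\rho^{-1} - \bar{c}_{\theta} > 0$ whenever $\rho\le\bar{\rho}$. Since $\rho^{-1}\ge\bar{\rho}^{-1}\ge\kappa_{W}^{-1}$, it suffices that $\theta c_{-}+(1-\theta)c_{+} < \kappa_{W}^{-1}$. Solving for $\theta$ gives $\theta(d_{+}^{2}-d_{-}^{2}) > d_{+}^{2}-\kappa_{W}^{-1}(1-\delta)^{2}/\delta^{2}$, which is the first term of $\theta_{*}$.
	\item For $\rho<1$ we need $\rho^{-1}-\bar{c}_{\theta}<0$ whenever $\rho\ge\ubar{\rho}$. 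Since $\rho^{-1}\le\kappa_{W}$, it suffices that $\theta c_{+}+(1-\theta)c_{-}>\kappa_{W}$, i.e.\ $\theta(d_{+}^{2}-d_{-}^{2}) > \kappa_{W}(1-\delta)^{2}/\delta^{2}-d_{-}^{2}$. This is the second term of $\theta_{*}$.
\end{itemize}
Under $\theta>\theta_{*}$, the leading-order objective is therefore $\le 0$, with equality only at $\rho=1$.

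\textbf{Step 3.} Control the $O(\alpha_t^{2})$ remainder. The leading term is of order $\alpha_t|1-\rho|\cdot\epsilon$, where $\epsilon>0$ is the strict margin from Step 2. The margin must be quantified uniformly: the factor $|\rho^{-1}-\bar{c}_{\theta}|$ is bounded below by a positive constant on each side. The remainder is $O(\alpha_t^{2})$ uniformly on the compact interval $[\ubar{\rho},\bar{\rho}]$.

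\textbf{Main obstacle.} A profitable deviation therefore needs $\alpha_t|1-\rho|\epsilon \lesssim \alpha_t^{2}$, i.e.\ $|1-\rho| = O(\alpha_t)$. The delicate point is handling the remainder near $\rho=1$. There the leading term vanishes only linearly in $|1-\rho|$, not quadratically, and this linear vanishing is exactly what yields the $O(\alpha_t)$ bound rather than $O(\sqrt{\alpha_t})$. I would verify it by noting that the kink at $\rho=1$ gives one-sided slopes of opposite sign, both bounded away from zero by $\epsilon$.
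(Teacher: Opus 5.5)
Your proposal is correct and follows the paper's proof essentially step for step. Like the paper, you start from the radial regret formula of Proposition~\ref{prop:knight}, note that the gain is increasing in $c_{\mathcal{P}}$ for $\rho>1$ and decreasing for $\rho<1$, collapse the $\theta$-maxmin objective to a single effective $c$ on each side, and rearrange the sufficient conditions $\theta c_{-}+(1-\theta)c_{+}<\kappa_W^{-1}$ and $\theta c_{+}+(1-\theta)c_{-}>\kappa_W$ into $\theta>\theta_*$. Your Step~3 goes further than the paper, which only asserts the passage to the $O(\alpha_t)$ deviation bound: you bound the leading term by $-\epsilon|1-\rho|$, control the remainder uniformly by $O(\alpha_t^2)$, and compare against the exact zero gain at $\rho=1$.
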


\begin{proof}
	Following the proof of \cref{prop:knight}, up to $O(\alpha_t^{2})$, the expected utility gain of the report $\rho W_{i}$ increases in $c_{\mathcal{P}_\pm}$ if $\rho>1$ and decreasing if $\rho<1$.
	Thus, when $\rho>1$, the $\theta$-maxmin objective is proportional to $(\rho-1)(\theta c_{\mathcal{P}_-} + (1-\theta)c_{\mathcal{P}_+} - \rho^{-1})$, which is strictly negative for all $\rho\in(1,\bar{\rho}]$ if $\theta c_{\mathcal{P}_-}+(1-\theta)c_{\mathcal{P}_+} < \kappa_{W}^{-1}$;
	for $\rho<1$, it is proportional to $(\rho-1)(\theta c_{\mathcal{P}_+}+(1-\theta)c_{\mathcal{P}_-}-\rho^{-1})$, which is strictly negative for all $\rho\in[\ubar{\rho},1)$ if $\theta c_{+}+(1-\theta)c_{-} > \kappa_{W}$.
	Rearranging the two conditions gives $\theta>\theta_{*}(d_{-},d_{+})$, under which truthful reporting uniquely attains the maximal (zero) objective up to $O(\alpha_t)$.
\end{proof}

At $\theta=1$, $\theta_{*}<1$ recovers the conditions on $d_{\pm}$ in \cref{prop:knight}.
Larger separations between $d_{-}$ and $d_{+}$ do not necessarily lower $\theta_{*}$: a stronger $d_{+}$ deters understating but also amplifies the best case of overstating, and vice versa.

\subsection{Byzantine audit}

The above strategyproof mechanisms work for rational agents but are not necessarily robust to Byzantine agents, who may misreport an arbitrary preference for malicious purposes beyond their self-interest.

\poll already possesses a natural device to safeguard against Byzantine agents, by forming an \emph{audit committee} consisting of the agents at the endpoints of the random walks at each iteration.
Suppose agents observe or communicate each other's equilibrium efforts at the initial price, and try to form an estimate of each other's preferences.
Then, an endpoint agent can check this estimate against the reported preference of the queried agent, and submit an audit report.
The planner rejects the vote if many audit committee members report a large mismatch.

Note that an agent's equilibrium effort is influenced by its own preference and the preferences of all other agents it is connected to.
Therefore, without knowing other agents' preferences, an agent's estimate of another agent's preference is highly uncertain.
This motivates the use of a batch of random walks at each iteration, which not only reduces the variance of \poll but also forms a more credible audit committee.

\begin{proposition}\label{prop:byzantine}
	Suppose agents have the same set of activity bundles, the initial price is non-zero for all principal activity bundles, i.e., $U^{T}p_{0}$ has no zero entry.
	Suppose agents' equilibrium efforts at the initial price $p_{0}$ are public information.
	Let $i$ be the queried agent and $j$ be an agent at the end of a random walk in \poll. Let $d_{j}$ be the degree product of the corresponding random walk.
	Suppose no collusion. After $j$ sees $i$'s reported preference $\hat{W}_{i}$, it submits an audit report $\chi_{j} = \hat{W}_{i}^{-1}p_0 - s^{*}_{i}(p_0) + \delta(1-\delta)^{-1} d_{j}W_{j}^{-1}p_0$, which is an unbiased estimate of mismatch $\hat{W}_{i}^{-1}p_0 - W_{i}^{-1}p_0$.
	Given a batch of $B$ random walks at each iteration, we reject the vote if the audit committee gives an aggregated report $\|\frac{1}{B}\sum_{j=1}^{B}\chi_{j}\| > \epsilon_{B}$, where $\epsilon_{B} = \frac{2\delta\bar{d}\|p_0\|}{(1-\delta)^{2}\ubar{\lambda}} \sqrt{\frac{\ln (2m/\iota)}{B}}$.
	Then, with probability at least $1-\iota$, any accepted report $\hat{W}_{i}$ has a bounded mismatch $\|\hat{W}_{i}^{-1} - W_{i}^{-1}\| \le 2\ubar{\tilde{p}_{0}}^{-1}\epsilon_{B}$ and $\|\hat{W}_{i}-W_{i}\| \le 2\ubar{\tilde{p}_{0}}^{-1} \bar{\lambda}^{2}\epsilon_{B}$, where $\ubar{\tilde{p}_{0}}\coloneqq \min_{a\in\A}[U^{T}p_{0}]_{a}$.
\end{proposition}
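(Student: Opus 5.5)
The plan is to (i) show each audit report $\chi_j$ is an unbiased estimate of the true mismatch, (ii) concentrate the batch average around that mismatch, and (iii) convert a small aggregated report into bounds on $\hat W_i^{-1}-W_i^{-1}$ and $\hat W_i-W_i$ using the shared eigenbasis.

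\textbf{Unbiasedness.} By the first-order condition behind \cref{eq:ne} (as in the federated-learning derivation in \cref{app:fl-motivation}), the equilibrium at $p_0$ satisfies
\[
s^*_i(p_0) = W_i^{-1}p_0 + \delta\sum_{j}G_{ij}s^*_j(p_0),
\qquad\text{so}\qquad
s^*_i(p_0)-W_i^{-1}p_0=\delta\,[G(I-\delta G)^{-1}\otimes I_m\, W^{-1}(\mathbf{1}\otimes p_0)]_i .
\]
I then rewrite the right-hand side as $\delta(1-\delta)^{-1}\mathbb{E}_{k\sim\operatorname{Geom}(\delta)}[(G^{k+1}\otimes I_m)W^{-1}(\mathbf{1}\otimes p_0)]_i$, which is exactly the control-variate representation of \cref{apx:cv}. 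Applying the walk identity \cref{eq:walk} to the plain network $G$ with walk length $k+1$ gives
\[
\mathbb{E}\bigl[\delta(1-\delta)^{-1}d_jW_j^{-1}p_0\bigr]=s_i^*(p_0)-W_i^{-1}p_0 .
\]
Here $W_j^{-1}p_0$ is reported truthfully because there is no collusion. Substituting into $\chi_j$ yields $\mathbb{E}\chi_j=\hat W_i^{-1}p_0-W_i^{-1}p_0$.

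\textbf{Concentration.} Work coordinatewise in the basis $U$. The $B$ walks are i.i.d. given $i$, so a scalar Hoeffding bound per coordinate, union-bounded over $m$ coordinates and two signs, gives the $\sqrt{\ln(2m/\iota)/B}$ factor. This requires an almost-sure (or sub-Gaussian) bound of order $\delta\bar d\|p_0\|/((1-\delta)^2\ubar\lambda)$ on the random part $\delta(1-\delta)^{-1}d_jW_j^{-1}p_0$.

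I expect this to be the main obstacle: $d_j$ can be as large as $\bar d^{\,k+1}$ and is not bounded almost surely. I would first bound its conditional mean tail, using $\mathbb{E}[d_j\mid k]\le \bar d^{\,k+1}$, $\mathbb{P}(k\ge \ell)=\delta^\ell$ and $\delta\bar d<1$ (from \cref{assump:net}), to obtain a sub-exponential tail with scale $\delta\bar d/(1-\delta)$. I would then use a Bernstein-type bound in place of Hoeffding, absorbing constants into the factor $2$ in $\epsilon_B$. If that fails, I would truncate walks at a length chosen so that the truncation bias is below $\epsilon_B/2$. The outcome should be that, with probability at least $1-\iota$,
\[
\Bigl\|\tfrac1B\textstyle\sum_j\chi_j-(\hat W_i^{-1}-W_i^{-1})p_0\Bigr\|\le\epsilon_B .
\]

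\textbf{Inverting the bound.} On this event, acceptance means $\|\tfrac1B\sum_j\chi_j\|\le\epsilon_B$, so the triangle inequality gives $\|(\hat W_i^{-1}-W_i^{-1})p_0\|\le 2\epsilon_B$. The report shares the bundle basis, so $\hat W_i^{-1}-W_i^{-1}=U\operatorname{diag}(\Delta_a)U^T$ and
\[
\|(\hat W_i^{-1}-W_i^{-1})p_0\|=\|\operatorname{diag}(\Delta_a)U^Tp_0\|\ge \ubar{\tilde p_0}\max_a|\Delta_a| .
\]
This gives $\|\hat W_i^{-1}-W_i^{-1}\|\le 2\ubar{\tilde p_0}^{-1}\epsilon_B$. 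Finally, the identity $\hat W_i-W_i=\hat W_i(W_i^{-1}-\hat W_i^{-1})W_i$ together with the known bounds $\hat W_i,W_i\preceq\bar\lambda I$ gives the extra factor $\bar\lambda^2$. As a side remark, the same event shows that a truthful report (zero mismatch) is accepted with probability at least $1-\iota$.
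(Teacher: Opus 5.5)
\textbf{The gap is in the concentration step.} You are right that $d_j$ is a product of $k+1$ degrees and can be as large as $\bar d^{\,k+1}$. But the ingredients $d_j\le\bar d^{\,k+1}$, $\mathbb{P}(k\ge\ell)=\delta^\ell$ and $\delta\bar d<1$ do not give a sub-exponential tail. They give only
\[
\mathbb{P}(d_j>t)\le\mathbb{P}\bigl((k+1)\ln\bar d>\ln t\bigr)\approx t^{-\ln(1/\delta)/\ln\bar d},
\]
a power law.

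This tail is attained on a $d$-regular graph, where $d_j=d^{\,k+1}$ exactly. There $\mathbb{E}e^{\lambda d_j}=\sum_k(1-\delta)\delta^k e^{\lambda d^{k+1}}=\infty$ for every $\lambda>0$, and $\mathbb{E}d_j^p<\infty$ only for $p<\ln(1/\delta)/\ln d$. So neither Hoeffding nor Bernstein applies. What $\delta\bar d<1$ buys you is finiteness of $\mathbb{E}[d_j]$ and geometric decay of the truncated mean $\mathbb{E}[d_j;\,k\ge\ell]$, not a bound on $\mathbb{P}(d_j>t)$ or on the moment generating function.

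The truncation fallback gives a valid inequality, but not the stated one:
\begin{itemize}
\item Keeping the bias below $\epsilon_B/2$ forces $L\approx\ln(1/\epsilon_B)/\ln(1/(\delta\bar d))$.
\item The Hoeffding range then becomes $\bar d^{\,L+1}$, which is polynomially large in $1/\epsilon_B$.
\item You also lose $\mathbb{P}(\max_j k_j>L)\le B\delta^{L+1}$, which is not logarithmic in $\iota$.
\end{itemize}
In fact, on a regular graph the plain batch average cannot concentrate at rate $\sqrt{\ln(1/\iota)/B}$ at all. Since the summands are nonnegative,
\[
\mathbb{P}\bigl(\tfrac1B\textstyle\sum_j d_j>\mathbb{E}d_j+\epsilon\bigr)\ge\mathbb{P}\bigl(d_1>B(\mathbb{E}d_j+\epsilon)\bigr),
\]
and the right-hand side decays only polynomially in $B\epsilon$.

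\textbf{Comparison with the paper.} The paper's proof has exactly your architecture. It gets past this step by asserting $\|d_jW_j^{-1}\|\le k_j\bar d/\ubar{\lambda}$ and using that the geometric $k_j$ has sub-exponential norm $O((1-\delta)^{-1})$; that is the source of the constant $\frac{2\delta\bar d\|p_0\|}{(1-\delta)^2\ubar{\lambda}}$. Your own observation shows this bound fails for a product of degrees: $d^{\,k+1}\gg kd$ on a $d$-regular graph. So there is no valid shortcut to borrow, and a correct statement needs a changed hypothesis or conclusion. Two options:
\begin{itemize}
\item Assume an almost-sure bound $d_j\le D$, or truncate the audit walks at a fixed length and add the bias to the threshold. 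This yields a threshold of the stated form with $\bar d$ replaced by $D$, up to constants.
\item Keep the plain average and settle for a Chebyshev-type threshold $\propto\sqrt{\mathbb{E}[d_j^2\mid i]/(\iota B)}$. Here $\mathbb{E}[d_j^2\mid i]\le(1-\delta)\bar d^{2}/(1-\delta\bar d^{2})$ when $\delta\bar d^{2}<1$, by the computation in \cref{eq:d2k} with walk length $k+1$.
\end{itemize}

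The remaining steps match the paper and are fine: unbiasedness via the Neumann/control-variate representation and the walk identity on $G$, the triangle inequality on the acceptance event, the eigenbasis inversion, and $\hat W_i-W_i=\hat W_i(W_i^{-1}-\hat W_i^{-1})W_i$ for the $\bar\lambda^2$ factor.

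One bookkeeping point: a coordinatewise union bound controls $\|U^T(\cdot)\|_\infty$, not the Euclidean norm. Either use that norm in both the test and the inversion, where $\max_a|\Delta_a[U^Tp_0]_a|\ge\ubar{\tilde{p}_{0}}\max_a|\Delta_a|$ still holds, or pay a factor $\sqrt m$ when passing to $\|\cdot\|_2$.
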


To completely hold Byzantine agents in check, i.e., guarantee a mismatch of only higher order, \poll needs a large batch size, which is inevitable as we adopt a sampling approach.
A simple deterministic approach is using neighbors as the auditing committee, rejecting a vote if $\hat{W}_{i}\left( s^{*}_{i}(p_0)-\delta\sum_{j\in \X_{i}}s^{*}_{j}(p_0) \right) \ne p_0$, which guarantees that any accepted report is exactly truthful.
However, this approach
\begin{enumerate*}[label=(\alph*)]
	\item is not robust to any noise,
	\item makes it easier for agents to collude as the audit committee has fixed members consisting of the queried agent's nearest neighbors, and
	\item requires implementing a separate module rather than being directly enabled by \poll.
\end{enumerate*}

If we accept a non-vanishing Byzantine-induced error, our result is consistent with the current literature on Byzantine-robust federated learning \citep{yin2018ByzantineRobustDistributed}, which shows the error is of $\sqrt{\frac{b\sigma^{2}}{nB}}$, where $b /n$ is the fraction of Byzantine agents, $B$ is the number of local steps, and $\sigma^{2}$ is the variance of the stochastic gradients.
This is recovered by our result recognizing that multiple random walks are equivalent to multiple local steps, which also reduces the variance of the stochastic update.

\begin{proof}[Proof of \cref{prop:byzantine}]

	By \cref{eq:ne},
	\[
		s_{i}^{*}(p) =& [((I_{n}-\delta G)^{-1} \otimes I_m) \operatorname{blkdiag}(W_{i}^{-1})]_{i}p \\
		=& \sum_{k=0}^{\infty}\delta ^{k} \sum_{j\in\X} (G^{k})_{ij}W_{j}^{-1}p \\
		=& W_{i}^{-1} p + \delta(1-\delta)^{-1} \mathbb{E}_{k \sim \operatorname{Geom}(\delta)} \left[  \sum_{j\in\X} (G^{k+1})_{ij}W_{j}^{-1}p \right] \\
		=& W_{i}^{-1} p + \delta(1-\delta)^{-1} \underset{\substack{\text{random walk } (i, \dots,j_{k})\\ \text{ of length } k \sim \operatorname{Geom}(\delta)}}{\mathbb{E}} \left[ \prod_{l=0}^{k-1}\deg_{j_{l}}  W_{j_{k}}^{-1}p \right]
		.\]
	The expectation above is w.r.t. the exact same sampling process in \poll, giving a natural audit committee consisting of the agents at the endpoints of the random walks in \poll.
	Separating private $W_{i}$ and public information gives
	\[
		W_{i}^{-1}p = s^{*}_{i}(p) - \delta(1-\delta)^{-1} \mathbb{E}[d_{j} W_{j}^{-1}p]
		,\]
	where we denote $j$ as a generic endpoint agent of a random walk and $d_{j}$ as the degree product of the walk.
	Checking against the submitted report $\hat{W}_{i}$, an audit report from an endpoint agent $j$ is thus
	\[
		\chi_{j} = \hat{W}_{i}^{-1}p_{0} - s^{*}_{i}(p_{0}) + \delta(1-\delta)^{-1} d_j W_{j}^{-1}p_{0}
		,\]
	giving the average report
	\[
		\chi_{B}\coloneqq\frac{1}{B} \sum_{j=1}^{B} \chi_{j} = \hat{W}_{i}^{-1}p_{0} - s^{*}_{i}(p_{0}) + \delta(1-\delta)^{-1} \frac{1}{B}\sum_{j=1}^{B} d_j W_{j}^{-1}p_{0}
		.\]
	That is, we replace the expectation with the batch average.
	Note that $\|d_{j}W_{j}^{-1}\| \le k_{j}\bar{d} /\ubar{\lambda}$ where $k_{j}\sim \operatorname{Geom}(\delta)$ has a sub-exponential norm of $O((1-\delta)^{-1})$.
	Thus, by Bernstein's inequality, for any $\epsilon_{B} \in (0, (1-\delta)^{-1}\bar{d} /\ubar{\lambda})$, if agent $i$ reports truthfully, we have
	\[
		& P\left( \delta(1-\delta)^{-1} \left\| \frac{1}{B}\sum_{j=1}^{B} d_j W_{j}^{-1}p_{0} - \mathbb{E}[d_j W_{j}^{-1}p_{0}] \right\| > \epsilon_{B} \right) \\
		\le& 2m\exp\bigl( -4^{-1} B\delta^{-2}(1-\delta)^{4} \ubar{\lambda}^{2}\|p_0\|^{-2} \bar{d}^{-2}  \epsilon_{B}^{2} \bigr)
		.\]

	Let
	\[
		\iota = 2m\exp\bigl( -4^{-1} B\delta^{-2} (1-\delta)^{4}\ubar{\lambda}^{2} \bar{d}^{-2}\|p_0\|^{-2}  \epsilon_{B}^{2} \bigr)
		\implies \epsilon_{B} = \left( \frac{2\delta\bar{d}\|p_0\|}{(1-\delta)^{2}\ubar{\lambda}} \sqrt{\frac{\ln (2m/\iota)}{B}} \right)
		.\]
	Then with probability at least $1-\iota$, we know the noise in the average audit report is at most $\epsilon_{B}$.
	Thus, the planner rejects $\hat{W}_{i}$ if $\|\chi_{B}\| > \epsilon_{B}$.
	Consequently, if a report $\hat{W}_{i}$ is accepted, with probability at least $1-\iota$, we have
	\[
		\|\hat{W}_{i}^{-1}p_0 - W_{i}^{-1}p_0\|
		=& \|\hat{W}_{i}^{-1}p_0 - s^{*}_{i}(p_0) + \delta(1-\delta)^{-1} \mathbb{E}[d_j W_{j}^{-1}p_{0}] \|\\
		\le& \left\|\hat{W}_{i}^{-1}p_0 - s^{*}_{i}(p_0) + \delta(1-\delta)^{-1} \frac{1}{B}\sum_{j=1}^{B}d_j W_{j}^{-1}p_{0} \right\| \\
		&+ \delta(1-\delta)^{-1}\left\|\frac{1}{B}\sum_{j=1}^{B}d_j W_{j}^{-1}p_{0} - \mathbb{E}[d_j W_{j}^{-1}p_{0}] \right\| \\
		\le& \|\chi_{B}\| + \epsilon_{B} \\
		\le& 2\epsilon_{B}
		.\]
	Under the shared activity kernel eigenbasis assumption, we have
	\[
		\|\hat{W}_{i}^{-1}-W_{i}^{-1}\| \le \frac{1}{\min_{a\in\A}|\tilde{p}_{0,a}|} \|\hat{W}_{i}^{-1}p_0 - W_{i}^{-1}p_0\| \le 2\ubar{\tilde{p}_{0}}^{-1} \epsilon_{B}
		,\]
	where $\tilde{p_0} = U^{T} p_0$ and $\ubar{\tilde{p}_{0}} \coloneqq \min_{a\in\A}|\tilde{p}_{0,a}|$ is assumed to be non-zero.
	And we have
	\[
		\|\hat{W}_{i}-W_{i}\|
		\le 2\ubar{\tilde{p}_{0}}^{-1} \bar{\lambda}^{2} \epsilon_{B}
		.\]
\end{proof}

\end{document}